\newtheorem{theorem}{Theorem}
\newtheorem{lemma}{Lemma}
\newtheorem{observation}{Observation}
\newcommand{\swap}{{{\mathrm{swap}}}}
\newcommand{\normphi}{{{\mathrm{norm}\hbox{-}\phi}}}
\DeclareMathOperator{\bp}{bp}
\crefname{observation}{Observation}{Observations}
\newcommand{\mytodo}[2]{\xspace}
\newcommand{\myrevtodo}[2]{{%
		\let\marginpamarginnote
		\reversemarginpar
		\renewcommand{\baselinestretch}{0.8}%
		}}
\newcommand{\myinlinetodo}[2]{\todo[size=\small, color=#1!50!white, inline,
	caption={}]{#2}\xspace}
\newcommand{\registerAuthor}[3]{%
	\expandafter\newcommand\csname #2com\endcsname[1]{\mytodo{#3}{\textsc{#2}:
			##1}}%
	\expandafter\newcommand\csname
	#2revcom\endcsname[1]{\myrevtodo{#3}{\textsc{#2}: ##1}}%
	\expandafter\newcommand\csname
	#2inline\endcsname[1]{\myinlinetodo{#3}{\textsc{#2}: ##1}}%
	\expandafter\newcommand\csname
	#2inlineLater\endcsname[1]{\lv{\myinlinetodo{#3}{\textsc{#2}: ##1}}}%
}
\newcommand{\ISM}{\textsc{ISM}\xspace}
\newcommand{\IASM}{\textsc{IASM}\xspace}
\newcommand{\IHR}{\textsc{IHR}\xspace}
\newcommand{\ISMT}{\textsc{ISM-T}}
\newcommand{\IASMT}{\textsc{IASM-T}}
\newcommand{\IHRT}{\textsc{IHR-T}\xspace}
\newcommand{\SM}{\textsc{SM}\xspace}
\newcommand{\SMT}{\textsc{SM-T}\xspace}
\newcommand{\ASM}{\textsc{ASM}\xspace}
\newcommand{\HR}{\textsc{HR}\xspace}
\newcommand{\HRT}{\textsc{HR-T}\xspace}
\DeclareMathOperator{\lb}{\text{lb}}
\DeclareMathOperator{\lm}{\text{lm}}
\DeclareMathOperator{\lt}{\text{lt}}
\DeclareMathOperator{\rb}{\text{rb}}
\DeclareMathOperator{\rrm}{\text{rm}}
\DeclareMathOperator{\rt}{\text{rt}}
\newcommand{\weight}{\operatorname{weight}}
\crefname{claim}{Claim}{Claims}
\tikzstyle{vertex}=[draw, circle, fill, inner sep = 2pt]
\tikzstyle{squared-vertex}=[draw, fill, inner sep = 2pt]
\newcommand{\Ireor}{\textsf{Replace}\xspace}
\newcommand{\Ireors}{\textsf{Reorder}\xspace}
\newcommand{\Ireori}{\textsf{Reorder (inverse)}\xspace}
\newcommand{\Iswap}{\textsf{Swap}\xspace}
\newcommand{\Iadd}{\textsf{Add}\xspace}
\newcommand{\Idelete}{\textsf{Delete}\xspace}
\newcommand{\reor}{\text{repl}}
\tikzstyle{vertex}=[draw, circle, fill, inner sep = 2pt]
\tikzstyle{bedge}=[line width=1.8pt]
\tikzstyle{squared-vertex}=[draw, fill, inner sep = 2pt]
\newcommand\tsup[2][2]{%
	\def\useanchorwidth{T}%
	\ifnum#1>1%
	\stackon[-.5pt]{\tsup[\numexpr#1-1\relax]{#2}}{\scriptscriptstyle\sim}%
	\else%
	\stackon[.5pt]{#2}{\scriptscriptstyle\sim}%
	\fi%
}
\DeclareMathOperator{\pend}{
	\succ 
	\overset{\raise0.3em\hbox{\text{\scriptsize{(rest)}}}}{\ldots}}
\providecommand*{\cupdot}{%
	\mathbin{%
		\mathpalette\@cupdot{}%
	}%
}
\newcommand*{\@cupdot}[2]{%
	\ooalign{%
		$\m@th#1\cup$\cr
		\hidewidth$\m@th#1\cdot$\hidewidth
	}%
}
\newcommand{\decprob}[3]{%
  \begin{center}%
    \begin{minipage}{0.9\linewidth}%
      \textsc{#1}\\
      \textbf{Input:} #2\\
      \textbf{Question:} #3
    \end{minipage}%
  \end{center}%
}
\DeclareMathOperator{\Ac}{Ac}
\title{Theory of and Experiments on Minimally 
Invasive Stability Preservation in Changing Two-Sided Matching Markets}
\author {
	Niclas Boehmer,
	Klaus Heeger,
	Rolf Niedermeier
}
\affil{ 
	Technische Universit{\"a}t Berlin,  Faculty IV, Algorithmics and Computational Complexity, Berlin, Germany\\
	\{niclas.boehmer, heeger, rolf.niedermeier\}@tu-berlin.de
}
\begin{document}
\maketitle
\begin{abstract}
Following up on purely theoretical work of Bredereck et 
al.~[AAAI~2020], we 
contribute further 
theoretical insights into adapting stable two-sided matchings to 
change. Moreover, we perform extensive empirical studies hinting 
at numerous practically useful properties. 
	Our theoretical extensions include the study of new problems 
(that is, incremental variants of \textsc{Almost Stable Marriage} and 
\textsc{Hospital Residents}), focusing on their (parameterized) 
computational complexity and the equivalence 
of various change types (thus simplifying 
algorithmic and complexity-theoretic studies for various natural 
change scenarios).
Our experimental findings reveal, for instance, that allowing 
the new matching to be blocked by a
few pairs significantly decreases the necessary differences between the old 
and the new stable matching.
\end{abstract}

\section{Introduction}
In our dynamic world, change is omnipresent in 
society and business.\footnote{Motivated by this, 
\citet{DBLP:conf/atal/BoehmerN21} recently challenged the computational social 
choice 
community to adapt classical models to also account for 
dynamic aspects.} Typically, there is no 
permanent stability.
We address this issue in the context of stable matchings in two-sided 
matching
markets 
and their adaptivity to change.
Consider as an example the dynamic nature of centrally assigning students to public
schools. Here, 
students are matched to schools, trying to accommodate the students' 
preferences over the schools 
as well as possible. However, due to students reallocating or deciding to visit 
a private school, according to \citet{Feigenbaum17}, in New 
York typically around 10\% of the students 
drop out after a first round of assignments, triggering some readjustments in the 
school-student matchings in a further round.

Matching students to schools can be modeled as an instance of the \textsc{Hospital 
Residents} problem,  
where we are given a set of residents and hospitals each with 
preferences over the agents from the other set. One wants to find a ``stable''  
assignment of each 
resident to at most one hospital such that a given capacity for each hospital 
is respected. To model the task of adjusting a matching to change, 
\citet{DBLP:conf/aaai/BredereckCKLN20} 
introduced the problem, given a stable matching with respect to some 
initial preference profile, to find a new matching which is stable with 
respect to 
an updated preference profile (where some agents performed swaps in their 
preferences) and which is as similar as possible to the given 
matching. They referred to this as the ``incremental'' scenario and studied the computational complexity of this question for \textsc{Stable 
Marriage} and \textsc{Stable Roommates} (both being one-to-one 
matching problems).

In this work, we address multiple so-far unstudied aspects of our 
introductory 
school choice example.  
First, we theoretically and experimentally relate different types of changes 
to each other, including swapping two agents in some preference list (as studied 
by \citet{DBLP:conf/aaai/BredereckCKLN20}) and deleting an agent (as in our 
introductory example). Second, 
we initiate the study of the incremental variant of many-to-one stable 
matchings (\textsc{Hospital 
Residents}). Third, as perfect stability might not always be essential, for instance,
in large markets, we introduce
the incremental variant of \textsc{Almost Stable Marriage} (where the new 
matching is allowed to be blocked by few agent pairs) and study its 
computational 
complexity and practical impact. Fourth, we experimentally 
analyze how many adjustments are typically needed when a certain amount of change occurs; moreover, we give some recommendations to market makers for adapting stable 
matchings to change. 

\subsection{Related Work}
We are closest to the purely theoretical work of 
\citet{DBLP:conf/aaai/BredereckCKLN20}, 
using their
formulation of incremental stable matching problems (we refer to their related 
work section for an
extensive discussion of related and motivating literature before~2020). 
Among others, they 
proved 
that \textsc{Incremental Stable Marriage} is polynomial-time solvable but 
is NP-hard (and W[1]-hard
parameterized by the allowed change between the two matchings) if the preferences may contain ties. We complement and enhance 
some of Bredereck et al.'s  
findings, focusing on two-sided markets
and contributing extensive experiments.

Besides the work of \citet{DBLP:conf/aaai/BredereckCKLN20},
there are several other works dealing 
with adapting a (stable) matching to a changing agent set or changing 
preferences 
\citep{DBLP:conf/icalp/BhattacharyaHHK15,DBLP:conf/approx/KanadeLM16,DBLP:journals/corr/GhosalKP17,DBLP:journals/jco/NimbhorkarV19,Feigenbaum17,DBLP:conf/fsttcs/GajulapalliLMV20}.
Closest to our work,
\citet{DBLP:conf/fsttcs/GajulapalliLMV20} designed polynomial-time algorithms for 
two variants of 
an incremental version of \textsc{Hospital Residents} 
where the given matching is always resident-optimal (unlike in our setting where the given matching can be an arbitrary stable matching) and in the updated instance either new residents are added or
the quotas of some hospitals are modified.

Sharing a common motivation with our work, there is a rich body of studies
concerning dynamic 
matching markets mostly driven by economists
\citep{DBLP:journals/geb/DamianoL05,ALO20,baccara2020optimal,DBLP:journals/corr/abs-2007-03794}.
In the context of matching under preferences, a frequently studied exemplary (online)
problem is 
that agents arrive over time and want to be matched as soon as possible in 
an---also in the long run---stable way (reassignments are not allowed) 
\citep{DBLP:journals/corr/abs-2007-03794,DBLP:journals/corr/abs-1906.11391}.

Lastly, instead of trying to adapt an already implemented matching to change, it is also possible to try to construct the initial stable matching to be as robust as possible, i.e., to pick a stable matching that remains stable even if the instance is slightly changed or that can be easily adapted to a stable matching after some changes have been performed \citep{DBLP:conf/ijcai/Genc0OS17,DBLP:conf/aaai/Genc0OS17,DBLP:journals/tcs/GencSSO19,DBLP:conf/esa/MaiV18,DBLP:conf/sagt/BoehmerBHN20,DBLP:journals/teco/ChenSS21}.  

\subsection{Our Contributions}
On the theoretical side,
while \citet{DBLP:conf/aaai/BredereckCKLN20} focused on swapping adjacent agents in preference lists,
we consider three further natural types of changes: the deletion and addition 
of agents and 
the complete replacement of an agent's preference list. These different change 
types model different kinds of real-world scenarios; 
however, as one of our main theoretical results, we prove in \Cref{se:changes} 
that 
all four change types are equivalent from a theoretical perspective, thus 
allowing us 
to transfer both algorithmic and computational hardness results from one type 
to an other.

Motivated by the polynomial-time algorithm
of \citet{DBLP:conf/aaai/BredereckCKLN20}
for \textsc{Incremental Stable Marriage} (\textsc{ISM}), in \Cref{se:ASM} we study the 
related problem \textsc{Incremental Almost Stable Marriage} (\textsc{IASM}) (where the new 
matching may admit few blocking pairs). 
We show that \textsc{Incremental Almost Stable Marriage} is NP-hard and 
establish parameterized tractability and intractability results. 
Moreover, motivated by the observation that, in practice, also many-to-one 
matching markets may change, we consider  
\textsc{Incremental Hospital Residents} (\textsc{IHR}) in \Cref{se:AT}. We show that the problem is 
polynomial-time solvable. However, if preferences may contain ties, then it
becomes NP-hard and 
W[1]-hard when parameterized by the number of hospitals; still, we can identify several (fixed-parameter) tractable cases. See \Cref{t:ov} for an overview of our results.

\begin{table}[t]
	\begin{center}
		\begin{tabular}{ c|c|c } 
			
			& without ties & with ties \\ \hline
			ISM & P$^\dagger$ & W[1]-h. wrt. $k$  even if $|\mathcal{P}_1\oplus \mathcal{P}_2|=1^\dagger$ \\ 
			&  & FPT wrt. $t_U+t_W$ (Pr. \ref{pr:FPTISMT})\\ \hline 
			IASM & W[1]-h. wrt. $k+b+|\mathcal{P}_1\oplus \mathcal{P}_2|$ (Th. \ref{thm:IASM}) & W[1]-h. wrt. $k$  for $b=0$ and $|\mathcal{P}_1\oplus \mathcal{P}_2|=1^\dagger$ \\ 
			& XP wrt. $k$ or $b$ or $|\mathcal{P}_1\oplus \mathcal{P}_2|$ (Pr. \ref{pr:XPIASM}) & XP wrt. $k$ (Pr. \ref{pr:XPIASM}) \\ \hline
			IHR & P (Pr. \ref{ob:IHRpoly}) & FPT wrt. $n$ (Pr. \ref{pr:IHRn}) \\
			&  & W[1]-hard wrt. $m$ even if $|\mathcal{P}_1\oplus \mathcal{P}_2|=1$ (Th. \ref{th:IHRWm})\\
			&  & XP wrt. $m$ (Pr. \ref{pr:XPHR})
		\end{tabular}
	\end{center} \caption{Overview of our results. For definitions of our parameters, see \Cref{se:prelims}.  All W[1]-hardness results imply NP-hardness. Results marked with $\dagger$ were proven by \citet{DBLP:conf/aaai/BredereckCKLN20}.\label{t:ov}
	}
\end{table}

On the experimental side (\Cref{se:Experiments}), 
we perform an extensive study, among others taking into account the four 
different change types discussed above. 
For instance, we investigate the relation between the number of changes and 
the symmetric difference between the old and new stable matching. We observe that 
often already very few random changes require a major restructuring of the 
matching. One way to circumvent this problem is to allow that 
the new matching might be blocked by a few agent pairs. Moreover, reflecting 
its popularity, we compute the input matching 
using the Gale-Shapley algorithm~\citep{GaleShapley1962} and observe that, in this case, computing the output matching 
also with the Gale-Shapley algorithm 
produces a close to optimal solution.  

\section{Preliminaries} \label{se:prelims}
 An instance of the \textsc{Stable Marriage with Ties} (\SMT) problem 
 consists of 
  two sets~$U$ and~$W$ of agents 
 and a preference profile $\mathcal{P}$ 
 containing a preference relation for each agent.
Following conventions, we refer to the agents from~$U$ as \emph{men} and to the 
agents from~$W$ as \emph{women}.
 We denote the set of all agents by~$A:= U \cup W$.
 Each man $m \in U$ \emph{accepts} a 
 subset $\Ac(m)\subseteq W$ of women, and each woman~$w$ accepts a subset $\Ac 
 (w) \subseteq U$ of men.
The preference relation 
 $\succsim_a$ of agent~$a\in A$ is a weak order of the agents~$\Ac(a)$ that agent~$a$
 accepts. 
 For two agents~$a',a''\in \Ac(a)$, agent~$a$~\emph{weakly prefers} $a'$ 
to 
 $a''$ if $a'\succsim_a a''$. If $a$ both weakly prefers~$a'$ to~$a''$ and 
 $a''$ to $a'$, then $a$ is 
 is \emph{indifferent} between~$a'$ and~$a''$ and we write $a'\sim_a a''$. If $a$ 
 weakly prefers~$a'$ to $a''$ but does not weakly prefer $a''$ to~$a'$, then $a$ 
 \emph{strictly prefers} $a'$ to $a''$ and we write~$a' \succ_a a''$. If the preference relation of an 
 agent~$a$ is a strict order, that is, there are no two agents such that 
 $a$ is indifferent between the two, then we say that $a$ has \emph{strict preferences} and 
 denote $a$'s preference relation as $\succ_a$. In this case, we use the 
 terms ``strictly prefer'' and ``prefer'' interchangeably. \textsc{Stable 
 Marriage} (\SM) is the special case of \SMT where all agents have strict 
 preferences. For two preference 
 relations~$\succsim$ and $\succsim'$, the \emph{swap distance} between 
$\succsim$ and 
 $\succsim'$ is the number of agent pairs that are ordered differently by the two 
 relations, i.e., $|\{\{a,b\}: a\succ b \wedge b\succsim' 
 a\}|+|\{\{a,b\}: a\sim b \wedge \neg a\sim'b\}|$;
 if both relations are defined on 
 different sets, then we define the swap distance to be infinity. For two strict preference relations $\succ$ and $\succ'$, the 
 swap distance yields the minimum number of swaps of adjacent 
 agents needed to transform $\succ$ into $\succ'$.
 For two preference profiles~$\mathcal{P}_1$ and 
 $\mathcal{P}_2$ on the same set of agents, $|\mathcal{P}_1\oplus 
 \mathcal{P}_2|$ denotes the summed swap distance between the two preference 
 relations of each agent.
  
 A \emph{matching} $M$ is a set of  pairs $\{m, w\}$ with  $m \in \Ac (w)$ and 
$w\in \Ac (m)$ where each 
 agent appears in at most one pair.
 For two 
matchings $M$ and $M'$, 
  the \emph{symmetric difference} is $M \triangle M'= (M \setminus M') \cup (M' \setminus M)$.
 In a matching~$M$, an agent~$a$ 
 is \emph{matched} if $a$ appears in one pair, i.e., $\{a,a'\}\in M$ for some 
$a'\in A\setminus \{a\}$; otherwise, $a$ is 
 \emph{unmatched}. 
  A matching is \emph{perfect} if each agent is matched. 
  For a matching~$M$ and a matched agent~$a\in A$, we denote by~$M(a)$ the 
  partner of $a$ in~$M$, i.e., $M(a)=a'$ if $\{a,a'\}\in M$. For an 
  unmatched agent $a\in A$, we set $M(a):=\emptyset$. All agents~$a\in A$ 
strictly prefer any agent from $\Ac(a)$ to being unmatched (thus, we have $a' \succ_a \emptyset$ for~$a' \in \Ac (a)$). 
  
  A pair $\{u,w\}$ with $u\in U$ and $w\in W$ \emph{blocks} a 
  matching~$M$ if $m$ and $w$ accept each other and strictly prefer each other 
  to their 
  partners in $M$, i.e., $m\in \Ac(w)$, $w\in \Ac(m)$, $m\succ_w M(w)$, and 
  $w\succ_{m} M(m)$. A matching $M$ is \emph{stable} if it is not 
  blocked by any pair.    \SM and \SMT ask whether there is a stable 
matching of the 
  agents $A$ with respect to preference profile $\mathcal{P}$. For a matching $M$, we denote as $\bp(M, \mathcal{P})$ the set of pairs that block $M$ in preference profile $\mathcal{P}$.
	
We also consider a generalization of \SM called \textsc{Almost Stable Marriage} 
(\ASM), where as an additional part of the input we are given an integer~$b$ 
and the question is whether there is a matching 
admitting at most $b$ blocking pairs. Furthermore, we study 
the \textsc{Hospital Residents}~(\HR) 
problem, 
a generalization of \SM where we are given a set~$R$ of 
residents and a set~$H$ of hospitals and agents from both sets have 
preferences over a set of acceptable agents from the other set and each 
hospital~$h\in H$ has 
an upper quota~$u(h)$. A matching then consists of resident-hospital pairs $\{r,h\}$ with $r\in \Ac(h)$ and $h\in \Ac(r)$, where each resident can appear in at most one 
pair, while each hospital $h$ can appear in at most~$u(h)$~pairs. In this 
context, we slightly adapt the definition of a blocking pair and say that a 
resident-hospital pair~$\{r,h\}$ blocks a matching $M$ if both $r$ and~$h$ 
accept each other, $r$ prefers $h$ to~$M(r)$, and $h$ is matched to less 
than $u(h)$~residents in~$M$ or prefers~$r$ to one of the residents 
matched to it.

Our work focuses on ``incrementalized versions'' 
of the discussed two-sided 
stable 
matching problems.
For \SM/(\SMT), this reads as follows:
  \decprob{\textsc{Incremental Stable Marriage [with Ties]} (\ISM/[\ISMT])}{A 
  set $A = 
  U \cupdot W$ of agents, two preference profiles $\mathcal{P}_1$ 
	and $\mathcal{P}_2$ containing the strict [weak] preferences of all agents, 
	a stable 
	matching~$M_1$ in $\mathcal{P}_1$, and 
	an integer~$k$.}{Is there a matching~$M_2$ that 
	is stable in $\mathcal{P}_2$ such that at most~$k$ edges 
	appear 
	in only one of $M_1$ 
	and 
	$M_2$, 
	i.e., $|M_1 \triangle 
	M_2| \le k$?}
\IHR and \IHRT are defined analogously.
\IASM [\IASMT] is defined as \ISM\ [\ISMT] with the difference that we are given an additional 
integer $b$ as part of the input and the question is whether there is a 
matching~$M_2$ that admits at most $b$ blocking pairs in~$\mathcal{P}_2$ such 
that  $|M_1 \triangle 
M_2| \le k$. 

 \section{Equivalence of Different Types of Changes} \label{se:changes}
\citet{DBLP:conf/aaai/BredereckCKLN20} focused on 
the case where the preference profile~$\mathcal{P}_2$ arises from $\mathcal{P}_1$ 
by performing
some swaps of adjacent agents in the preferences of some agents (we refer to this as 
\Iswap). However,  there are many more types of changes:
Allowing for more radical changes, 
denoted by \Ireor, we count the number of agents whose preferences changed 
(here in 
contrast to \Iswap, 
we also allow that the set of acceptable partners may change).
Next, recall that in our introductory example from school choice children 
leave the matching market, which corresponds to agents 
getting deleted.
We denote this type of change by \Idelete---formally,
we model the deletion of an agent by setting its set of acceptable partners 
in~$\mathcal{P}_2$ to~$\emptyset$.
Moreover, children leaving one market 
might enter a new one, which corresponds to agents 
getting added (\Iadd).
Formally, we model the addition of an agent~$a$ by already including it in~$\mathcal{P}_1$, but with~$\Ac (a) = \emptyset$ in $\mathcal{P}_1$.
The goal of this section is to show that these four natural possibilities of how 
$\mathcal{P}_2$ may arise from~$\mathcal{P}_1$ actually result in equivalent 
computational problems.
More formally, we say that a type of 
change~$\mathcal{X}\in \{\Idelete, \Iadd, \Iswap, \Ireor\}$ \emph{linearly reduces} to a 
type of change~$\mathcal{Y}\in \{\Idelete, \Iadd, \Iswap, \Ireor\}$ if any 
instance~$\mathcal{I}= (A, 
\mathcal{P}_1, \mathcal{P}_2, M_1, k)$ of {\textsc{ISM(-T)}} 
where~$\mathcal{P}_1$ and~$\mathcal{P}_2$ differ by~$x$ changes of 
type~$\mathcal{X}$ can be transformed in linear time to an equivalent 
instance~$\mathcal{I} '= (A', 
\mathcal{P}_1', \mathcal{P}_2', M_1', k')$ of {\textsc{ISM(-T)}} 
with~$\mathcal{P}_1'$ and $\mathcal{P}_2'$ differing by $\mathcal{O} (x) $ 
changes of type~$\mathcal{Y}$.
We call two change types~$\mathcal{X}$ and~$\mathcal{Y}$ \emph{linearly equivalent} if both $\mathcal{X}$ linearly reduces to $\mathcal{Y}$ and $\mathcal{Y}$ linearly reduces to~$\mathcal{X}$.
\begin{theorem}
\label{thm:equivalent}
  \Iswap, \Ireor, \Idelete, and \Iadd\ are linearly equivalent for 
 \ISM and~\ISMT.
\end{theorem}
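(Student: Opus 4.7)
I plan to prove the theorem by chaining linear reductions in a cycle, for instance
\Iswap $\to$ \Ireor $\to$ \Iadd $\to$ \Idelete $\to$ \Iswap,
which, together with transitivity of linear reductions, immediately yields pairwise linear equivalence of all four change types. The reductions \Iswap $\to$ \Ireor, \Iadd $\to$ \Ireor, and \Idelete $\to$ \Ireor are immediate and require no modification of the instance: every single swap, deletion, or addition alters the preference list of exactly one agent, so $x$ changes of these types yield at most $x$ agents whose preference lists differ between $\mathcal{P}_1$ and $\mathcal{P}_2$.

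The work lies in the reverse direction, where an arbitrary reordering of one agent's list must be simulated by $O(1)$ changes of a more restricted type. I plan to use, for each reordered agent $a$ with preferences $L_1$ in $\mathcal{P}_1$ and $L_2$ in $\mathcal{P}_2$, a local gadget on a constant number of fresh agents that splits $a$ into two ``shadow'' copies $a^1$ and $a^2$, together with a mediator dummy on the opposite side. Copy $a^1$ carries the preference list $L_1$ and is placed into $M_1'$ exactly where $a$ sat in $M_1$, while copy $a^2$ carries $L_2$ and is parked with the mediator in $M_1'$. Partner-side agents are defined once and for all to accept $a^1$ and $a^2$ at $a$'s original position in their lists, so their preferences do not change between $\mathcal{P}_1'$ and $\mathcal{P}_2'$. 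The gadget is designed so that the ``switch'' from $a^1$-active to $a^2$-active is triggered by a single change of the target type: for \Ireor $\to$ \Iswap, one swap in the mediator's preference list reverses which copy it prefers; for \Ireor $\to$ \Iadd (resp.\ \Ireor $\to$ \Idelete), the copy $a^2$ (resp.\ $a^1$) is inactive in one profile (with $\operatorname{Ac}=\emptyset$) and takes over (resp.\ vacates) the role of $a$ in the other via the single allowed change. One sets $k' := k + c\cdot x$ for a constant $c$ absorbing the $O(1)$ symmetric-difference overhead per reordered agent caused by the parking edges with the mediator.

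The main obstacle is verifying the gadget's correctness, namely: (i) the translated $M_1'$ is stable in $\mathcal{P}_1'$, which requires a careful placement of the parked copy and the mediator at mutually low positions in the relevant lists so no blocking pair is introduced by the fresh edges; and (ii) every stable $M_2'$ in $\mathcal{P}_2'$ is forced to match the ``right'' copy, which I will ensure by making the parked copy's pairing with the mediator strictly blocked after the target-type change, so that the intended copy must take over the original role of $a$. Once these two local properties are established by a routine case analysis on the $O(1)$ new edges per gadget, the bijection between YES-certificates is immediate: a stable $M_2$ with $|M_1 \triangle M_2| \le k$ lifts to a stable $M_2'$ with $|M_1' \triangle M_2'| \le k'$ by replacing $a$ by $a^2$ and parking $a^1$ with the mediator, and vice versa. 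Because each reordered agent adds only constantly many agents and edges, the transformation runs in linear time, and the construction uses only strict preferences on the fresh agents, so it handles \ISM\ and \ISMT\ uniformly.
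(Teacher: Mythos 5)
Your high-level architecture is sound and essentially the paper's: the reductions into \Ireor{} are trivial, and the paper likewise closes the cycle \Ireor{} $\to$ \Iadd{} $\to$ \Idelete{} $\to$ \Iswap{} $\to$ \Ireor{} using exactly the kind of clone-plus-binding-agent gadget you describe. The genuine gap is in your symmetric-difference accounting, and it is precisely the point the paper spends most of its proof on. Suppose the reordered agent $a$ is matched to $b$ in $M_1$ and \emph{keeps} $b$ in $M_2$. Then $\{a,b\}$ contributes $0$ to $|M_1\triangle M_2|$, but in your construction $M_1'$ contains $\{a^1,b\}$ while $M_2'$ must contain $\{a^2,b\}$ (the active copy carrying $L_2$ has to play $a$'s role for stability in $\mathcal{P}_2'$ to be meaningful), so the renamed edge contributes $2$ to $|M_1'\triangle M_2'|$. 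If instead $a$ changes partners, the corresponding edges contribute $2$ to both quantities. Hence the per-gadget overhead is not a constant $c$: writing $p$ for the number of reordered agents that keep their $M_1$-partner, you get roughly $|M_1'\triangle M_2'| = |M_1\triangle M_2| + c\cdot x + 2p$, and $p$ depends on the solution $M_2$. No choice of $k'=k+c\cdot x$ then makes the two instances equivalent; worse, the distortion penalizes exactly the solutions that preserve many edges at reordered agents, so the backward direction of your reduction fails.

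The paper's fix is the ingredient your sketch is missing: it clones not only the reordered agents but also their $M_1$-partners (the set $A_{\reor}^*$), and replaces each would-be edge $\{c_m,c_w\}$ between two clones by an eight-agent \emph{edge gadget} (a concatenation of two parallel-edges gadgets of Cechl\'arov\'a and Fleiner) engineered so that it contributes exactly seven edges to $M_1'\triangle M_2'$ in every stable configuration, whether or not $\{m,w\}$ survives into $M_2$; the budget then becomes $k'=k+|A_{\reor}^*|+7k^*$, where $k^*$ counts the affected $M_1$-edges. Until you add a device of this kind --- or otherwise equalize the kept-edge and changed-edge cases --- the reduction from \Ireor{} to the other types does not go through, and since this is the only non-trivial leg of your cycle, the theorem is not yet proved.
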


We show the equivalence of all considered different types of changes using a circular 
reasoning.
First, we observe that \Iswap\ is a special case of \Ireor\ since every swap 
can be performed by a \Ireor\ operation.

\begin{observation} \label{ob:1}
	\Iswap\ can be linearly reduced to \Ireor.
\end{observation}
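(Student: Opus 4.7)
The plan is to use the identity transformation: set $\mathcal{I}' := \mathcal{I}$, so that $A' = A$, $\mathcal{P}_1' = \mathcal{P}_1$, $\mathcal{P}_2' = \mathcal{P}_2$, $M_1' = M_1$, and $k' = k$. Equivalence of the two instances is then immediate because the two problems \textsc{ISM(-T)} have the same input structure and the change type only affects how we \emph{measure} the difference between $\mathcal{P}_1$ and $\mathcal{P}_2$; the solution space (the set of matchings $M_2$ stable in~$\mathcal{P}_2$ with $|M_1 \triangle M_2| \le k$) is identical.

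The only thing left to verify is that if $\mathcal{P}_1$ and $\mathcal{P}_2$ differ by $x$ swaps, then they also differ by $\mathcal{O}(x)$ reorderings. A single swap operation changes the preference list of exactly one agent, so after $x$ swaps at most $x$ distinct agents have a preference list in~$\mathcal{P}_2$ that differs from their list in~$\mathcal{P}_1$. Each such agent contributes exactly one \Ireor\ operation (namely, replacing their old list by their new list, which still has the same acceptability set), giving at most $x$ reorder changes. The transformation runs in linear time since we do not modify anything. There is no real obstacle here: the observation is essentially the syntactic remark that \Iswap\ is a restricted form of \Ireor.
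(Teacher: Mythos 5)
Your proposal is correct and matches the paper's argument: the paper likewise treats this as the observation that every swap is a special case of a \Ireor\ operation, so the identity transformation suffices, with $x$ swaps touching at most $x$ agents' lists and hence yielding at most $x$ reorder changes.
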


Next, we show how \Idelete\ can be linearly reduced to \Iswap.

\begin{lemma} \label{le:1}
	\Idelete\ can be linearly reduced to \Iswap.
\end{lemma}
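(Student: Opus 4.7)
The plan is to simulate each deletion by a single adjacent swap inside a small gadget, so that $|D|$ deletions translate to exactly $|D|$ swaps.

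Let $D \subseteq A$ be the set of agents deleted between $\mathcal{P}_1$ and $\mathcal{P}_2$. For each $a \in D$, assume without loss of generality that $a \in U$ (the case $a \in W$ is symmetric) and introduce a fresh woman $w_a$ and a fresh man $m_a$. Define $\mathcal{P}_1'$ as follows: for original agents other than $a$, keep $\mathcal{P}_1$ unchanged; for $a$, set the list to $w_a \succ \bigl(\text{original list of } a\bigr)$; set $w_a$'s list to $m_a \succ a$; and let $m_a$ accept only $w_a$. Put $M_1' = M_1 \cup \{\{m_a, w_a\} : a \in D\}$. Obtain $\mathcal{P}_2'$ from $\mathcal{P}_1'$ by performing, for each $a \in D$, one adjacent swap in $w_a$'s list so that her preference becomes $a \succ m_a$, and set $k' = k + 2|D|$. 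This transformation runs in linear time and uses exactly $|D|$ swaps.

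The correctness rests on two short claims. First, $M_1'$ is stable in $\mathcal{P}_1'$: inside each gadget the only candidate blocking pair is $\{a, w_a\}$, but $w_a$ strictly prefers her partner $m_a$ to $a$ in $\mathcal{P}_1'$; interactions restricted to original agents coincide with $\mathcal{P}_1$, and placing $w_a$ at the top of $a$'s list creates no new blocking pair because $w_a$ remains out of reach for $a$. Second, in every stable matching $M_2'$ of $\mathcal{P}_2'$ each $a \in D$ is matched with $w_a$: if $w_a$ were unmatched or paired with $m_a$, then $\{a,w_a\}$ would block, since in $\mathcal{P}_2'$ the agent $a$ strictly prefers $w_a$ to anyone else and $w_a$ strictly prefers $a$ to $m_a$.

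Using these two facts, equivalence follows by direct translation: any stable $M_2$ in $\mathcal{P}_2$ lifts to the stable matching $M_2 \cup \{\{a,w_a\} : a \in D\}$ in $\mathcal{P}_2'$, and every stable $M_2'$ in $\mathcal{P}_2'$ restricts to a stable matching in $\mathcal{P}_2$ (using that $\Ac(a) = \emptyset$ in $\mathcal{P}_2$ for $a \in D$ to rule out blocking pairs touching deleted agents). The symmetric-difference bookkeeping contributes exactly two extra edges per deleted agent, namely $\{m_a, w_a\} \in M_1' \setminus M_2'$ and $\{a, w_a\} \in M_2' \setminus M_1'$, so $|M_1' \triangle M_2'| = |M_1 \triangle M_2| + 2|D|$ and the threshold $k' = k + 2|D|$ is the correct choice.

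I expect no real obstacle; the only delicate point is checking that prepending $w_a$ to $a$'s preference list in $\mathcal{P}_1'$ does not destabilise $M_1'$, which works because of $w_a$'s private preference for the dummy $m_a$. The argument is insensitive to whether preferences contain ties, since the single swap per deleted agent only flips two strictly ranked new agents inside a length-two list, so the same reduction covers both \ISM and \ISMT.
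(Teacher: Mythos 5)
Your proposal is correct and matches the paper's own construction essentially exactly: the paper likewise attaches, to each deleted agent, a two-agent gadget (your $w_a, m_a$ are its $a'', a'''$) in which the deleted agent ranks the new ``catch'' agent first and a single adjacent swap in the catch agent's two-element list forces the deleted agent to be absorbed by the gadget in $\mathcal{P}_2'$, with the same threshold $k' = k + 2|D|$. The only cosmetic difference is that the paper formally re-creates every agent with its $\mathcal{P}_1$ preferences in both profiles, which you achieve implicitly by deriving $\mathcal{P}_2'$ from $\mathcal{P}_1'$ via the swaps alone.
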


\begin{proof}
	Let $\mathcal{I} = (A, \mathcal{P}_1, \mathcal{P}_2, M_1, k)$ be an 
	instance of {\textsc{ISM(-T)}} for \Idelete.
	Let $A_{\operatorname{delete}}$ be the set of agents with empty preferences 
	in $\mathcal{P}_2$ and non-empty preferences in $\mathcal P_1$ (i.e., the 
set of ``deleted'' agents).
	We create an instance~$\mathcal{I}' = (A', \mathcal{P}_1', \mathcal{P}_2', 
	M_1', k')$ for \Iswap as follows. To create $A'$, for each agent $a\in A$, 
we add an agent $a'$ and set $a'$'s preferences in both $\mathcal{P}'_1$ and 
$\mathcal{P}'_2$ to $a$'s preferences in $\mathcal{P}_1$. For each~$a\in 
A_{\operatorname{delete}}$, we further add two agents~$a''$ and $a'''$ 
	to $A'$.
	In $\mathcal{P}_1'$, agent~$a''$ prefers agent~$a'''$ to $a'$, while $a'''$ 
	only considers $a''$ acceptable.
	Moreover, we modify the preferences of agent~$a'$ such that it prefers $a''$ to all 
other agents in $\mathcal{P}_1'$ and~$\mathcal{P}_2'$.
	In $\mathcal{P}'_2$, agent~$a''$ performs a swap in its preferences and 
	now prefers~$a'$ to $a'''$.
	We set $M_1' := \{ \{a',b'\} : \{a,b\}\in M_1 \} \cup \{\{a'', a'''\} : a\in A_{\operatorname{delete}}\}$ 
	and $k' := k + 2|A_{\operatorname{delete}}|$.
	
	The correctness easily follows from the observation that every stable 
	matching 
	for $\mathcal{P}'_1$ contains edge~$\{a'', a'''\}$ for every~$a \in 
	A_{\operatorname{delete}}$, while every stable matching 
	for~$\mathcal{P}'_2$ 
	contains edge~$\{a', a''\}$ for every $a\in A_{\operatorname{delete}}$, and 
	$a'''$ 
	is unmatched.
\end{proof}

We continue by observing that \Iadd\ can be linearly reduced to \Idelete.

\begin{lemma} \label{le:2}
	\Iadd\ can be linearly reduced to \Idelete.
\end{lemma}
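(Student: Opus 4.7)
The plan is to mirror the dummy-gadget idea of \Cref{le:1}: I will simulate the absence of each added agent in $\mathcal{P}_1$ by having it committed to a throwaway partner that is then deleted. Let $\mathcal{I} = (A, \mathcal{P}_1, \mathcal{P}_2, M_1, k)$ be an \Iadd instance and let $A_{\operatorname{add}}$ denote the set of agents with $\Ac(a) = \emptyset$ in $\mathcal{P}_1$ but $\Ac(a) \neq \emptyset$ in $\mathcal{P}_2$. For each such $a$, introduce a fresh agent $a^*$ of the opposite sex. In the constructed profile $\mathcal{P}_1'$, agent $a^*$ accepts only $a$, agent $a$ has preferences $a^* \succ (\text{its } \mathcal{P}_2\text{-preferences})$, and every $b \in A \setminus A_{\operatorname{add}}$ uses its $\mathcal{P}_2$-preferences. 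The profile $\mathcal{P}_2'$ is identical to $\mathcal{P}_1'$ except that $\Ac(a^*) = \emptyset$ for every dummy $a^*$. Finally, set $M_1' := M_1 \cup \{\{a,a^*\} : a \in A_{\operatorname{add}}\}$ and $k' := k + |A_{\operatorname{add}}|$. Then $\mathcal{P}_1'$ and $\mathcal{P}_2'$ differ by exactly $|A_{\operatorname{add}}| \in \mathcal{O}(x)$ deletions, and the construction is clearly linear-time.

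Next, I would verify that $M_1'$ is stable in $\mathcal{P}_1'$. Any pair involving a dummy $a^*$ is non-blocking because $a^*$ accepts only $a$ and is already matched to $a$; any pair $\{a, b\}$ with $a \in A_{\operatorname{add}}$ is non-blocking because $a$ has its top choice $a^*$. For pairs of non-added agents, I would appeal to the fact that \Iadd only \emph{inserts} new agents into existing lists without reordering old ones, so the relative order of any two non-added agents is the same under $\mathcal{P}_1$ and $\mathcal{P}_2$; a blocking pair of non-added agents in $\mathcal{P}_1'$ would therefore also block $M_1$ in $\mathcal{P}_1$, contradicting stability. This preservation-of-relative-order observation is the main step that deserves explicit care.

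For the equivalence of answers, observe that in any $M_2'$ stable in $\mathcal{P}_2'$ every dummy $a^*$ is unmatched, so $M_2'$ only uses edges in $A$; since the preferences of agents in $A$ under $\mathcal{P}_2'$ coincide with those under $\mathcal{P}_2$ (modulo the now-unacceptable $a^*$ at the top of $a$'s list), $M_2'$ restricted to $A$ is stable in $\mathcal{P}_2$. Conversely, any $M_2$ stable in $\mathcal{P}_2$ is stable in $\mathcal{P}_2'$ when dummies are left unmatched. Under this bijection, a direct computation gives $|M_1' \triangle M_2'| = |M_1 \triangle M_2| + |A_{\operatorname{add}}|$ (since $M_1'$ adds exactly the $|A_{\operatorname{add}}|$ dummy edges to $M_1$, all of which are absent from $M_2'$), so the bound $k$ on $\mathcal{I}$ corresponds exactly to the bound $k' = k + |A_{\operatorname{add}}|$ on $\mathcal{I}'$, completing the reduction.
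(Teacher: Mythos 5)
Your construction is the same as the paper's: a dummy partner of the opposite sex placed at the top of each added agent's list, matched to it in $M_1'$ and deleted in $\mathcal{P}_2'$, with all original agents given their $\mathcal{P}_2$-preferences in both profiles and $k' = k + |A_{\operatorname{add}}|$. The argument is correct and only differs in being somewhat more explicit than the paper about verifying the stability of $M_1'$ and the matching correspondence.
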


\begin{proof}
	Let $\mathcal{I} = (A, \mathcal{P}_1, \mathcal{P}_2, M_1, k)$ be an 
	instance of {\textsc{ISM(-T)}} for \Iadd.
	Let $A_{\operatorname{add}}$ be the set of agents with empty preferences in 
	$\mathcal{P}_1$ and non-empty preferences in $\mathcal P_2$ (i.e., the set 
of ``added'' agents).
We create an instance~$\mathcal{I}' = (A', \mathcal{P}_1', \mathcal{P}_2', 
	M_1', k')$ for \Idelete as follows. To create $A'$, for each agent $a\in 
A$, 
we add an agent $a'$ and set $a'$'s preferences in both $\mathcal{P}'_1$ and 
$\mathcal{P}'_2$ to $a$'s preferences in $\mathcal{P}_2$.
	For each~$a\in A_{\operatorname{add}}$, we add an agent~$a''$
	to $A$.
	Agent~$a''$ only finds~$a'$ acceptable in $\mathcal{P}_1'$, while the 
	preferences of~$a''$ in $\mathcal{P}_2'$ are empty ($a''$ gets deleted).
	In $\mathcal{P}_1'$ and $\mathcal{P}_2'$, we modify the preferences of $a'$ 
by adding $a''$ at the
	first 
	position.
	We set $M_1' := \{ \{a',b'\} : \{a,b\}\in M_1 \} \cup \{\{a', a''\} : a\in A_{\operatorname{add}}\}$ and 
	$k':=k + | A_{\operatorname{add}}|$.
	The correctness of the reduction follows from the observation that every stable matching in $\mathcal{P}'_1$ contains edge $\{a',a''\}$ for every $a'\in A_{\operatorname{add}}$, while $a''$ is unmatched in every stable matching in $\mathcal{P}'_2$ (and cannot form a blocking pair in any matching). 
\end{proof}

Finally, we show that \Ireor\ can be reduced to \Iadd.

\begin{restatable}{lemma}{reoradd}\label{prop:reor-add} \label{le:3}
	\Ireor\ can be linearly reduced to \Iadd.
\end{restatable}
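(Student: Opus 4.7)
The plan is to ``clone'' every agent whose preferences change and to pin down the original copy in~$\mathcal{P}_2'$ with a gatekeeper. Let $A_{\reor}\subseteq A$ be the set of agents whose preferences differ between~$\mathcal{P}_1$ and $\mathcal{P}_2$; note $|A_{\reor}|\le x$. For each $c\in A_{\reor}$ I introduce two fresh agents: $c^{\mathrm{new}}$, on the same side as $c$, which will assume $c$'s role under the new preferences, and $c^*$, on the opposite side, which will ``retire'' $c$ in~$\mathcal{P}_2'$. Both new agents are declared with empty acceptance sets in $\mathcal{P}_1'$; in $\mathcal{P}_2'$ we set $\Ac(c^*)=\{c\}$ and give $c^{\mathrm{new}}$ the preference list of $c$ in~$\mathcal{P}_2$ with every $b\in A_{\reor}$ replaced by~$b^{\mathrm{new}}$. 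Thus only the $2|A_{\reor}|=\mathcal{O}(x)$ added agents have their preferences flipped from empty to non-empty, which is precisely $\mathcal{O}(x)$ \Iadd\ operations.

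Each original agent $a\in A$ keeps one preference list used in both $\mathcal{P}_1'$ and $\mathcal{P}_2'$: take $a$'s preferences from $\mathcal{P}_1$ and (i) prepend $c^*$ at the top if $a=c\in A_{\reor}$, and (ii) insert $b^{\mathrm{new}}$ immediately after $b$ for every $b\in A_{\reor}\cap\Ac(a)$. Because $c^*$ and $c^{\mathrm{new}}$ have empty acceptance in $\mathcal{P}_1'$, their appearance in other agents' lists is inert there, so $M_1':=M_1$ is stable in $\mathcal{P}_1'$.

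The correctness of the reduction rests on two structural observations. First, in every stable matching~$M_2'$ of $\mathcal{P}_2'$ the pair $\{c,c^*\}$ must occur for every $c\in A_{\reor}$: since $\Ac(c^*)=\{c\}$ and $c$ ranks $c^*$ above everyone else, any matching without this pair is blocked by it. Second, conditional on these forced pairs, $\mathcal{P}_2'$ restricted to the ``active subgame'' $(A\setminus A_{\reor})\cup\{c^{\mathrm{new}}:c\in A_{\reor}\}$ becomes, under the renaming $c^{\mathrm{new}}\leftrightarrow c$, isomorphic to $(A,\mathcal{P}_2)$: the insertion of $b^{\mathrm{new}}$ just after $b$ in every list guarantees that, once $b\in A_{\reor}$ is removed (because it is matched to $b^*$), the induced order on the remaining agents coincides with the order in~$\mathcal{P}_2$. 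Consequently stable matchings~$M_2$ in $\mathcal{P}_2$ and stable matchings~$M_2'$ in $\mathcal{P}_2'$ are in bijection via ``rename $c\mapsto c^{\mathrm{new}}$ and append the $\{c,c^*\}$ pairs''.

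All that remains is to choose $k'$ so that the two instances are equivalent as decision problems. Comparing the bijected matchings edge by edge, one checks that $|M_1'\triangle M_2'|-|M_1\triangle M_2|$ is a function only of $|A_{\reor}|$, of the set of $A_{\reor}$-agents matched in~$M_1$, and of the analogous set in~$M_2$; namely each $c\in A_{\reor}$ contributes the forced edge $\{c,c^*\}$, an $\{c,M_1(c)\}$-edge removed if $c$ was matched in~$M_1$, and a renamed $\{c^{\mathrm{new}},M_2(c)\}$-edge added if $c$ is matched in~$M_2$. Setting $k':=k+\mathcal{O}(|A_{\reor}|)$ with the right constants makes both forward and backward implications go through. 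The main technical obstacle is exactly this bookkeeping: one must show that the instance-dependent variability in $|M_1'\triangle M_2'|-|M_1\triangle M_2|$ can be absorbed into a single additive correction $k'-k$ that depends only on the given~$M_1$ and $A_{\reor}$, and verifying this requires a careful case analysis over whether each $c\in A_{\reor}$ is matched in $M_1$, in $M_2$, in both, or in neither.
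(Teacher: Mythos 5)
Your overall architecture (clone each changed agent, pin the original down with a gatekeeper it ranks first, place the clone adjacent to the original in everyone else's list) matches the paper's construction in spirit. But the step you defer as ``bookkeeping'' is precisely where the naive construction breaks, and your claimed resolution is not available. Consider $c\in A_{\reor}$ with $w=M_1(c)$. If a solution $M_2$ \emph{retains} the pair $\{c,w\}$, it contributes $0$ to $|M_1\triangle M_2|$, yet in your primed instance $c$ is forced onto $c^*$ and the pair must be realized as $\{c^{\mathrm{new}},w\}$ (or $\{c^{\mathrm{new}},w^{\mathrm{new}}\}$); together with the destroyed edge $\{c,w\}\in M_1'$ this adds $2$ to $|M_1'\triangle M_2'|$ beyond the uniform $+1$ for $\{c,c^*\}$. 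If instead $M_2$ \emph{drops} $\{c,w\}$, the lost and replacement edges match up one-to-one on both sides and the excess is $0$. So the quantity $|M_1'\triangle M_2'|-|M_1\triangle M_2|-|A_{\reor}|$ equals twice the number of $M_1$-edges at $A_{\reor}$-agents that the (unknown) solution retains. This is not a function of $M_1$ and $A_{\reor}$ alone --- it is not even a function of which $A_{\reor}$-agents are matched in $M_1$ and $M_2$, contradicting your stated characterization --- and no single additive correction $k'-k$ can absorb it. Concretely, a no-instance whose best stable matching drops all such edges and misses the budget by one can map below any $k'$ that is large enough to accommodate a yes-instance whose witness retains even one such edge.

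The paper resolves exactly this: it enlarges the cloned set to $A_{\reor}^*$ (changed agents \emph{and} their $M_1$-partners) and replaces the potential clone--clone edge for each $M_1$-pair inside $A_{\reor}^*$ by a parallel-edges gadget (\Cref{fig:replacement-edge}) engineered so that the gadget contributes exactly seven edges to $M_1'\triangle M_2'$ whether or not the corresponding $M_1$-edge survives into $M_2$, making the correction $k'-k=|A_{\reor}^*|+7k^*$ computable from the input. Your proof plan is missing this equalizing device, and without it (or something playing the same role) the reduction is incorrect, not merely unfinished.
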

\begin{proof}
	Let $\mathcal{I} = (A = U\cupdot W, \mathcal{P}_1, \mathcal{P}_2, M_1, k)$ be an 
	instance of {\textsc{ISM(-T)}} for \Ireor.
	From this, we construct an instance~$\mathcal{I}' = (A' = U' \cupdot W', \mathcal{P}_1', 
	\mathcal{P}_2', M_1', k')$  of \mbox{\textsc{ISM(-T)}} for \Iadd as follows.
	Let $A_{\reor}$ be the set of agents with different preferences in 
	$\mathcal{P}_1$ and $\mathcal{P}_2$, and let~$A_{\reor}^*:= A_{\reor} \cup \{ M_1(a): a\in 
	A_{\reor} 
	\land M_1(a) \neq \emptyset\}$ be the set of these agents and their partners 
	in 
	$M_1$. To construct $\mathcal{I}'$, we start by 
	adding 
	all agents from~$A$ to $A'$ and set the preferences of all agents in 
	$\mathcal{P}'_1$ and~$\mathcal{P}'_2$ to be their preferences 
	in~$\mathcal{P}_1$ (the preferences of some of these agents will be 
	modified slightly 
	in the following). Moreover, for each~$a\in A_{\reor}^*$, we add to~$A'$ 
	a ``binding'' agent 
	$b_a$ 
	and a ``clone'' $c_a$. Agent~$c_a$ has empty preferences in~$\mathcal{P}'_1$ and has 
	$a$'s 
	preferences from~$\mathcal{P}_2$ in $\mathcal{P}'_2$. 
	We modify the preferences of all so far added agents such that $c_a$ 
	appears 
directly 
	before 
	$a$  (or is 
	tied with~$a$ if we have an instance with ties). Agent $b_a$ has empty 
	preferences 
	in~$\mathcal{P}_1'$, only finds $a$ acceptable in~$\mathcal{P}_2'$, and we 
	modify the preferences of $a$ in both $\mathcal{P}'_1$ and $\mathcal{P}'_2$ 
	such that $a $ prefers $b_a$ to all other agents.
	
	The idea behind the construction is as follows.
	We add~$b_a$ in $\mathcal{P}'_2$ which forces $M_2'$ to contain~$\{a, 
	b_a\}$ and further add 
	agent~$c_a$, who ``replaces''~$a$ in $\mathcal{P}_2'$ and has $a$'s changed 
	preferences. However, this construction does not directly work: Let $m\in 
	A_{\reor}^*\cap U$ and $w=M_1(m)$. 
	Unfortunately, adding the edge~$\{m, w\}$  
to~$M_2$ corresponds to adding the edge~$\{c_m, c_w\}$ to $M'_2$, 
which 
leads to an increase of $|M'_1\triangle M'_2|$ but not of $|M_1\triangle 
M_2|$.
	In order to cope with this, we replace the 
edge~$\{c_m, c_w\}$ by an \emph{edge gadget} consisting of multiple agents:
	For each man~$m\in A_{\reor}^*\cap U$ matched by $M_1$ to a woman $w$, 
	we introduce 
agents as depicted in  
	\Cref{fig:replacement-edge} and 
modify the preferences of $c_m$ and~$c_w$ by replacing~$w$ and $m$ by 
$a_m^{\lm}$ and $a_w^{\rrm}$, respectively.\footnote{We remark that this gadget 
is a 
	concatenation of two parallel-edges 
	gadgets used by \citet{DBLP:journals/talg/CechlarovaF05}.} 
	The newly introduced agents from this gadget have empty preferences in 
	$\mathcal{P}'_1$ and 
	preferences as 
	depicted in \Cref{fig:replacement-edge} in 
	$\mathcal{P}'_2$ except for agents~$a^{\rrm}_{m}$ and $a^{\lm}_{w}$ 
	who have their depicted preferences in both $\mathcal{P}'_1$ and 
	$\mathcal{P}'_2$.
	We set~$M_1' := M_1 \cup \{\{a^{\rrm}_{m}, a^{\lm}_{w}\} : \{m, w\} \in M_1 
\land m\in A_{\reor}^* \cap U \}$ and $k' := k + |A_{\reor}^*| + 7 k^*$ with 
$k^* := |\{\{m, w\} \in M_1: m,w 
\in 
A_{\reor}^*\}|$.
	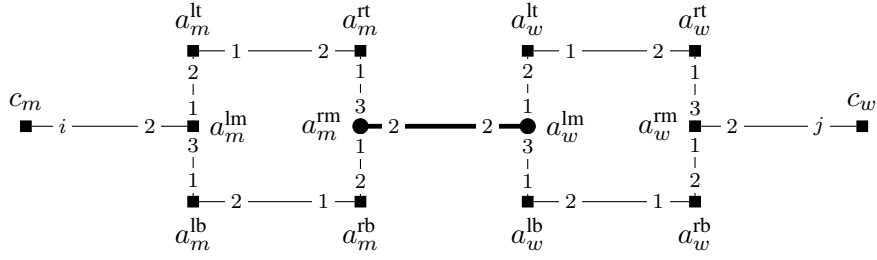
\begin{figure}[bt]
		\begin{center}
			\begin{tikzpicture}[xscale =2.2 , yscale = 1]
			\node[squared-vertex, label=90:$c_w$] (w) at (4, 0) {};
			
			\draw[bedge] (1, 0) edge node[pos=0.2, fill=white, inner sep=2pt] 
			{\scriptsize
				${2}$}  node[pos=0.76, fill=white, inner sep=2pt]
			{\scriptsize $2$} (2,0);
			
			\node[squared-vertex, label=90:$c_m$] (m) at (-1, 0) {};
			\node[squared-vertex, label=0:$a_{m}^{\lm}$] (wf) at (0, 0) {};
			\node[squared-vertex, label=90:$a_{m}^{\lt}$] (mt) at (0, 1) {};
			\node[squared-vertex, label=270:$a_{m}^{\lb}$] (mb) at (0, -1) {};
			\node[squared-vertex, label=90:$a_{m}^{\rt}$] (wt) at (1, 1) {};
			\node[squared-vertex, label=270:$a_{m}^{\rb}$] (wb) at (1, -1) {}; 
			\node[vertex,
			label=180:$a_{m}^{\rrm}$] (mf) at (1, 0) {};
			\draw (m) edge node[pos=0.2, fill=white, inner sep=2pt] 
			{\scriptsize
				$i$}  node[pos=0.76, fill=white, inner sep=2pt] {\scriptsize 
				$2$}
			(wf);
			\draw (mf) edge node[pos=0.2, fill=white, inner sep=2pt] 
			{\scriptsize
				${3}$}  node[pos=0.76, fill=white, inner sep=2pt]
			{\scriptsize $1$} (wt);
			\draw (mf) edge node[pos=0.2, fill=white, inner sep=2pt] 
			{\scriptsize
				$1$}  node[pos=0.76, fill=white, inner sep=2pt] {\scriptsize 
				$2$}
			(wb);
			\draw (wf) edge node[pos=0.2, fill=white, inner sep=2pt] 
			{\scriptsize
				$3$}  node[pos=0.76, fill=white, inner sep=2pt] {\scriptsize 
				$1$}
			(mb);
			\draw (wf) edge node[pos=0.2, fill=white, inner sep=2pt] 
			{\scriptsize
				$1$}  node[pos=0.76, fill=white, inner sep=2pt] {\scriptsize 
				$2$}
			(mt);
			\draw (wt) edge node[pos=0.2, fill=white, inner sep=2pt] 
			{\scriptsize
				$2$}  node[pos=0.76, fill=white, inner sep=2pt] {\scriptsize 
				$1$}
			(mt);
			\draw (wb) edge node[pos=0.2, fill=white, inner sep=2pt] 
			{\scriptsize
				$1$}  node[pos=0.76, fill=white, inner sep=2pt] {\scriptsize 
				$2$}
			(mb);

			\begin{scope}[xshift = 2cm]
			\node[vertex, label=0:$a_{w}^{\lm}$] (wf) at (0, 0) {};
			\node[squared-vertex, label=90:$a_{w}^{\lt}$] (mt) at (0, 1) {};
			\node[squared-vertex, label=270:$a_{w}^{\lb}$] (mb) at (0, -1) {};
			\node[squared-vertex, label=90:$a_{w}^{\rt}$] (wt) at (1, 1) {};
			\node[squared-vertex, label=270:$a_{w}^{\rb}$] (wb) at (1, -1) {}; 
			\node[squared-vertex,
			label=180:$a_{w}^{\rrm}$] (mf) at (1, 0) {};
			\draw (mf) edge node[pos=0.2, fill=white, inner sep=2pt] 
			{\scriptsize
				${3}$}  node[pos=0.76, fill=white, inner sep=2pt]
			{\scriptsize $1$} (wt);
			\draw (mf) edge node[pos=0.2, fill=white, inner sep=2pt] 
			{\scriptsize
				$1$}  node[pos=0.76, fill=white, inner sep=2pt] {\scriptsize 
				$2$}
			(wb);
			\draw (wf) edge node[pos=0.2, fill=white, inner sep=2pt] 
			{\scriptsize
				$3$}  node[pos=0.76, fill=white, inner sep=2pt] {\scriptsize 
				$1$}
			(mb);
			\draw (wf) edge node[pos=0.2, fill=white, inner sep=2pt] 
			{\scriptsize
				$1$}  node[pos=0.76, fill=white, inner sep=2pt] {\scriptsize 
				$2$}
			(mt);
			\draw (wt) edge node[pos=0.2, fill=white, inner sep=2pt] 
			{\scriptsize
				$2$}  node[pos=0.76, fill=white, inner sep=2pt] {\scriptsize 
				$1$}
			(mt);
			\draw (wb) edge node[pos=0.2, fill=white, inner sep=2pt] 
			{\scriptsize
				$1$}  node[pos=0.76, fill=white, inner sep=2pt] {\scriptsize 
				$2$}
			(mb);
			
			\draw (mf) edge node[pos=0.2, fill=white, inner sep=2pt] 
			{\scriptsize
				${2}$}  node[pos=0.76, fill=white, inner sep=2pt]
			{\scriptsize $j$} (w);
			\end{scope}
			
			\end{tikzpicture}
		\end{center}
		\caption{The edge gadget for edge $e = \{c_m, c_{w}\}$, where $m\in U$, 
		$w\in W$, and $m$~ranks 
			$w$ at the $i$-th rank, and $w$ ranks~$m$ at the $j$-th rank.
			Squared agents have empty preferences in~$\mathcal{P}_1'$.
			The edge contained in~$M'_1$ is bold.
			The numbers on the edges indicate the preferences of the agents:
			The number~$x$ closer to an agent~$a$ means that $a$ ranks the other endpoint~$a'$ of the edge at rank~$x$, i.e., there are ${x -1}$~agents which $a$ prefers to~$a'$.
		}\label{fig:replacement-edge}
	\end{figure}
	
        Next, we show the correctness of the forward direction of our reduction.	
	Given a stable matching~$M_2$ in $\mathcal{P}_2$, we construct a stable 
	matching~$M_2'$ in $\mathcal{P}_2'$ with $|M_1' \triangle M_2'| = |M_1 \triangle 
	M_2| + 
	|A_{\reor}^*| + 7 k^*$ as follows.
	We start with $M'_2 := M'_1$. 
	We first implement the adjustments corresponding to edges from~$M_1 \triangle M_2$:
	Let $\{m,w\}\in M_2\setminus M_1$. We delete the 
edges containing $m$ and $w$ from~$M'_2$ (if there are any). Moreover, if 
$m,w\notin A_{\reor}^*$, then we add~$\{m,w\}$ to~$M'_2$.
If $m\in A_{\reor}^*$ and $w\notin A_{\reor}^*$, then 
we add~$\{c_m, w\}$. 
If $w\in A_{\reor}^*$ and $m\notin A_{\reor}^*$, then 
we add~$\{m, c_w\}$. If $m,w\in A_{\reor}^*$, then we add~$\{c_m, c_w\}$.
After these adjustments, it holds that $|M'_1\triangle M'_2|=|M_1\triangle 
M_2|$.

We now turn to extending matching $M'_2$ to include edges from the edge gadgets. 
For every edge~$\{m, w\} \in M_1 \cap M_2$ with $m ,w\in A_{\reor}^*$, we 
delete $\{m, w\}$ from~$M_2'$ and add 
edges~$\{c_{m}, a^{\lm}_{m}\}$, 
	$\{a^{\lt}_{m}, a^{\rt}_{m}\}$, $\{a^{\lb}_{m}, 
	a^{\rb}_{m}\}$, 
	$\{a^{\lt}_{w}, a^{\rt}_{w}\}$, $\{a^{\lb}_{w}, a^{\rb}_{w}\}$, and
	$\{a^{\rrm}_{w}, c_w\}$.
	This contributes seven 
	edges to~$M_1' \triangle M_2'$ (note that the pair $\{m,w\}$ has already been deleted from $M'_2$ and thus already contributed to $|M'_1\triangle M'_2|$).
	For every edge~$\{m, w\} \in M_1 \setminus M_2$ with $m, w \in A_{\reor}^*$, 
we first delete edge~$\{a^{\rrm}_{m}, a^{\lm}_{w}\}$ from $M'_2$. Subsequently, we
make a case 
	distinction based on whether $m$ strictly prefers $w$ to~$M_2(m)$. If yes, 
	then the stability of~$M_2$ implies that $w$ does not strictly prefer $m$ 
	to~$M_2 
	(w)$. Thus, we 
	can 
	add the edges $\{a^{\lb}_{m}, a^{\rb}_{m}\}$, $\{a^{\rrm}_{m}, a^{\rt}_{m}\}$, 
$\{a^{\lt}_{m}, a^{\lm}_{m}\}$, 
	$\{a^{\lb}_{w}, a^{\rb}_{w}\}$, $\{a^{\rrm}_{w}, a^{\rt}_{w}\}$, and 
	$\{a^{\lt}_{w}, a^{\lm}_{w}\}$, and the resulting matching is  
	not blocked by 
	$\{c_{m},a^{\lm}_{m}\}$. Otherwise, $m$ does not strictly prefer~$w$ 
	to~$M_2 (w)$. Thus, we can add the edges
	$\{a^{\lb}_{m}, a^{\lm}_{m}\}$, $\{a^{\rrm}_{ 
		m}, a^{\rb}_{m} 
	\}$, $\{a^{\lt}_{m}, a^{\rt}_{ m} \}$, 
		$\{a^{\lb}_{w}, a^{\lm}_{w}\}$, 
	$\{ a^{\rrm}_{w}, a^{\rb}_{w}\}$, and~$\{a^{\lt}_{w}, a^{\rt}_{w} 
	\}$, and the resulting matching is  
	not blocked by~$\{a^{\rrm}_{w}, c_{w}\}$.
	This contributes 
	seven 
	edges to~$M_1' \triangle M_2'$.
	Thus, as we have  $k^*$ edge 
gadgets each contributing seven edges, we have $|M_1' \triangle M_2'| = 
|M_1 \triangle M_2| + 7
	k^*$.
		
	Lastly, for every 
	$a\in 
	A_{\reor}^*$, we add the edge~$\{a, b_a\}$ to 
$M_2'$, which contributes $|A_{\reor}^*|$ edges 
to $|M_1' 
	\triangle 
	M_2'|$ leading to an overall symmetric difference of $|M_1' \triangle M_2'| = |M_1 
\triangle M_2| + |A_{\reor}^*| 
+ 7 k^*$.
	It is easy to verify that $M_2'$ is stable in~$\mathcal{P}_2'$.
	
	Vice versa, given a stable matching~$M_2'$ in $\mathcal{P}_2'$, we construct a 
	stable
	matching~$M_2$ in $\mathcal{P}_2$ with $|M_1 \triangle M_2| = |M_1' 
	\triangle 
	M_2'| - |A_{\reor}^*| - 7 k^*$ as follows. 
	We add edge $\{m,w\}$ to $M_2$ if one of the following  conditions hold: 
	\begin{itemize}
	 \item $m,w\notin A_{\reor}^*$ and $\{m,w\}\in M'_2$; 
	 \item $m\in A_{\reor}^*$, $w\notin A_{\reor}^*$, and $\{c_m,w\}\in M'_2$;
	 \item $m\notin A_{\reor}^* $, $w\in A_{\reor}^*$, and $\{m,c_w\}\in M'_2$;
	 \item $m,w\in A_{\reor}^*$,  $\{c_m,c_w\}\in M'_2$, and $\{m,w\}\notin  M'_1$; or
	 \item $m,w\in A_{\reor}^*$, $\{c_{m}, a^{\lm}_{m}\}\in M'_2$, $\{a^{\rrm}_w, c_w\}\in M'_2$,  and $\{m,w\}\in M_1$. 
	\end{itemize}
	
   First, we show that for each edge~$\{m, w\} \in M_1$ with $m, w\in A_{\reor}^*$, we have $\{c_m, a^{\lm}_m\} \in M_2'$ if and only if~$\{c_w, a^{\rrm}_w\}$.
   If $M_2'$ contains $\{c_m, a^{\lm}_m\}$, then the stability of $M_2'$ implies 
that~$M_2'$ also contains edges~$\{a^{\lt}_m, a^{\rt}_m\}$, $\{a^{\lb}_m, 
a^{\rb}_m\}$, $\{a^{\rrm}_m, a^{\lm}_w\}$, $\{a_w^{\lt}, a^{\rt}_w\}$, 
$\{a_w^{\lb}, a_w^{\rb}\}$, and $\{a^{\rrm}_w, c_w\}$ (an analogous argument 
also works if  $\{c_w, a^{\rrm}_w\}\in M_2'$).
   Next, we show the stability of~$M_2$.
   Assume for a contradiction that $\mathcal{P}_2$ contains a blocking pair~$\{m, w\}$ for $M_2$.
   If neither $m$ nor $w$ is contained in~$A_{\reor}^*$, then $\{m, w\}$ also blocks~$M_2'$, a contradiction.
   In the following, we assume that~$m \in A_{\reor}^*$ (the case $w\in A_{\reor}^*$ is symmetric).
   If $\{m, w\} \notin M_1$, then $\{c_m, w\}$ (if $w\notin A_{\reor}^*$) or 
$\{c_m, c_w\}$ (if $w \in A_{\reor}^*$) blocks $M_2'$ in $\mathcal{P}_2'$, a 
contradiction.
   Thus, we have $\{m, w\} \in M_1$.
   Because~$\{m, w\}$ blocks~$M_2$, it follows that  $\{m,w\}\notin 
M_2$ and by our initial observation that~$\{c_m, a^{\lm}_m\} \notin 
M_2'$ and $\{a^{\rrm}_w, c_w\} \notin M_2'$.
   As $\{m, w\}$ blocks~$M_2$, man~$m$ prefers $w$ to $M_2(m)$ in $\mathcal{P}_2$ and woman $w$ prefers $m$ to $M_2(w)$ in $\mathcal{P}_2$. 
   Thus, $c_m$ prefers $a^{\lm}_m$ to $M'_2(m)$ in $\mathcal{P}'_2$ and $c_w$ prefers $a^{\rrm}_w$ to $M'_2(w)$ in $\mathcal{P}'_2$. 
   For $\{c_m, a^{\lm}_m\}$ and $\{a^{\rrm}_w, c_w\}$ not to block $M'_2$, woman $ a^{\lm}_m$ needs to be matched better than $c_m$ and man $a^{\rrm}_w$ needs to be matched better than $c_w$ in $M_2'$. Thus, $M_2'$ 
contains the edges $\{a^{\lt}_m, a^{\lm}_m\}$ and $\{a^{\rrm}_w, a^{\rb}_w\}$, and 
consequently also~$\{ a^{\rrm}_m, a^{\rt}_m\}$ and $\{a^{\lb}_w, a^{\lm}_w\}$. 
   However, it follows that $\{a^{\rrm}_m, a^{\lm}_w\}$ blocks $M_2'$, a contradiction to the stability of~$M_2'$.
   Thus, $M_2$ is stable.
   
   It remains to show that $|M_1 \triangle M_2| \le |M_1 ' \triangle M_2'| - 
|A_{\reor}^*| - 7 k^*$.
   Note that apart from replacing~$a$ by $c_a$ for $a\in A_{\reor}^*$, 
matchings~$M_2$ and $M_2'$ differ by the $|A_{\reor}^*|$ edges~$\{a, b_a\}$ 
for~$a \in A_{\reor}^*$ and the edges contained in the edge gadget for replacing 
edges~$\{m, w\}$ from~$M_1$ with $m, w \in A_{\reor}^*$.

We now describe all edges that are part of $M_1 ' \triangle M_2'$: 
For each~$a\in A_{\reor}^*$, by construction, we have $\{a, b_a\} \in M_1' 
\triangle M_2'$.
   For each edge gadget for an edge~$\{m, w\} \in M_1$ with $m , w \in 
A_{\reor}^*$, we identify seven edges containing agents of these gadget in the 
symmetric difference~$M_1' \triangle M_2'$:
   If~$\{m, w\} \in M_2$, then (as observed above) $M_2'$ contains seven edges from this edge gadget (including~$\{a^{\rrm}_m, a^{\lm}_w\}$). All these edges apart from~$\{a^{\rrm}_m, a^{\lm}_w\}$ are part of 
$M_1' \triangle M_2'$.
   Additionally, edge~$\{m, w\}$ is contained in~$M_1' \triangle M_2'$.
   If~$\{m, w\} \notin M_2$, then $M_2'$ contains the edges~$\{a^{\lb}_{m}, a^{\rb}_{m}\}$, $\{a^{\rrm}_{m}, a^{\rt}_{m}\}$, $\{a^{\lt}_{m}, a^{\lm}_{m}\}$, 
	$\{a^{\lb}_{w}, a^{\rb}_{w}\}$, $\{a^{\rrm}_{w}, a^{\rt}_{w}\}$, and 
	$\{a^{\lt}_{w}, a^{\lm}_{w}\}$, or edges $\{a^{\lm}_{m}, a^{\lb}_{m}\}$, $\{a^{\rb}_{m}, 
	a^{\rrm}_{ 
		m}\}$, $\{a^{\rt}_{ m}, a^{\lt}_{m}\}$, 
		$\{a^{\lm}_{w}, 
	a^{\lb}_{w}\}$, $\{a^{\rb}_{w}, a^{\rrm}_{w}\}$, and~$\{a^{\rt}_{w}, 
	a^{\lt}_{w}\}$, or edges  $\{a^{\lm}_{m}, a^{\lb}_{m}\}$, $\{a^{\rb}_{m}, 
	a^{\rrm}_{ 
		m}\}$, $\{a^{\rt}_{ m}, a^{\lt}_{m}\}$, $\{a^{\lb}_{w}, a^{\rb}_{w}\}$, $\{a^{\rrm}_{w}, a^{\rt}_{w}\}$, and 
	$\{a^{\lt}_{w}, a^{\lm}_{w}\}$.
	In all three cases, we have that this edge gadget contributes seven edges to $M_1' \triangle M_2'$. 
	Finally, for each edge~$e = \{m, w\} \in M_1 \triangle M_2$, we get an edge in $M_1'\triangle M_2'$ (different from the edges that we have already identified to be part of $M_1' \triangle M_2'$):
	If $e \in M_1 \setminus M_2$, then also $e \in M_1' \setminus M_2'$.
	If $e \in M_2 \setminus M_1$, then, depending on whether $m$ or $w$ are contained in~$A_{\reor}^*$, edge~$e$ (if $m, w\notin A_{\reor}^*$), edge~$\{m, c_w\}$ (if $m\notin A_{\reor}^*$ and $w\in A_{\reor}^*$), edge~$\{c_m, w\}$ (if $m\in A_{\reor}^*$ and $w \notin A_{\reor}^*$), or edge~$\{c_m, c_w\}$ (if $m, w\in A_{\reor}^*$) is contained in~$M_2' \setminus M_1'$.
	Note that in the last case, edge~$\{c_m, c_w\}$ exists as $\{m, w\} \notin M_1$.
	Summing up, we get that $|M_1 \triangle M_2| \le |M_1 ' \triangle M_2'| - 
|A_{\reor}^*| - 7 k^*\leq k$.
  \end{proof}

Now, \Cref{thm:equivalent} directly follows from \Cref{ob:1,le:1,le:2,le:3}.

\Cref{thm:equivalent} allows us to transfer algorithmic and hardness results for one type of change to another type.
For example, the polynomial-time algorithm of \citet{DBLP:conf/aaai/BredereckCKLN20} 
for \ISM\ for \Iswap\ implies that \ISM\ can also be solved in polynomial time 
for \Iadd, \Idelete, and \Ireor. 
Using similar constructions as in our proofs, it is also possible to prove that the different 
types of changes are equivalent for \IHR\ (although, here, to model~$x$ 
changes of 
type~$\mathcal{X}$ more than~$\mathcal{O}(x)$ changes of type~$\mathcal{Y}$ may be 
needed; e.g., in the above reduction from \Ireor\ to \Iadd, modeling the 
replacement of a hospital~$h$ would need~$u(h)$ binding residents~$b_h$) and 
\textsc{Stable Roommates} (which is a generalization of \SM where agents are not partitioned into men and women).
However, \Cref{thm:equivalent} does not directly transfer to \IASM; for instance, in the 
reduction from \Cref{prop:reor-add},~$M_2'$ might 
``ignore'' the added edge gadgets by allowing few of the edges 
to block~$M_2'$.

\section{Almost Stable Marriage} \label{se:ASM}
Sometimes, it may be acceptable that 
``few'' agent pairs block an implemented matching (for instance, in very 
large markets where agents might not 
even  
be aware that they are part of a blocking pair). In \Cref{se:Experiments}, we 
experimentally show that allowing that $M_2$ may be blocked by few 
agent pairs significantly decreases the number of necessary adjustments. 
We now show that, in contrast to \ISM \citep{DBLP:conf/aaai/BredereckCKLN20}, 
\IASM is computationally intractable:
\begin{restatable}{theorem}{asm}
	\label{thm:IASM}
	\IASM is NP-hard and 
	W[1]-hard when parameterized by $k+b+|\mathcal{P}_1\oplus \mathcal{P}_2|$.
\end{restatable}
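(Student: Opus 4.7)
My plan is a parameterized reduction from \textsc{Multicolored Clique} (W[1]-hard in the clique size~$k^*$) to \IASM, which simultaneously yields NP-hardness and W[1]-hardness with respect to $k+b+|\mathcal{P}_1\oplus \mathcal{P}_2|$. Given a graph~$G$ with color classes $V_1,\dots,V_{k^*}$, I would build a \emph{selection gadget} for each color class~$V_i$ and an \emph{edge-consistency gadget} for each pair~$(V_i,V_j)$. The profile~$\mathcal{P}_1$ is designed so that a canonical ``default'' matching $M_1$ is stable in~$\mathcal{P}_1$, and $\mathcal{P}_2$ differs from~$\mathcal{P}_1$ by only $O(k^*)$ swaps, each one planting a single potential blocking pair inside one selection gadget. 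To keep the overall number of blocking pairs at most~$b$, the new matching~$M_2$ must reroute each selection gadget, and each reroute costs $O(1)$ edges of $M_1\triangle M_2$ while encoding the choice of a specific vertex $v_i\in V_i$.

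The edge-consistency gadget between~$V_i$ and~$V_j$ would be wired into the two selection gadgets so that it contains a pair blocking $M_2$ in~$\mathcal{P}_2$ exactly when the vertices $v_i,v_j$ picked by the two selection gadgets are \emph{not} adjacent in~$G$; if $\{v_i,v_j\}\in E$, this pair is not blocking. Setting $b$ just large enough to cover the unavoidable $O({k^*}^2)$ overhead from the gadgets then forces the selected vertices to be pairwise adjacent, i.e., to form a clique. The forward direction of the reduction translates a $k^*$-clique into the natural ``activation pattern'' of the selection gadgets and gives a valid~$M_2$; the backward direction reads off a clique from the selections encoded by any feasible~$M_2$. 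Because $k$, $b$, and $|\mathcal{P}_1\oplus \mathcal{P}_2|$ are each bounded by a function of $k^*$ (concretely $O({k^*}^2)$), this is a parameterized reduction for the combined parameter, and NP-hardness is immediate by letting $k^*$ grow.

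The main obstacle is designing the selection gadget so that, after only $O(1)$ preference changes, $M_2$ is genuinely free to encode any one of the $|V_i|$ possible vertex choices at the cost of only $O(1)$ edge changes, rather than being forced into a single rerouting. A natural approach is a ``fan'' subinstance where a central man $m^*_i$ becomes acceptable to all women $w_{i,1},\dots,w_{i,|V_i|}$ in~$\mathcal{P}_2$ and every $w_{i,j}$ is rigged to block with~$m^*_i$ unless it is the one matched to him, so that all but one of these potential blocks must be eliminated; the identity of that one chosen~$w_{i,j}$ then propagates to the edge-consistency gadgets. Keeping $|\mathcal{P}_1\oplus\mathcal{P}_2|$ small while still unleashing all of these choices is delicate: the gadget needs an internal symmetry in $\mathcal{P}_1$ that a constant number of well-placed swaps breaks, and the preferences must be engineered so that $M_1$ remains stable in $\mathcal{P}_1$ and so that the edge-consistency gadgets see no spurious blocking pairs. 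I expect that verifying forward/backward correctness while tightly bookkeeping $k$, $b$, and $|\mathcal{P}_1\oplus\mathcal{P}_2|$ is where the bulk of the technical work lies; one cannot appeal to \Cref{thm:equivalent} here since the authors already note it does not transfer to \IASM.
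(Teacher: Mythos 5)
Your submission is a plan rather than a proof: the selection and edge-consistency gadgets, which you yourself identify as ``where the bulk of the technical work lies,'' are never constructed, and the two concrete design ideas you do offer run into problems. First, the ``fan'' selection gadget requires $m^*_i$ to become acceptable to $w_{i,1},\dots,w_{i,|V_i|}$ only in $\mathcal{P}_2$; under the paper's definition the swap distance between two preference relations over different acceptability sets is infinite, so $|\mathcal{P}_1\oplus\mathcal{P}_2|$ would not be bounded by any function of $k^*$ (and even under the more permissive \Ireor\ count you would either have to touch $\Omega(|V_i|)$ preference lists or explain why a woman who already ranks $m^*_i$ in $\mathcal{P}_1$ does not block $M_1$ there). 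Second, and more fundamentally, nothing in your sketch prevents $M_2$ from simply \emph{buying off} a gadget: with a budget of $b=\Theta((k^*)^2)$ blocking pairs available, the new matching can leave a selection gadget un-rerouted and pay one blocking pair for the planted edge, so the gadgets do not actually force a vertex choice. Any reduction whose target allows $b$ blocking pairs needs a redundancy mechanism making every illegal deviation cost strictly more than $b$ blocking pairs (or more than $k$ edge changes) at once; this is exactly the role of the $b+1$ parallel agents in each of the $k+1$ layers of the penalizing component in the paper's construction, and it is absent from your gadgets.

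The paper avoids this gadget engineering entirely by reducing from \textsc{Local Search ASM}, which \citet{DBLP:conf/fsttcs/Gupta0R0Z20} already proved W[1]-hard for the combined parameter $q+t+z$ (their hardness proof is itself the clique-style construction you are trying to rebuild). The incremental wrapper then only has to add $t$ catch men and the penalizing component, move each initially unmatched woman's catch man to the end of her list via $b+1$ swaps (so $|\mathcal{P}_1\oplus\mathcal{P}_2|=t(b+1)$ without any change of acceptability sets), and verify that a solution with $|M_1\triangle M_2|\le k$ and at most $b$ blocking pairs must keep the penalizing component intact. A direct \textsc{Multicolored Clique} reduction would essentially re-prove the Gupta et al.\ result with an extra incremental layer on top; reducing from their problem is both shorter and safer.
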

\begin{proof}
	To show \Cref{thm:IASM}, we devise a polynomial-time many-one reduction from 
	\textsc{Local Search ASM}. In \textsc{Local Search ASM}, we are given an SM instance 
	$(U,W,\mathcal{P})$, a stable 
	matching 
	$N$ 
	in $\mathcal{P}$, and integers~$q$, $t$, and $z$, and the question is
	whether there is a matching~$N^*$ of size at least $|N|+t$ admitting at most 
	$z$ blocking pairs such that $|N \triangle N^*|\leq q$.
	\citet[Theorem 
	3]{DBLP:conf/fsttcs/Gupta0R0Z20} proved 
	that \textsc{Local 
		Search ASM} is W[1]-hard with respect to 
		the combined 
	parameter $q+t+z$. Notably, their hardness result even holds if the number 
	of men and women is the same (we denote this number as $n$) and $|N|+t=n$, i.e., $N^*$ needs to be a 
	perfect 
	matching and exactly $t$ men and $t$ women are unmatched in $N$. We reduce 
	from 
	this regularized version in the 
	following. 
	
	Given an instance $\mathcal{I}'=(U'=\{m'_1,\dots, m'_{n}\},W'=\{w'_1,\dots, 
	w'_{n}\}, \mathcal{P}, N, q,t,z)$ of 
	\textsc{Local 
		Search ASM}, we assume without loss of generality that
	$m'_{1},\dots, m'_{t}$ and $w'_{1},\dots, w'_{t}$ are the agents 
	that 
	are not matched by $N$.  
	
	From $\mathcal{I}'$,  we now construct an instance $\mathcal{I}=(U,W, 
	\mathcal{P}_1, 
	\mathcal{P}_2, 
	b,k)$ of \IASM as follows. We set $b:=z$ and $k:=q+t$.
	We start constructing the set of agents. First of all, for each $i\in [n]$, 
	we 
	add a man $m_i$ modeling man~$m'_i$ from the given instance 
	$\mathcal{I}'$ 
	and a woman $w_i$ modeling woman~$w'_i$. We refer to these agents as 
	\emph{original} agents. Further, we add $t$ \emph{catch 
		men} $m^\star_1,\dots, m^\star_t$ (one for each unmatched original 
		woman). Additionally, we insert a 
	penalizing 
	component
	consisting of $j+2$ layers of $b+1$ men and women each. For each layer 
	$j\in 
	[k+1]$, we denote the 
	agents of 
	the penalizing component in layer $j$ as 
	$\tsup[1]{m}^j_1,\dots , \tsup[1]{m}^j_{b+1}$ and 
	$\tsup[1]{w}^j_1,\dots , \tsup[1]{w}^j_{b+1}$.
	
	The intuition behind the construction is that in $M_1$ all original agents 
	are 
	matched as they are matched in $N$ and each unmatched original woman is 
	matched 
	to her designated
	catch man. In $\mathcal{P}_2$, we modify the preferences of the original 
	women unmatched by $N$ such that they prefer the $b+1$ men from the first 
	layer 
	of the penalizing component to their catch man. We construct the penalizing 
	component in a way such that as soon as one agent from the component is 
	matched 
	outside of the component the resulting change is larger than $k$. This 
	enforces that each original woman needs to be matched to an original man in 
	$M_2$, as each original women only prefers original man to agents from the penalizing component in $\mathcal{P}_2$. 
	
	The preferences of the agents in $\mathcal{P}_1$ are as follows. 
	For every~$i\in [n]$, man~$m_i$ has the preferences of $m'_i$ where each woman 
	$w'_j$ 
	is replaced by $w_j$. 
	For every~$i\in [t]$, woman~$w_i$ has the preferences of $w'_i$  where every man 
	$m'_j$ 
	is replaced by~$m_j$ and additionally
	$m^{\star}_i\succ \tsup[1]{m}^1_1\succ \dots \succ \tsup[1]{m}^1_{b+1}$ is appended at 
	the end of the preferences of~$w_i$.
	For every $i\in [t+1,n]$, woman~$w_i$ has the preferences of $w'_i$ where every man 
	$m'_j$ is replaced by $m_j$ and additionally
	$\tsup[1]{m}^1_1\succ \dots \succ \tsup[1]{m}^1_{b+1}$ is appended at the end of the preferences of~$w_i$.
	The preferences of all other agents are as follows: 
	\begin{align*}
	m^{\star}_i &: w_{i}, \quad & i\in 
	[t];
	\\
	\tsup[1]{m}^1_i&: w_1\succ \dots \succ w_n \succ \tsup[1]{w}^1_i , \quad & 
	i\in 
	[b+1];
	\\
	\tsup[1]{w}^{k+1}_i &: \tsup[1]{m}^{k+1}_i, \quad & 
	i\in 
	[b+1].\\ 
	\tsup[1]{m}^{j}_i & : \tsup[1]{w}^{j-1}_1 \succ \dots \succ 
	\tsup[1]{w}^{j-1}_{b+1} \succ  \tsup[1]{w}^j_i, \quad &
	  i\in 
	[b+1], j \in [2,k+1];\\
\tsup[1]{w}^j_i & : \tsup[1]{m}^j_i \succ \tsup[1]{m}^{j+1}_1 \succ \dots 
\succ 
\tsup[1]{m}^{j+1}_{b+1}, \quad &
	 i\in 
	[b+1], j \in [k];
	\end{align*}
	The preferences of all agents are the same in $\mathcal{P}_2$ as in 
	$\mathcal{P}_1$ except for the 
	women $w_1,\dots, w_t$ who all swap down the catch 
	man in their preferences to the last place, i.e., the preferences of $w_i$ 
	in 
	$\mathcal{P}_2$  
	arise from the preferences of $w'_i$ by replacing every man~$m'_j$ 
    by $m_j$ and appending $\tsup[1]{m}^1_1\succ 
	\dots \succ \tsup[1]{m}^1_{b+1}\succ m^{\star}_i$ at the end. Thus, it 
	holds that  $|\mathcal{P}_1\oplus \mathcal{P}_2|=t\cdot (b+1)$.
	Lastly, we set the matching $M_1$ to: 
	\begin{align*}
	M_1=  \{\{m_i,w_j\}\mid \{m'_i,w'_j\}\in N\}
	 \cup \{\{m^{\star}_i,w_{i}\} \mid i 
	\in 
	[t]\}
	 \cup 
	\{\{\tsup[1]{m}^j_i,\tsup[1]{w}^j_i\}\mid i \in 
	[b+1], j\in [k+1]\}.
	\end{align*} 
	The matching $M_1$ is stable in $\mathcal{P}_1$, as $N$ is stable in 
	$\mathcal{I}'$, each original woman is matched to her catch man which 
	they 
	prefer to all men from the penalizing component, all catch men are matched 
	to 
	their top choice, and the agents in the penalizing 
	component are matched in a stable way.
	
	We now prove that the construction described above is indeed a correct 
	parameterized reduction from \textsc{Local Search ASM} 
	parameterized by $q+t+z$ to \IASM parameterized by 
	$k+b+|\mathcal{P}_1\oplus \mathcal{P}_2|$. The reduction clearly runs in 
	polynomial time and $k+b+|\mathcal{P}_1\oplus \mathcal{P}_2|=z+q+t+t\cdot 
	(z+1)$, that is, the new parameter combination only depends on the old one. 
	It remains to prove the correctness of the reduction: 
	
	$(\Rightarrow):$
	Given a matching $N^*$ of size $|N|+t$ with at most $z$ blocking 
	pairs in $\mathcal{I'}$ and with $|N\triangle N^*|\leq q$, we set 
	matching~$M_2$ to be the following: 
	\begin{align*}
	M_2:= & \{\{m_i,w_j\}\mid \{m'_i,w'_j\}\in N^*\}  \cup 
	\{\{\tsup[1]{m}^j_i,\tsup[1]{w}^j_i\}\mid
	i \in 
	[b+1], j\in [k+1]\}
	\end{align*} 
	Each original woman is matched to an original man and thereby to a man they 
	prefer to their catch man and all men from the penalizing component. As the 
	agents in the penalizing component are matched in a stable way, this 
	implies that all blocking pairs need to involve two original agents and 
	thus that the number of blocking pairs for $M_2$ and $N^*$ is the same and thus at most $b$.
	It remains to examine 
	the symmetric difference between $M_1$ and $M_2$. That is $M_1\triangle 
	M_2=N\triangle N^* \cup 
	\{\{m^{\star}_i,w_{i}\}\mid i 
	\in 
	[t]\}$. Thus, it holds that $|M_1\triangle M_2|\leq q + t$.
	
	$(\Leftarrow):$ Assume that we are given a solution $M_2$ to the 
	constructed \IASM instance. We claim that $M_2$ cannot contain a pair
	involving an original woman and a man from the penalizing 
	component. To prove this, observe that 
	for all $j\in [k]$ it cannot be the case that 
	there is a woman~$\tsup[1]{w}^j_{i^*}$ in layer~$j$ who is not matched 
	to 
	$\tsup[1]{m}^j_{i^*}$ and each man~$\tsup[1]{m}^{j+1}_i$ is matched 
	to~$\tsup[1]{w}^{j+1}_i$ for 
	all $i\in [b+1]$. For the sake of contradiction, assume that this 
	situation occurs. As $\tsup[1]{w}^j_{i^*}$ is not matched to 
	$\tsup[1]{m}^j_{i^*}$ and also not 
	matched to one of $\tsup[1]{m}^{j+1}_1,\dots , \tsup[1]{m}^{j+1}_{b+1}$, 
	she 
	needs to be 
	unmatched. Thus, all $b+1$ men $\tsup[1]{m}^{j+1}_1,\dots , 
	\tsup[1]{m}^{j+1}_{b+1}$ form a 
	blocking pair with $\tsup[1]{w}^j_{i^*}$ contradicting that $M_2$ admits 
	only at most
	$b$ blocking pairs. It follows 
	that if a man in some layer $j$ is matched differently in $M_1$ and $M_2$, 
	then also a man in layer $j+1$ is matched differently in $M_1$ and $M_2$. 
	Let us assume now that there exists a pair in $M_2$ consisting of an 
	original 
	woman and a man $\tsup[1]{m}^1_i$ for some~$i\in [b+1]$. Using our above 
	observation this, however, implies that at least
	one man from each of the $k$
	layers of the penalizing component is matched differently in $M_1$ and 
	$M_2$, contradicting that $|M_1\triangle M_2|\leq k$. Hence, no original 
	woman can 
	be matched to a man from the penalizing component in $M_2$. From this it 
	also 
	follows that no original woman $w_i$ can be unmatched or matched to a 
	catch 
	man in $M_2$, as otherwise $w_i$ forms blocking pairs with all $b+1$ men 
	from the first layer of the penalizing component. Hence, all original 
	women 
	need to be matched to an original man in $M_2$. This implies that the 
	matching $N^*$ defined as the matching $M_2$ restricted to original agents 
	is a perfect matching of these agents which 
	admits at most~$z=b$ blocking pairs with $|N\triangle N^*|\leq q$, as 
	$|M_1\triangle M_2|=|N \triangle 
	N^*|+|
	\{\{m^{\star}_i,w_{i}\}\}\mid i 
	\in 
	[t]\}|$.
\end{proof}

On the positive side, we provide XP-algorithms for all three single 
parameters:
\begin{restatable}{proposition}{XPasm}
	\label{pr:XPIASM}
	\IASM is in XP when parameterized by any of $k$ or $b$ or $|\mathcal{P}_1 \oplus 
	\mathcal{P}_2|$.
\end{restatable}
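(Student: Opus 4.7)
\begin{proofS}
The plan is to give three separate XP algorithms, one per parameter, all following the same ``enumerate then verify'' pattern.

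For the parameter $k$, we enumerate the symmetric difference $S := M_1 \triangle M_2$ directly. Since $|S| \le k$ and each edge of $S$ is a pair of mutually accepting agents, there are at most $n^{O(k)}$ candidate sets $S$. For each, we set $M_2 := M_1 \triangle S$, check that $M_2$ is a valid matching, and count $|\bp(M_2, \mathcal{P}_2)|$ in polynomial time, accepting if some candidate yields at most $b$ blocking pairs.

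For the parameter $b$, we enumerate the candidate set of blocking pairs that $M_2$ is allowed to have. There are at most $n^{O(b)}$ subsets $B$ of man--woman pairs with $|B| \le b$. For each $B$, we construct a modified profile $\mathcal{P}_2^{B}$ from $\mathcal{P}_2$ by making the two agents of every pair in $B$ mutually unacceptable. The key property is that any matching $M_2$ stable in $\mathcal{P}_2^{B}$ satisfies $\bp(M_2, \mathcal{P}_2) \subseteq B$, and conversely any target matching $M_2^{\star}$ realizing a YES-instance with blocking-pair set $B^{\star}$ is stable in $\mathcal{P}_2^{B^{\star}}$ (since matched pairs never block, $M_2^{\star}$ uses no pair from $B^{\star}$). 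Hence finding an $M_2$ stable in $\mathcal{P}_2^{B}$ with $|M_1 \triangle M_2| \le k$ is exactly an instance of \ISM on $(A, \mathcal{P}_1, \mathcal{P}_2^{B}, M_1, k)$, solvable in polynomial time by the algorithm of \citet{DBLP:conf/aaai/BredereckCKLN20}. Iterating over $B$ yields an $n^{O(b)}$ algorithm.

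For the parameter $c := |\mathcal{P}_1 \oplus \mathcal{P}_2|$, we first prove the counting lemma $|\bp(M_1, \mathcal{P}_2)| \le c$: for each pair $\{m,w\}$ newly blocking $M_1$ under $\mathcal{P}_2$, at least one of $m,w$ must have changed its relative preference between its $M_1$-partner and the other agent, and each unit of swap distance in that agent's list can be charged to at most one such newly-blocking pair. Summing over all agents and using $\bp(M_1,\mathcal{P}_1)=\emptyset$ yields the bound. The algorithm then splits: if $c \le b$, then $M_2 := M_1$ is feasible (at most $c \le b$ blocking pairs in $\mathcal{P}_2$ and $|M_1 \triangle M_2| = 0 \le k$), so return YES; otherwise $b < c$, and we invoke the XP-in-$b$ algorithm above, whose runtime $n^{O(b)}$ is dominated by $n^{O(c)}$.

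The main obstacle is the careful verification for the $b$-algorithm that the reduction from \IASM to \ISM via $\mathcal{P}_2^{B}$ is correct in both directions --- in particular, that making $B^{\star}$-pairs unacceptable does not rule out the desired solution, which holds precisely because the blocking pairs of any matching are disjoint from that matching. The $c$-case then piggybacks on the $b$-case via the counting lemma.
\end{proofS}
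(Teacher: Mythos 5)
Your proposal is correct and follows essentially the same route as the paper for all three parameters: enumerating the symmetric difference for $k$, guessing the blocking-pair set and deleting those pairs from the acceptability lists for $b$ (the paper then explicitly recomputes a maximum-weight stable matching, which is just an inlined version of the \ISM algorithm you invoke as a black box), and for $|\mathcal{P}_1\oplus\mathcal{P}_2|$ the same charging argument showing each swap creates at most one new blocking pair for $M_1$, followed by the same case split on whether this bound exceeds $b$. Your write-up is, if anything, slightly more careful about the two-directional correctness of the $b$-case reduction than the paper's.
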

\begin{proof}
	We give a separate proof for each parameter. 
	\paragraph{Parameter $k$.} For the allowed size~$k$ of the symmetric difference between $M_1$ and $M_2$, 
	we guess the set~$F$ up to~$k$ edges in $M_1 \triangle M_2$ and check whether~$M_1 \triangle F$ is a stable matching.
	
	\paragraph{Parameter $b$.} For the number $b$ of blocking pairs $M_2$ is 
	allowed 
	to admit, we start by guessing the blocking pairs. 
	For each guessed blocking pair $\{m,w\}$, we modify the preferences of $m$ 
	in $\mathcal{P}_2$ by deleting $w$ from his preferences and from $\Ac(m)$.
	Similarly to the polynomial-time algorithm for \ISM, we now compute a maximum-weight stable matching~$M$ for~$\mathcal{P}_2$, where we set the weight of an edge to be 2 if it is contained in~$M_1$, and 0 otherwise.
	If the weight~$\weight (M)$ of~$M$ is at least~$|M_1| + |M| - k$, then we have~$|M \triangle M_1| = |M| + |M_1| - 2 |M \cap M_1| = |M | + |M_1| - \weight (M)\le |M| + |M_1| - (|M_1| + |M| -k ) = k$, so $M$ is a solution to the \ISM\ instance.
	Otherwise, there is no matching~$M_2$ with exactly the guessed set of blocking pairs and~$ |M_1 \cap M_2| \le k$ (note that due to the the Rural Hospitals Theorem  all stable matchings after the deletion of the blocking pairs have the same size). 
	\\
	
	\paragraph{Parameter $|\mathcal{P}_1 \oplus \mathcal{P}_2|$.} Note that each 
	swap in some agent's preferences can only create a single blocking pair for the 
	initial matching $M_1$. Thus, if $|\mathcal{P}_1 \oplus \mathcal{P}_2| \le b$, 
	then we can simply set $M_2 := M_1$. Otherwise, we use the XP-algorithm for 
	the number $b$ of blocking pairs. 
\end{proof} 
We finally remark that while the XP-algorithm for the parameter $k$ also works 
for \IASMT,
\IASMT\ is NP-hard even for $b=0$ and $|\mathcal{P}_1 \oplus 
\mathcal{P}_2|=1$ (as \citet{DBLP:conf/aaai/BredereckCKLN20} proved that \ISMT\ 
 is NP-hard for $|\mathcal{P}_1 \oplus \mathcal{P}_2|=1$).

\section{Incremental Hospital Residents} \label{se:AT}
We start our study of the incremental variant of \textsc{Hospital Residents} 
by observing that one can reduce \IHR to 
the polynomial-time solvable \textsc{Weighted 
Stable Marriage} problem~\citep{DBLP:journals/jcss/Feder92}; this yields:
\begin{restatable}{proposition}{IHRpoly}
	\label{ob:IHRpoly}
	\IHR is solvable in $\mathcal{O} (n^{2.5} \cdot m^{1.5})$ time, where $n$ is the number of residents and~$m$ is the number of hospitals.
\end{restatable}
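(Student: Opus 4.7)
The plan is to reduce \IHR in polynomial time to the polynomial-time solvable \textsc{Weighted Stable Marriage} (WSM) problem, which asks for a maximum-weight stable matching in an \SM instance with non-negative edge weights. The cited result of \citet{DBLP:journals/jcss/Feder92} then yields the claimed running time.

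First, I would apply the classical ``cloning'' reduction from \HR to \SM: each hospital $h$ is replaced by $u(h)$ identical copies $h^1, \ldots, h^{u(h)}$, each inheriting $h$'s preference list over residents, and every occurrence of $h$ in a resident's list is replaced by the block $h^1 \succ h^2 \succ \cdots \succ h^{u(h)}$. This gives a well-known one-to-one correspondence between stable matchings of the \HR instance and stable matchings of the cloned \SM instance, under which a resident--clone pair $\{r, h^i\}$ in the \SM matching corresponds to the \HR pair $\{r, h\}$.

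Second, I would choose weights so that maximizing total weight in the cloned instance is equivalent to minimizing $|M_1 \triangle M_2|$ in the original instance. Concretely, set $w(\{r, h^i\}) := 1$ whenever $\{r, h\} \in M_1$ and $w(\{r, h^i\}) := 0$ otherwise (independently of the clone index $i$). Then, for any stable matching $M_2^{\mathrm{SM}}$ of the cloned instance corresponding to an \HR matching $M_2$, the total weight is exactly $|M_1 \cap M_2|$, because each resident $r$ with $M_1(r) = h$ contributes $1$ iff $M_2^{\mathrm{SM}}(r)$ is some clone of $h$, i.e., iff $M_2(r) = M_1(r)$. Since there are no ties, the Rural Hospitals Theorem applies and gives $|M_2|=|M_1|$; hence $|M_1 \triangle M_2| = 2(|M_1| - |M_1 \cap M_2|)$, and the algorithm outputs \textsc{Yes} iff a maximum-weight stable matching in the cloned instance has weight at least $|M_1| - \lfloor k/2 \rfloor$.

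The main obstacle is the running-time accounting. I would bound the cloned instance by $n$ residents on one side and at most $\sum_h u(h) = \mathcal{O}(nm)$ hospital clones on the other (assuming without loss of generality $u(h) \le n$), and at most $\mathcal{O}(n^2 m)$ acceptable pairs. Plugging these sizes into Feder's algorithm for \textsc{WSM} yields the claimed $\mathcal{O}(n^{2.5} \cdot m^{1.5})$ bound. All other steps (the cloning correspondence and the weight accounting above) are routine.
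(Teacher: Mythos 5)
Your reduction is essentially the paper's: clone each hospital $h$ into $u(h)$ copies, put weight on resident--clone edges corresponding to edges of $M_1$, and run Feder's algorithm; the size accounting of the cloned instance and the resulting running time match the paper's as well. However, there is one genuine error in your weight/threshold accounting. You claim that $|M_2| = |M_1|$ ``by the Rural Hospitals Theorem.'' That theorem equates the sizes of all stable matchings \emph{within a single instance}; here $M_1$ is stable with respect to $\mathcal{P}_1$ and $M_2$ with respect to $\mathcal{P}_2$, which are different preference profiles (indeed, the acceptability sets may differ, e.g.\ when changes model agent deletions or additions). So in general $|M_1| \neq |M_2|$, your identity $|M_1 \triangle M_2| = 2(|M_1| - |M_1 \cap M_2|)$ fails, and the threshold $|M_1| - \lfloor k/2 \rfloor$ is wrong: whenever $|M_2| < |M_1|$ you overestimate the symmetric difference by $|M_1| - |M_2|$ and may reject yes-instances.

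The fix is exactly what the paper does: the Rural Hospitals Theorem \emph{does} guarantee that the size $n_2$ of a stable matching in $\mathcal{P}_2$ is well defined (independent of which stable matching is chosen), and $n_2$ can be computed up front by Gale--Shapley. Then $|M_1 \triangle M_2| = |M_1| + n_2 - 2|M_1 \cap M_2|$, so with your unit weights the correct acceptance condition is that the maximum weight is at least $(|M_1| + n_2 - k)/2$ (the paper uses weight $2$ per $M_1$-edge and the threshold $n_1 + n_2 - k$, which is the same thing). The rest of your argument, including the correspondence between stable matchings of the \HR instance and of the cloned \SM instance and the running-time bound, is sound.
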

\begin{proof}
	Let $\mathcal{I} = (A = R\cupdot H, \mathcal{P}_1, \mathcal{P}_2, M_1, 
	k)$ be the given instance of \IHR. We reduce the problem to an instance of 
	\textsc{Weighted Stable Marriage} where given an SM instance $(U\cupdot 
	W, \mathcal{P})$, a weight function~$\weight :U\times W\rightarrow \mathbb{Q}$ on the edges, and an 
	integer $z$, the question is whether there is a stable matching $M$ in 
	$\mathcal{P}$ of weight at least $z$, i.e., $\sum_{e\in M} \weight(e)\geq z$.
	
	We construct an instance of \textsc{Weighted Stable Marriage} as follows. 
	The set~$U$ of men consists of all residents and the set $W$ of women 
	consists of $u(h)$~copies $h^1, \dots, h^{u(h)}$ of each hospital $h\in H$ 
	with upper quota~$u(h)$. The preferences of all $r\in U$ in $\mathcal{P}$  
	are the same 
	 as in $\mathcal{P}_2$, where we replace a hospital $h$ by 
	$h^1\succ \dots \succ h^{u(h)}$. The preferences of a $h^i\in W$ are the 
	same as the preferences of $h$ in $\mathcal{P}_2$. For the weight function 
	$\weight$, for each edge $\{r,h\}\in M$, we assign the edges $\{r,h^i\}$ for $i\in 
	[u(h)]$ weight two, and all other edges weight zero.
	Finally, we set~$z:= n_1 + n_2 - k$, where $n_1$ is 
	the size of a stable 
	matching in~$\mathcal{P}_1$ and $n_2$ is the size of a stable matching 
	in~$\mathcal{P}_2$ (note that $n_1$ and $n_2$ are well-defined due to the 
	Rural Hospitals Theorem, which says that the number of residents matched to 
	a hospital is the same in all stable matchings in an \textsc{HR} 
	instance).
	
    As we may assume that $u(h) \le n$ for every hospital~$h \in H$, the constructed instance has $\mathcal{O} (n) $ men and $\mathcal{O} (nm)$ women.
    Consequently, there are $\mathcal{O} (n^2m)$ many acceptable man-woman 
    pairs.
    Using the algorithm for \textsc{Weighted Stable Marriage} of 
    \citet{DBLP:journals/jcss/Feder92} (which solves the problem in $\mathcal{O} (\sqrt{N} p)$ time if the weight of 
    every edge is bounded by a constant,  where~$N$ is the number of agents and 
    $p$ is the number of acceptable pairs), our algorithm runs in $\mathcal{O} 
    (\sqrt{mn} \cdot n^2m) = \mathcal{O} (n^{2.5} \cdot m^{1.5})$ time.
	
	It remains to show the correctness of our reduction.
	Assume that there is a stable matching~$M_2$ in $\mathcal{P}_2$ with 
	$|M_1 \triangle M_2| \le k$.
	For a hospital~$h\in H$ with~$M_2 (h) = \{r_1, \dots, r_\ell\}$ (i.e., 
	exactly 
	the agents $r_1$, \dots, $r_\ell$ are matched to $h$ in $M_2$) with $r_1 
	\succ_h r_2 \succ_h \dots \succ_h r_\ell$, we define~$M (h) := \{ \{h^i, 
	r_i\} : i \in \{1, \dots, \ell\}\}$.
	We claim that~$M:= \bigcup_{h\in H} M(h)$ is a stable matching for~$\mathcal{P}$ of weight at least~$z$.
	First note that $\weight (M) = 2 |M_1 \cap M_2| = |M_1| + |M_2| - |M_1 \triangle M_2| \ge z$.
	It remains to show that $M$ is stable.
	Assume for a contradiction that there is a blocking pair~$\{r, h^i\}$.
	Because~$h^j$ prefers~$M(h^j)$ to~$r_i$ for every~$j < i$, it follows that~$M_2 (r) \neq h$.
	Thus, $\{r, h\}$ blocks $M_2$, a contradiction to the stability of~$M_2$.
	
	To show the other direction, we consider a stable matching~$M$ for $\mathcal{P}_2$ with $\weight (M) \ge z$.
	We define $M_2 := \{\{r, h\} : \{r, h^i \} \in M\}$.
	We have~$|M_1 \triangle M_2| = |M_1| + |M_2| - 2|M_1 \cap M_2| = |M_1 | + 
	|M_2 |- \weight (M) \le |M_1| + |M_2| - (n_1 + n_2 - k) = k$, so it remains 
	to show that~$M_2$ is stable.
	Assume for a contradiction that there exists a blocking pair~$\{r, h\}$ for $M_2$.
	Then in $M$ there exists some~$i$ so that $h^i$ is unmatched or $h^i$ prefers $r$ to $M(h^i)$.
	This implies that $\{r, h^i\}$ blocks $M$, a contradiction to the stability of~$M$.
\end{proof}

In the rest of this section, we focus on \IHRT. As \IHRT generalizes \ISMT, the results of 
\citet{DBLP:conf/aaai/BredereckCKLN20} imply that \IHRT is 
NP-hard and W[1]-hard parameterized by~$k$ even for $|\mathcal{P}_1\oplus 
\mathcal{P}_2|=1$. Thus, we focus on the parameters 
number~$n$ of residents and
number~$m$ 
of hospitals. 

For the number $n$ of residents, we can bound the number of 
``relevant'' hospitals by $\mathcal{O}(n^2)$. Subsequently guessing for each 
resident the hospital it is matched to yields:
\begin{restatable}{proposition}{IHRn}
	\label{pr:IHRn}
	\IHRT is solvable in 
	$\mathcal{O}(n^{2n}\cdot n m)$ time. 
\end{restatable}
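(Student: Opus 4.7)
The plan is to combine a structural bound on which hospitals can appear as partners in any stable matching for $\mathcal{P}_2$ with exhaustive enumeration over the resulting candidates. First I would prove the key claim that in any stable matching $M_2$ and for any resident $r$ with $M_2(r)=h\neq\emptyset$, strictly fewer than $n$ hospitals are strictly preferred by $r$ to $h$ in $\mathcal{P}_2$. Suppose for contradiction that there are $n$ distinct hospitals $h_1,\ldots,h_n$ with $h_i\succ_r h$. Stability of $M_2$ implies that no $\{r,h_i\}$ blocks; since $r$ strictly prefers $h_i$ to $M_2(r)=h$, each $h_i$ must be at full capacity and weakly prefer every matched resident to $r$. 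In particular each $h_i$ absorbs at least one resident, so at least $n$ distinct residents are matched into $\{h_1,\ldots,h_n\}$; together with $r$, who is matched outside this set, this yields $n+1$ distinct residents, contradicting $|R|=n$.

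Using this claim, I would define for each resident $r$ a ``relevant'' set $H^*_r$ consisting of the hospitals that $r$ strictly prefers to at most $n-1$ other hospitals; by the claim, every stable matching $M_2$ for $\mathcal{P}_2$ satisfies $M_2(r)\in H^*_r\cup\{\emptyset\}$. The global relevant set $H^*:=\bigcup_r H^*_r$ then has size $\mathcal{O}(n^2)$. The algorithm enumerates every assignment of residents to $H^*\cup\{\emptyset\}$, giving at most $(|H^*|+1)^n=\mathcal{O}(n^{2n})$ candidate matchings; for each candidate $M_2$, verify in $\mathcal{O}(nm)$ time that (i) all hospital capacities are respected, (ii) $|M_1\triangle M_2|\le k$, and (iii) none of the $nm$ resident--hospital pairs blocks $M_2$ in $\mathcal{P}_2$. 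Output ``yes'' if and only if some candidate passes all three checks.

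The main obstacle is the structural claim, where one must be careful in the presence of ties; however, the contradiction above counts only \emph{strictly} preferred hospitals, so ties at the boundary of the top $n$ tiers do not interfere with the argument, and the bound $M_2(r)\in H^*_r\cup\{\emptyset\}$ still holds. The total running time is $\mathcal{O}(n^{2n}\cdot nm)$, dominated by verifying stability for each of the $\mathcal{O}(n^{2n})$ candidate matchings.
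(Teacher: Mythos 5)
Your structural claim and the overall skeleton are the same as the paper's: in any stable matching at most $n-1$ hospitals can be strictly preferred by a resident to its partner, hence each resident's partner lies among its ``top'' hospitals, and one brute-forces over the resulting candidate assignments and checks each in $\mathcal{O}(nm)$ time. The counting argument you give for the claim (each strictly preferred hospital must be full, hence absorbs a resident distinct from $r$, giving $n+1$ residents) is a correct and somewhat more explicit justification than the paper's one-line observation.

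There is, however, a genuine gap in the step that bounds the candidate set. You define $H^*_r$ as the set of hospitals $h$ such that at most $n-1$ hospitals are strictly preferred to $h$ by $r$ (your wording is inverted, but this is clearly the intended set), and you assert that $H^*=\bigcup_r H^*_r$ has size $\mathcal{O}(n^2)$. With ties this is false: if $r$ is indifferent among all of $\Ac(r)$, then no hospital is strictly preferred to any other and $H^*_r=\Ac(r)$, which can have size $m\gg n^2$; in general $H^*_r$ consists of the first few indifference classes of $\succsim_r$, and the last of these classes can be arbitrarily large. Your enumeration then ranges over up to $(m+1)^n$ candidates, which is not the claimed $\mathcal{O}(n^{2n}\cdot nm)$ and, in particular, is not fixed-parameter tractable in $n$ --- which is the whole point of the proposition. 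The containment $M_2(r)\in H^*_r\cup\{\emptyset\}$ does survive ties, as you argue, but the cardinality bound does not. The paper closes exactly this gap by breaking ties arbitrarily and keeping, for each resident, only the hospitals on the first $n$ positions of the tie-broken list together with that resident's $M_1$-partner, which restores the $\mathcal{O}(n^2)$ bound; this in turn requires an additional (and not entirely trivial) argument that restricting each resident to such representatives of its relevant indifference classes loses no optimal stable matching. That tie-handling step, or some substitute for it, is what is missing from your proposal.
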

\begin{proof}
	We will reduce the given instance of \IHRT to an instance whose size is upper-bounded in the number $n$ of residents. 
	Observe that in a stable matching $M$ for a resident $r\in R$ there can be 
	at most $n-1$ hospitals which $r$ strictly prefers to~$M(r)$. 
	This implies that $r$ needs to be matched to one of its~$n$~most preferred 
	hospitals (which due to ties can be more than $n$ hospitals). However, as it is 
	irrelevant for the stability of the matching to which hospital from a 
	single 
	tie a resident is matched, in the modified instance we keep all hospitals a resident is matched to 
	in~$M_1$ and all 
	hospitals that appear in one of the preference lists of residents on one of 
	the first $n$ 
	positions (breaking ties arbitrarily). Afterwards, we can apply a brute-force algorithm by guessing the partner of each resident to solve the 
	problem.  
\end{proof}

\Cref{pr:IHRn} means fixed-parameter tractabilty with respect to~$n$.
In contrast to this, the number of hospitals is (presumably) not sufficient to 
gain fixed-parameter tractability, even if the two preference profiles differ 
only in one swap: 
\begin{restatable}{theorem}{IHRWm}
	\label{th:IHRWm}
		Parameterized by the number $m$ of hospitals, \IHRT is 
		W[1]-hard 
	even if 
	$|\mathcal{P}_1\oplus \mathcal{P}_2|=1$.
\end{restatable}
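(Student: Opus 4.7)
The plan is to give a parameterized reduction from \textsc{Multicolored Clique}, which is W[1]-hard parameterized by the number $k$ of colors. From an instance $(G, V_1, \ldots, V_k)$ of \textsc{Multicolored Clique}, I aim to construct an equivalent \IHRT instance $(R, H, \mathcal{P}_1, \mathcal{P}_2, M_1, k')$ with $|H| = f(k)$ hospitals, $k' = \mathcal{O}(k)$, and $|\mathcal{P}_1 \oplus \mathcal{P}_2| = 1$.

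The construction will use (roughly) one hospital $h_i$ per color class, together with a small constant-size auxiliary component containing a ``selection'' hospital $h^*$ of capacity $k$. Each $h_i$ is given capacity $|V_i|-1$ and preferences that are tied over all residents representing vertices of $V_i$. In $M_1$, each $h_i$ is matched to all but one ``default'' vertex of $V_i$, and $h^*$ is matched exactly to the $k$ defaults. The one allowed swap breaks a single tie in (say) $h^*$'s preferences, creating a blocking pair that can be repaired only by a cascade of rematches: each hospital $h_i$ ``releases'' a resident~$r_{v_i}$ to $h^*$, and in turn must accept a new resident to keep its quota. The residents' preferences over the $h_i$'s and $h^*$ will be tailored so that $r_v$ (for $v \in V_i$) is willing to move from $h_i$ to $h^*$ only when the selection $\{v_1,\ldots,v_k\}$ satisfies all pairwise-edge conditions of $G$: non-edges are encoded by letting certain residents prefer $h_j$ to their current match, so that a non-edge $\{v_i,v_j\}$ in a selected set immediately gives rise to an additional blocking pair that cannot be absorbed by the ties. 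Hospital capacities and ties are set so that stability in $\mathcal{P}_2$ allows any clique-indexed rematch but forbids every non-clique one.

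For correctness, the forward direction takes a clique $\{v_1,\ldots,v_k\}$ and forms $M_2$ by swapping each default vertex of $V_i$ for $v_i$ at $h_i$ and moving the $v_i$'s to $h^*$; the symmetric difference is $\mathcal{O}(k)$ and, by construction of the preferences, the edge property of the clique rules out every blocking pair. Conversely, given a feasible $M_2$, the single swap together with the Rural Hospitals Theorem pins down which quotas must be filled from~$V_i$; the budget $k'$ is tight enough to allow at most one ``released'' resident per color, hence to extract a selection of one vertex per color; stability then forces this selection to be a clique.

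The main obstacle will be designing the resident preference lists and the ties at $h^*$ and $h_1,\ldots,h_k$ simultaneously so that (i)~$M_1$ is stable in $\mathcal{P}_1$ despite the ties, (ii)~\emph{some} single swap opens the door to every clique-induced rematch but to no other, and (iii)~the budget $k'$ genuinely forces exactly one replaced vertex per color. Balancing these three constraints, and ensuring that non-edges translate into unavoidable blocking pairs rather than merely suboptimal rematches, is where I expect the bulk of the technical work to lie; achieving all of this while keeping $|H|$ a function of $k$ alone (independent of $|V(G)|$) is what yields W[1]-hardness in the parameter $m$ rather than only in $n$ or $k'$.
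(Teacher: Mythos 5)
Your proposal is a plan rather than a proof, and the part you defer---``designing the resident preference lists and the ties \ldots is where I expect the bulk of the technical work to lie''---is precisely the entire content of the theorem. Concretely, the step that would fail (or at least is completely unsubstantiated) is the encoding of the edge relation of $G$ with only $f(k)$ hospitals. You propose that a non-edge $\{v_i,v_j\}$ in the selected set should ``immediately give rise to an additional blocking pair,'' realized through residents' preferences over the hospitals $h_1,\dots,h_k,h^*$ and the hospitals' preferences over residents. But a resident's preference list ranges over at most $f(k)$ hospitals and cannot depend on which vertex of $V_j$ was selected, so the only place adjacency information can live is in the single weak order $\succsim_{h_j}$ over residents. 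For the blocking-pair mechanism you describe, you would need that for every $v_j\in V_j$ and every $v_i\in V_i$, hospital $h_j$ prefers $r_{v_i}$ to $r_{v_j}$ exactly when $\{v_i,v_j\}\notin E(G)$; an arbitrary bipartite non-adjacency relation is not representable by a single weak order, so this mechanism breaks for general instances of \textsc{Multicolored Clique}. On top of that, you give no argument for why $M_1$ is stable in $\mathcal{P}_1$, why exactly one swap suffices to trigger the cascade, or why the budget $k'=\mathcal{O}(k)$ forces exactly one released resident per color; each of these is a nontrivial constraint that interacts with the others.

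For comparison, the paper does not attempt a direct clique reduction at all. It reduces from \textsc{Com HR-T} (deciding whether an \HRT instance admits a stable matching covering all residents), which \citet{DBLP:journals/corr/abs-2009-14171} already proved \woneh{} parameterized by the number of hospitals; the hard combinatorics of verifying a clique with few hospitals is thus outsourced to that result. The reduction then only needs a small constant-size penalizing component (two quota-one hospitals, two quota-$(n+1)$ hospitals, and $2n+3$ extra residents) in which a single swap in one resident's list forces that resident onto a hospital $h_1^\star$ that would otherwise block with any unmatched original resident, thereby converting ``all residents matched'' into ``stable matching within distance $2(n+1)$ of $M_1$.'' If you want to salvage your approach, you would essentially have to re-prove the W[1]-hardness of \textsc{Com HR-T} with respect to $m$ inside your own gadgetry, which requires substantially more machinery (e.g., pair gadgets for each of the $\binom{k}{2}$ color pairs) than the sketch provides.
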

\begin{proof}
	We reduce from the \textsc{Com HR-T} problem: Given an instance 
	of 
	\HRT, decide whether there 
	is a stable matching which matches all residents.
	\citet[Proposition~8]{DBLP:journals/corr/abs-2009-14171} showed that 
	\textsc{Com HR-T} is 
	W[1]-hard when parameterized by the number $m$ of hospitals.
	Given an instance $\mathcal{I}=(R=\{r_1,\dots, r_n\}\cup H=\{h_1, \dots, 
	h_m\}, \mathcal{P})$ of \textsc{Com HR-T}, let $N$ be an arbitrary 
	stable matching in $\mathcal{I}$ (we assume that $N$ does 
	not match all residents, as we otherwise know that $\mathcal{I}$ is a
	yes-instance). 
	
	To construct an instance of \IHRT, we first add $R$ to the set of residents and $H$ to the set of hospital. Subsequently, we add 
	a penalizing component consisting of two 
	hospitals $h_1^{\star}$ and $h_2^{\star}$, both with upper quota one, and two 
	hospitals $\tilde{h}_1$ and $\tilde{h}_2$ both with upper quota $n+1$. We 
	additionally add a resident $r^{\star}$ and two sets of $n+1$ residents  
	$\tilde{r}_1, \dots ,\tilde{r}_{n+1}$ and $\tilde{r}'_1, \dots 
	,\tilde{r}'_{n+1}$.
	
	Turning to the agents' preferences in $\mathcal{P}_1$, all 
	agents from $R \cup H$ have their preferences from $\mathcal{P}$, except that, for each resident, $h^{\star}$ is added at the end of her preferences.
	The 
	preferences of the agents from the penalizing component are: 
	\begin{align*}
	h^{\star}_1&: r_1\succ \dots \succ r_n \succ r^{\star}; \quad r^{\star}: 
	h^{\star}_1\succ h^{\star}_2 \succ \tilde{h}_1; \quad h^{\star}_2: 
	r^{\star};
	\\
	\tilde{h}_1&: r^{\star} \succ \tilde{r}'_1 \succ \dots \succ 
	\tilde{r}'_{n+1} \succ \tilde{r}_1 \succ \dots \succ \tilde{r}_{n+1}; \\
	\tilde{h}_2&: \tilde{r}_1 \succ \dots \succ 
	\tilde{r}_{n+1} \succ \tilde{r}'_1 \succ \dots \succ \tilde{r}'_{n+1}; 
	\\
	\tilde{r}_i&: \tilde{h}_1 \succ \tilde{h}_2; \quad \tilde{r}'_i: 
	\tilde{h}_2 \succ \tilde{h}_1,  \qquad \qquad \qquad  i\in 
	[n+1].
	\end{align*}
	Profile $\mathcal{P}_2$ equals 
	$\mathcal{P}_1$ except that we swap $h_2^{\star}$ and 
	$\tilde{h}_1$ in the preferences of $r^{\star}$. Let $i^{\star}$ be the 
	smallest index of a resident unmatched in $N$. We set $k:=2(n+1)$ and 
	\begin{align*}
	M_1:= N  & \cup  
	\{\{r_{i^{\star}},h^{\star}_1\},\{r^{\star},h^{\star}_2\}\} \cup \{\{\tilde{r}_i,\tilde{h}_1\}, \{\tilde{r}'_i,\tilde{h}_2\}\mid 
	i\in [n+1]\}.
	\end{align*}
	
	It is easy to see that the matching $M_1$ is stable. There is clearly no 
	blocking pair involving an original agent, all residents $\tilde{r}_i$ 
	and  $\tilde{r}'_i$ are matched to their top choice, and $r^{\star}$ does 
	not form a blocking pair with $h^{\star}_1$ as $h^{\star}_1$ is matched 
	better.
	
	Note that the construction of the instance of \IHRT can be done in polynomial time since the matching~$N$ can be computed in linear time by the Gale-Shapley algorithm.
	
	It remains to prove that the given instance $\mathcal{I}$ of \textsc{Com 
		HRT} is a yes-instance if and only if the constructed 
		instance~$\mathcal{I}'$ of \IHRT is a yes-instance.
		
	$(\Rightarrow):$ Assume that $N^*$ is a perfect stable matching in 
	$\mathcal{I}$. Let $M_2$ be the following matching: 
	\begin{align*}
	M_2= & N^*
	\cup  \{\{r^{\star},h^{\star}_1\}\}
	 \cup \{\{\tilde{r}_i,\tilde{h}_1\}, \{\tilde{r}'_i,\tilde{h}_2\}\mid 
	i\in [n+1]\}.
	\end{align*}
	Matching $M_2$ is stable in $\mathcal{P}_2$, as $N^*$ is stable in $\mathcal{P}$ and all 
	residents from the penalizing component are matched to their top choice. 
	Furthermore, it holds that $|M_1 \triangle M_2|=|N^*\triangle 
	N|+|\{\{r_{i^{\star}},h^{\star}_1\},\{r^{\star},h^{\star}_2\},\{r^{\star},h^{\star}_1\}|
	\}\leq 2(n+1)$.
	
	$(\Leftarrow):$ Let $M_2$ be a solution to the constructed \IHRT instance 
	$\mathcal{I}'$. We claim that it needs to hold that 
	$\{r^{\star},h^{\star}_1\} \in M_2$. For the sake of contradiction, 
	assume that this is not the case. Then, as $r^{\star}$ is the top choice of 
	$\tilde{h}_1$, which is the second-most preferred hospital of $r^{\star}$ (after $h^{\star}_1$), 
	it needs to hold that 
	$\{r^{\star}, \tilde{h}_1 \}\in M_2$. However, as both $\tilde{h}_1$ and 
	$\tilde{h}_2$ have upper quota $n+1$, this implies that one 
	resident among the residents $\tilde{r}_1, \dots ,\tilde{r}_{n+1}$ and 
	$\tilde{r}'_1, \dots 
	,\tilde{r}'_{n+1}$ needs to be unmatched in $M_2$. This needs to be 
	resident $\tilde{r}'_{n+1}$, as all other of these residents appear in one 
	of the first 
	$n+1$ positions in the preferences of either $\tilde{h}_1$ or 
	$\tilde{h}_2$. For $\tilde{r}'_{n+1}$ to be unmatched and not to form a 
	blocking pair with $\tilde{h}_1$ or 
	$\tilde{h}_2$, it needs to 
	hold that for all $i\in [n]$, $\{\tilde{r}'_i,\tilde{h}_1\}\in M_2$ 
	and thereby also that for all $i\in [n+1]$, $\{\tilde{r}_i,\tilde{h}_2\}\in 
	M_2$. This in turn implies that $|M_1\triangle M_2|\geq
	|\{\{\tilde{r}_i,\tilde{h}_1\}, \{\tilde{r}'_i,\tilde{h}_2\}\mid 
	i\in [n+1]\}\cup \{\{\tilde{r}'_i,\tilde{h}_1\}, 
	\{\tilde{r}_i,\tilde{h}_2\}\mid 
	i\in [n]\}\cup \{\{r^{\star},h^{\star}_2\},\{r^{\star},\tilde{h}_1\},  
	\{\tilde{r}_{n+1},\tilde{h}_2\}\}|>2(n+1)$, a contradiction. Thus, it needs 
	to hold that $\{r^{\star},h^{\star}_1\}\in M_2$. This implies that 
	all original residents need to be matched to original hospitals, as an 
	unmatched original resident would form a blocking pair together 
	with~$h^{\star}_1$. 
	Consequently, matching $M_2$ restricted to original agents induces a 
	perfect stable matching in the given \text{COM HRT} instance $\mathcal{I}$.
\end{proof}

We leave open whether the (above shown) W[1]-hardness of \IHRT upholds for the 
parameter~$m+k+{|\mathcal{P}_1\oplus \mathcal{P}_2|}$. 

On the positive side, devising an Integer Linear Program whose number of 
variables is upper-bounded in a function of $m$ and some guessing as preprocessing, 
\IHRT admits an XP-algorithm for the number $m$ of hospitals:
\begin{restatable}{proposition}{XPHR}
	\label{pr:XPHR}
	\IHRT is in XP when parameterized by the number~$m$ of hospitals.
\end{restatable}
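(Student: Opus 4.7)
The plan is to pre-guess, in $n^{O(m)}$ time, a small structural invariant of the target matching~$M_2$ on each hospital, and then express the residual problem as an integer linear program (ILP) whose number of variables depends only on~$m$; Lenstra's algorithm then yields the claimed XP running time.

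\emph{Guessing step.} For every hospital $h\in H$, I would guess a ``threshold'' $B_h$ in $h$'s preference relation in $\mathcal{P}_2$: either one of the at most~$n$ tie-blocks of $h$'s weak order over residents (to be interpreted as the block containing the worst resident matched to~$h$ in~$M_2$), or a distinguished symbol ``slack'' (meaning $h$ is not fully matched in~$M_2$). There are at most $(n+1)^m$ such guesses.

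\emph{Type classification.} Having fixed a guess, I call two residents of the same \emph{type} if they coincide in (i)~their weak preference order over~$H$ in~$\mathcal{P}_2$, (ii)~their $M_1$-partner (either a hospital or~$\emptyset$), and (iii)~for every $h\in H$, their relative position with respect to~$B_h$, chosen from $\{\text{unacceptable},\text{strictly below},\text{in},\text{strictly above}\}$. Since there are $(m+1)^m$ weak orders over $m$ hospitals, $m+1$ possible $M_1$-partners, and $4^m$ position profiles, the number of types is bounded by some $f(m)$. Residents of the same type are genuinely interchangeable: within any given hospital's tie-block they are ranked equally, and since their $M_1$-partners coincide, any permutation among them preserves both stability and the value of $|M_1\triangle M_2|$.

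\emph{ILP.} For each type~$\tau$ and each $h\in H\cup\{\emptyset\}$, I introduce a non-negative integer variable $x_{\tau,h}$ counting the residents of type~$\tau$ matched to~$h$ in~$M_2$. The total is $(m+1)f(m)$ variables, still a function of~$m$. I then impose polynomially many linear constraints: (a)~each type is fully distributed, $\sum_h x_{\tau,h}=n_\tau$, where $n_\tau$ is the number of residents of type~$\tau$; (b)~capacities $\sum_\tau x_{\tau,h}\le u(h)$, with equality whenever $B_h$ is a tie-block; (c)~$x_{\tau,h}=0$ whenever $\tau$ is strictly below~$B_h$, unacceptable to~$h$, or does not accept~$h$; (d)~for stability, $x_{\tau,h'}=0$ whenever $\tau$ is strictly above~$B_h$ (or $B_h=$``slack'' and $\tau$ is acceptable to~$h$) and $\tau$ strictly prefers $h$ to $h'$ in~$\mathcal{P}_2$; and (e)~the symmetric-difference bound $2\sum_{\tau:\,M_1(\tau)\ne\emptyset} x_{\tau,M_1(\tau)}\ge |M_1|+\sum_{\tau,h\in H}x_{\tau,h}-k$. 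The ILP is solved by Lenstra's algorithm in $f(m)^{O(f(m))}\cdot\mathrm{poly}(n,m)$ time, for overall running time $n^{g(m)}$.

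\emph{Correctness and main obstacle.} For the correct guess, in which each~$B_h$ is the block of the worst matched resident of~$M_2(h)$ (or ``slack'' when $h$ is not full), any valid~$M_2$ induces feasible ILP values. Conversely, interchangeability lets me materialize a stable matching from a feasible~$x$ by assigning type-$\tau$ residents to hospitals so as to maximize the overlap with~$M_1$, guaranteeing $|M_1\triangle M_2|\le k$. The hard part is encoding stability linearly: because stability refers to individual hospital rankings, a naive type-based ILP loses too much information. Pre-guessing~$B_h$ reduces each hospital's preference structure to only four equivalence classes relative to its worst matched resident, after which stability collapses into a conjunction of $x_{\tau,h'}=0$ equalities, fitting squarely within an ILP.
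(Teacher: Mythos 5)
Your proposal is correct and follows essentially the same route as the paper's proof: guess, in $n^{\mathcal{O}(m)}$ time, per-hospital information about the worst resident matched in $M_2$ (the paper guesses the set of open hospitals and a worst resident $r_h$; you guess its tie-block or a ``slack'' flag, which carries the same information), classify residents into a number of types bounded by a function of $m$ via their preference order, $M_1$-partner, and position relative to the guessed thresholds, and solve the resulting ILP with boundedly many variables by Lenstra's algorithm. The differences (three-valued $z^r_h$ versus your four-valued position profile, and indexing the $M_1$-partner separately versus folding it into the type) are cosmetic.
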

\begin{proof}
 We construct an algorithm very similar to the XP algorithm for the number of hospitals of
\citet[Proposition 10]{DBLP:journals/corr/abs-2009-14171} for the \textsc{Hospital 
Residents Problem with Ties and Lower and 
Upper Quotas}. In the following, we describe the main ideas of the algorithm for 
our problem. 
 
We start by guessing the subset $H_{\text{open}}\subseteq H$ of hospitals to  
which we assign at least one resident in the matching to be found. Further, we 
guess for each hospital $h\in H_{\text{open}}$ a worst resident $r_h$ 
matched to it (i.e., $h$ weakly prefers all residents matched to it to $r_h$). For 
each resident $r$ and hospital $h\in H_{\text{open}}$, let $z_r^h$ be $1$ if 
$h$ strictly prefers $r$ to $r_h$, $0$ if $h$ is indifferent among $r$ and 
$r_h$, and $-1$ if $h$ strictly prefers $r_h$ to $r$. Now, for the purpose of 
solving our problem, each resident is 
fully 
characterized by the hospital $h$ it is matched to in $M_1$, its preferences 
over hospitals (there are $\mathcal{O}(m\cdot m!\cdot 2^m)$ possibilities) and 
by $(z^r_h)_{h\in H_{\text{open}}}$ (there are $3^m$ possibilities). Thus, we 
can bound the number of different ``types'' of residents in 
$\mathcal{O}(m^2\cdot 
m!\cdot 2^m\cdot 3^m)$. Using this, one can formulate the problem as 
an Integer Linear Program (ILP) whose number of variables is 
upper-bounded in a function of $m$, and subsequently employ 
Lenstra's algorithm \citep{DBLP:journals/mor/Kannan87,DBLP:journals/mor/Lenstra83}. 
For this, let 
$\succsim_1,\dots, \succsim_q$ be a list of all weak incomplete orders over $H$. 
For each $\mathbf{z}\in \{-1,0,1\}^{|H_{\text{open}}|}$, $i\in [q]$, and 
$h\in H$, let $n_{i,h}^{\mathbf{z}}$ be the number of residents $r\in R$ with 
preference order $\succsim_i$ and $(z^r_h)_{h\in H_{\text{open}}}=\mathbf{z}$ 
who are matched to $h$ in $M_1$.
For each $\mathbf{z}\in \{-1,0,1\}^{|H_{\text{open}}|}$, $i\in [q]$, and 
$h,h'\in H$, we introduce a variable $x_{i,h,h'}^{\mathbf{z}}$ which denotes 
the number of residents $r\in R$ with preference order $\succsim_i$ and 
$(z^r_h)_{h\in H_{\text{open}}}=\mathbf{z}$ who are matched to $h$ in $M_1$ 
and to $h'$ in the matching to be found. To minimize the size of the symmetric 
difference between $M_1$ and the matching to be found, note that $|M_1 \triangle M | = |M_1 | + |M| - 2|M_1 \cap M|$; thus, we minimize
\begin{align*}
 \sum_{h\in H, h'\in H_{\text{open}}, i\in [q], \mathbf{z}\in 
\{1,0,-1\}^{H_{\text{open}}}} x_{i,h,h'}^{\mathbf{z}} 
-2\sum_{h\in H_{\text{open}}, i\in [q], \mathbf{z}\in 
\{1,0,-1\}^{H_{\text{open}}}} x_{i,h,h}^{\mathbf{z}}.
\end{align*}
Lastly, we add linear 
constraints ensuring that the current guess is respected and that the resulting 
matching is feasible and stable as done by  
\citet[Proposition 10]{DBLP:journals/corr/abs-2009-14171}.
\end{proof}

In an \SMT instance, we say that two agents are of the 
same agent type if they have the same 
preference relation and all other agents are indifferent between them.
One can interpret a hospital in an instance of \HRT as~$u(h)$ agents 
of the same type and thus an \HRT instance as an instance of \SMT where 
agents from one side are of only $m$ different agent types. This interpretation 
raises the question what happens when we parameterize \ISMT\ by the total 
number of agent types on both sides (and not only by the number of agent types 
on one of the sides as done in \Cref{th:IHRWm,pr:XPHR}). We show that, in fact, this is 
enough to 
establish fixed-parameter tractability: 

\begin{restatable}{proposition}{ISRFPTagentty}\label{pr:FPTISMT}
	\ISMT\ is solvable in 
	$\mathcal{O}(2^{(t_U+1)\cdot (t_W+1)} 
	\cdot n^{2.5})$ time, where $t_U$ respectively $t_W$ is the number of agent types of men respectively 
women in $\mathcal{P}_2$.
\end{restatable}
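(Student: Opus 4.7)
The plan is to exploit the fact that in~$\mathcal{P}_2$ any two agents of the same type are completely interchangeable: they have identical preferences and every other agent is indifferent between them. Consequently, whether a matching~$M_2$ is stable in $\mathcal{P}_2$ depends only on its \emph{type profile} $T(M_2) \subseteq \{0,1,\dots,t_U\}\times\{0,1,\dots,t_W\}$, where $(i,j)$ with $i,j\ge 1$ belongs to $T(M_2)$ iff some man of type~$i$ is matched in~$M_2$ to some woman of type~$j$, and $(i,0)$ (respectively, $(0,j)$) belongs to~$T(M_2)$ iff some man of type~$i$ (respectively, woman of type~$j$) is unmatched in~$M_2$. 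There are at most $2^{(t_U+1)(t_W+1)}$ such profiles, which accounts for the exponential factor.

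The algorithm loops over every candidate profile~$T$. In each iteration I perform two phases. First, I check stability in time polynomial in $t_U+t_W$: a pair $(i,j)$ with $i,j\ge 1$ is \emph{man-side blocking} iff either $(i,0)\in T$ or some $(i,j'')\in T$ satisfies $j\succ_i j''$, with a symmetric definition for the woman side; the profile~$T$ is stable iff no pair is blocking on both sides simultaneously. Second, if $T$ is stable, I find a matching~$M_2$ with type profile exactly~$T$ minimizing $|M_1\triangle M_2|$, and record it as a candidate; the overall output is the best candidate across all~$T$.

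The computational core is the second phase. Using $|M_1\triangle M_2|=|M_1|+|M_2|-2|M_1\cap M_2|$, each retained $M_1$-edge contributes~$-1$ and every other edge of~$M_2$ contributes~$+1$ to the symmetric difference. After restricting the bipartite graph of acceptable pairs to edges whose type lies in~$T$, this becomes a minimum-weight bipartite matching problem with $\pm 1$ edge weights, subject to the polynomially many lower-bound constraints ``at least one edge of each $(i,j)\in T$ with $i,j\ge 1$'' and ``at least one agent of each unmatched-type entry of~$T$ stays free''. These $\mathcal{O}((t_U+1)(t_W+1))$ lower bounds can be absorbed by pre-committing one arbitrary representative edge per such constraint; what remains is a plain weighted bipartite matching instance on $\mathcal{O}(n)$ agents and $\mathcal{O}(n^2)$ admissible pairs, solvable in $\mathcal{O}(n^{2.5})$ time (for example, by reducing to Weighted Stable Marriage and invoking Feder's algorithm~\citep{DBLP:journals/jcss/Feder92}, exactly as in the proof of \Cref{ob:IHRpoly}).

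The main obstacle I expect is verifying that the pre-commitment step cannot destroy optimality: because the lower-bound constraints are symmetric within types, if an optimal $M_2$ uses a particular representative for a type-pair in~$T$, one can substitute any equivalent representative of the same type without changing the cost, so fixing arbitrary representatives is without loss of generality. A secondary concern is correctness of the type-level stability check, which should follow directly from the interchangeability of same-type agents in~$\mathcal{P}_2$. With both handled, the total running time is $2^{(t_U+1)(t_W+1)}$ iterations, each costing $\mathcal{O}(n^{2.5})$, matching the claimed bound.
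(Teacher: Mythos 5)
Your high-level plan (enumerate the $2^{(t_U+1)(t_W+1)}$ type-level structures, check stability at the type level, then solve a weighted bipartite matching problem per structure) matches the paper's, but the way you enforce the guessed structure has a genuine gap. You require the matching to realize the profile $T$ \emph{exactly} and propose to handle the resulting lower-bound constraints by ``pre-committing one arbitrary representative edge per constraint,'' justified by the claim that substituting one representative of a type-pair for another does not change the cost. That claim is false: agents of the same type are interchangeable with respect to stability in $\mathcal{P}_2$, but not with respect to the objective $|M_1\triangle M_2|$, which depends on the identities of the agents through $M_1$. For instance, if $m_1,m_2$ have the same type $i$ and $w_1,w_2$ the same type $j$, with $\{m_1,w_1\},\{m_2,w_2\}\in M_1$, then pre-committing the $(i,j)$-representative $\{m_1,w_2\}$ contributes $+1$ to the cost while $\{m_1,w_1\}$ contributes $-1$; since you fix one arbitrary representative rather than optimizing over the choice, you can miss the optimum. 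A second, related omission: your constraint list lacks the condition that \emph{every} agent of a type $i$ with $(i,0)\notin T$ must be matched. Since each non-$M_1$ edge increases $|M_1\triangle M_2|$, the unconstrained minimizer will happily leave such agents unmatched; the realized profile then differs from $T$, your type-level stability certificate no longer applies, and the returned matching can be blocked (e.g., by an unmatched agent whose type was not supposed to contain unmatched agents).

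The fix --- and the route the paper takes --- is to observe that exactness is not needed. Treat the guessed structure only as an \emph{upper bound} on which type-pairs may be matched (including, for each type, whether its agents may remain unmatched, encoded via an extra dummy type on each side), reject the guess if \emph{some} compatible matching could be unstable, and otherwise compute a maximum-weight \emph{perfect} matching in the agent-level graph restricted to the allowed type-pairs (weight $+1$ for $M_1$-edges, $-1$ for other real edges, $0$ for dummy edges). Perfectness together with the dummy types replaces both of your lower-bound constraints and the missing ``must be matched'' constraint, no representatives need to be fixed, and completeness holds because the exact realized type-graph of an optimal stable matching is among the enumerated guesses and passes the conservative check. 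One further small point: the per-guess subproblem is ordinary maximum-weight bipartite matching (solvable in $\mathcal{O}(n^{2.5})$ time for bounded weights), not \textsc{Weighted Stable Marriage}, so Feder's algorithm is the wrong tool to invoke here; you also tacitly omit the acceptability requirement in your type-level blocking-pair test.
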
 
\begin{proof}
	Let $T_U$ respectively $T_W$ be the set of 
	agent types of men respectively women in $\mathcal{P}_2$ with $t_W:=|T_W|$ and 
	$t_U:=|T_U|$, $n_m$ the 
	number of men, $n_w$ the number of women, and, for $\alpha\in T_U\cup 
	T_W$, let $\succsim_{\alpha}$ be the preference 
	relation of agents of type~$\alpha$ and $A_\alpha\subseteq A$ be the agents of 
	type~$\alpha$.
	Slightly abusing notation, for types $\alpha, \beta,\gamma \in T_U\cup 
	T_W$, we 
	write $\beta \succsim_{\alpha} \gamma$ if agents of type $\alpha$ (weakly) 
	prefer 
	agents of type~$\beta$ to agents of type $\gamma$. To simplify the algorithm, we modify the given instance by adding (to $T_U$ and $U$) a dummy men type consisting of 
	$n_w$ men who are indifferent among all women in $\mathcal{P}_1$ and $\mathcal{P}_2$ and (to $T_W$ and $W$) a dummy women type 
	consisting of~$n_m$~women who are indifferent among all men  in $\mathcal{P}_1$ and $\mathcal{P}_2$. 
	We insert the dummy men type at the end of the preferences of all women and 
	the dummy women type at the end of the preferences of all men in $\mathcal{P}_1$ and $\mathcal{P}_2$. 
	
	The algorithm iterates over 
	all undirected bipartite graphs~$G$ on $T_U\cupdot T_W$ (their number 
	is~$2^{(t_U + 1)\cdot (t_W + 1)}$). 
	We say that a matching $M$ is \emph{compatible} with $G$ 
	if 
	agents 
	of 
	type~$\alpha\in T_U$ and $\beta\in T_W$ are only matched to each other by 
	$M$ 
	if there 
	is an edge 
	between 
	$\alpha$ and~$\beta$ in~$G$.
	We reject~$G$ if a matching~$M$ compatible with~$G$ can be unstable. 
	To 
	be precise, we reject $G$ if there are two types~$\alpha\in T_U$ 
	and $\beta\in T_W$ such 
	that 
	there is an edge between $\alpha$ and some~$\beta'\in T_W$ and an edge 
	between $\beta$ 
	and some $\alpha'\in T_U$ such that $\beta \succ_{\alpha} \beta'$ and~$\alpha \succ_{\beta} \alpha'$. 
	If $G$ is not rejected, then we construct a graph~$G^*$ on $A$ from it by 
	connecting agents of type $\alpha\in T_U$ and type~$\beta\in T_W$ if and 
	only if there is an edge between $\alpha$ and $\beta$ in~$G$. Moreover, 
	we give 
	all edges in $G^*$ that appear in~$M_1$ weight~$1$, all edges that do not appear in $M_1$ and do not contain a non-dummy agent weight~$-1$, and all remaining edges (i.e., those involving at least one dummy agent)
	weight~$0$.
	We compute a perfect maximum weight matching $M$ in~$G^*$ in 
	$\mathcal{O}(n^{2.5})$ time
	\citep{DBLP:conf/soda/DuanS12}
	and return yes if $M$ has weight at least~$ |M_1| - k$,
	and 
	otherwise continue with the next graph~$G$.
	
	Assume that the algorithm returns yes because we found a matching $M$. Let 
	$M_2$ be the matching~$M$ restricted to all agents which are not of one of 
	the two dummy types.
	We first prove that $M_2$ is always stable. As dummy types appear only at 
	the end of the preferences of each agent, it is sufficient to argue that $M$ is 
	stable. Assume for the 
	sake 
	of contradiction that the returned matching $M$ is blocked 
	by~$\{m, w\}$, where $m \in A_{\alpha}$ and $w\in A_{\beta}$ for some $\alpha\in T_U$ 
	and $\beta \in T_W$.
	Let $\beta'\in 
	T_W$ 
	be the type of woman~$M(m)$ and $\alpha'\in T_U$ the type of 
	man~$M(w)$. Then, $G$ contains edges $\{\alpha,\beta'\}$ and 
	$\{\alpha', \beta\}$ and as 
	$\{m , w\}$ blocks~$M$ it holds $\beta\succ_{\alpha} \beta'$ and 
	$\alpha\succ_{\beta} \alpha'$. Thus, 
	$G$ 
	was 
	rejected, a contradiction. Further, as $M$ is of weight at least 
	$| M_1| -k$, it 
	holds that $|M_1 \triangle M_2| = |M_1| - |M_1 \cap M_2| + |M_2 \setminus M_1| \le |M_1| - (|M_1| -k) = k$, where we use that $M$ has weight at least $|M_1| - k$ for the inequality.
	
	It remains to argue why the algorithm always finds a solution if one 
	exists. 
	Let $M_2$ be a stable matching in $\mathcal{P}_2$ with $|M_1\triangle M_2|\le 
	k$. 
	Let $M$ be the matching $M_2$ where we match all agents that are unmatched in $M_2$ to agents from the dummy 
	types. Further, let $G'$ be the graph on $T_U\cupdot T_W$ 
	where $\alpha\in T_U$ and $\beta \in T_W$ are connected if and only if an 
	agent of type $\alpha$ is 
	matched 
	to an agent of type $\beta$ in~$M$. If the graph $G$ would have been 
	rejected 
	because of types $\alpha'\in T_U$ and $\beta'\in T_W$, then agents of these 
	types form a blocking 
	pair 
	for $M$, a contradiction. Moreover, matching $M$ is clearly a perfect 
	matching of 
	weight at least
	$|M_1 \cap M_2| - |M_2 \setminus M_1| = |M_1| - |M_1 \triangle M_2| \ge |M_1| - k$
	in the constructed 
	graph~$G^*$.
\end{proof}
Notably, the above algorithm with minor modifications also works for the 
incremental variant of 
\textsc{Stable Roommates with ties}. 

\section{Experiments} \label{se:Experiments}
In this section, we consider different practical aspects of incremental stable matching problems. To keep the setup of our experiments simple, we focus on \ISM, our most basic model, and its variant \IASM.  

After having analyzed the theoretical relationship 
between 
different types of changes in \Cref{se:changes}, in this section, we compare 
the impact of the following three 
different types of changes (we always assume that all agents have complete and 
strict preferences): 
\begin{description}
	\item[\Ireors] A \Ireors operation consists of permuting the preference 
	list of an agent uniformly at random.
	\item[\Idelete] A \Idelete operation consists of deleting an agent from the 
	instance. 
	\item[\Iswap] A swap consists of swapping two adjacent agents in 
	the preference relation of an agent. As sampling preference profiles that 
	are at a certain swap distance from a given one turns out to be practically 
	infeasible with more than $40$~agents 
	\citep{DBLP:journals/corr/abs-2010-09678}, 
	we always perform the same number of swaps in the preferences of 
	each agent: If we are to perform $i$ \Iswap operations, then for each agent 
	separately we replace its 
	preferences by uniformly at random sampled preferences that are 
	at swap distance~$i$ from its original preferences (using the 
	procedure described by 
	\citet{DBLP:journals/corr/abs-2010-09678}).
\end{description}

In \Cref{sub:ILPsec}, we present formulations of \ISM and \IASM as Integer 
Linear Programs (ILPs) which we use to solve these problems. 
Our subsequent experiments are divided into three parts: 
In 
\Cref{sub:relation}, 
we analyze the relationship 
between the difference between  $\mathcal{P}_1$ and $\mathcal{P}_2$  and the 
size $|M_1\triangle M_2|$ of the  
symmetric difference 
between~$M_1$ and $M_2$ for different methods to compute a stable matching $M_2$ in~$\mathcal{P}_2$. 
In 
\Cref{se:SM_corbp}, we consider the relationship between the number of agent 
pairs that block $M_1$ in $\mathcal{P}_2$ and the minimum symmetric difference between 
$M_1$ and a stable matching $M_2$ in $\mathcal{P}_2$.
In \Cref{sub:almost}, we study the trade-off 
between 
allowing $M_2$ to be blocked by some pairs and $|M_1\triangle M_2|$.

\subsection{(I)LP Formulation of \textsc{Incremental (Almost) Stable 
		Marriage}}\label{sub:ILPsec}
In our experiments, to compute the new matching $M_2$, we solve an (integer) 
linear programming formulation of \textsc{Incremental (Almost) Stable 
	Marriage} using \citet{gurobi}. We now start by presenting an ILP formulation for 
\IASM and afterwards point out how it can be adapted to an LP formulation for 
\ISM. We write $m\succsim^{\mathcal{P}_2}_w m'$ to denote that $w$ weakly prefers $m$ to 
$m'$ in profile $\mathcal{P}_2$. 

For our ILP, we introduce two binary variables 
$x_{m,w}$ and $y_{m,w}$ for each pair $(m,w)\in U\times W$. Setting $x_{m,w}$ to 
one corresponds to matching $m$ to $w$ in $M_2$ and setting $y_{m,w}$ to one 
allows $m$ and $w$ to form a blocking pair for the matching~$M_2$ in $\mathcal{P}_2$. Given an 
instance $\mathcal{I} = (A = U\cupdot W, \mathcal{P}_1, \mathcal{P}_2, M_1, k, 
b)$ of \IASM, we solve the following ILP: 
\begin{align}
	\min \sum_{(m,w)\in U \times W} x_{m,w} -2\cdot\sum_{(m,w)\in 
		M_1} x_{m,w} \text{ such that} \label{cond:obj} \\
	\sum_{\substack{w'\in W: \\ w'\succsim^{\mathcal{P}_2}_{m} w} } 
	x_{m,w'}+\sum_{\substack{m'\in U: \\ m'\succsim^{\mathcal{P}_2}_{w} m}  } 
	x_{m',w}\geq 1-y_{m,w} ,  \forall (m,w) \in U\times W \label{cond:bp}\\
	\sum_{w\in W} x_{m,w}\leq 1 ,  \forall m \in U \qquad  \sum_{m\in 
		U} x_{m,w}\leq 1 , \forall w \in W \label{cond:exact} \\
	\sum_{(m,w)\in U\times W} y_{m,w}\leq b \label{cond:bps}
	\end{align}%
From a solution to the ILP, we construct a matching $M_2$ by including a pair 
$(m,w)$ if and only if $x_{m,w}=1$. By Constraint (\ref{cond:exact}), in $M_2$, 
each agent is matched to at most one partner. Moreover, by Constraint 
(\ref{cond:bp}), a pair $(m,w)$ can only block $M_2$ if $y_{m,w}=1$, as 
in case $y_{m,w}=0$,  Constraint (\ref{cond:bp}) enforces that either $m$ or $w$ 
are matched to an agent they weakly prefer to $w$ or $m$, respectively. In case 
$y_{m,w}=1$, Constraint (\ref{cond:bp}) is trivially fulfilled for this man-woman 
pair. Constraint (\ref{cond:bps}) imposes that at most $b$ man-woman pairs may 
block $M_2$. It is easy to see that each matching 
$M$ that admits at most $b$ blocking pairs corresponds to a feasible solution of 
the ILP.
It remains to argue why $M_2$ minimizes the symmetric difference with $M_1$. 
This is ensured by Line (\ref{cond:obj}) which, as $|M_1|$ is fixed, is equivalent to minimizing $|M_1\triangle 
M_2|=|M_1| + | M_2| - 2|M_1\cap M_2|$. 

To adapt the ILP from above to \ISM, we need to set $b$ to zero. Moreover, we 
replace Line (\ref{cond:obj}) by $\max\sum_{(m,w)\in 
	M_1} x_{m,w}$ (if we want to find the solution minimizing the symmetric difference with $M_1$) and by 
$\min\sum_{(m,w)\in 
	M_1} x_{m,w}$ (if we want to find the solution maximizing the symmetric difference with~$M_1$), as the size 
of all stable matchings in $\mathcal{P}_2$ is the same. As proven by 
\citet[Theorem~1]{VANDEVATE1989147},
the LP relaxation of the resulting ILP has 
integral extreme points, implying that we can solve the formulation as an LP. 

\subsection{Relationship between $|\mathcal{P}_1\oplus \mathcal{P}_2|$ and $|M_1\triangle M_2|$}\label{sub:relationship}\label{sub:relation}
In this section, we analyze the relationship between the
number of 
changes that are applied to $\mathcal{P}_1$ to obtain $\mathcal{P}_2$ and the size of the
symmetric difference between the given matching $M_1$ that is stable in~$\mathcal{P}_1$ 
and a stable matching $M_2$ in $\mathcal{P}_2$ for different methods to compute $M_2$. We start in \Cref{sub:basic} by considering instances with $100$ agents having uniformly at random sampled preferences. Afterwards, we vary our setting in three different ways:
We analyze in \Cref{se:SM_relphi} the influence of the structure of the preferences on our results, and in 
\Cref{se:SM_numag} the influence of the number of agents. Lastly, 
in \Cref{sub:furthertypes}, with \Ireori and \Iadd we consider two further 
types of changes in addition to \Ireors, \Idelete, and \Iswap.

\subsubsection{Basic Setup: 100 Agents with Random Preferences} \label{sub:basic}
We start in this subsection with our basic setup with $100$ agents having random preferences, addressing the following fundamental question:
\paragraph{Research Question.} 
What is the relationship between the 
number of 
changes that are applied to $\mathcal{P}_1$ and the difference between $M_1$ 
and 
$M_2$? Can only few changes already require that the matching needs to be 
fundamentally restructured?

\paragraph{Experimental Setup.} For each of the three considered types of changes, for $r\in \{0,0.01,0.02,\dots , 0.3\}$ we sampled $200$ 
\textsc{Stable Marriage} instances with $50$ men and $50$ 
women with random preferences (collected in the preference profile 
$\mathcal{P}_1$). For each of these instances, we set $M_1$ to be the men-optimal matching.\footnote{We used the implementation of the 
Gale-Shapley algorithm of \citet{DBLP:journals/jossw/WildeKG20}.} Afterwards, 
we 
applied a uniformly at random sampled $r$-fraction of all possible changes 
of the considered type to profile $\mathcal{P}_1$ to obtain profile 
$\mathcal{P}_2$. Subsequently, we 
computed a stable matching~$M_2$ 
in $\mathcal{P}_2$ with minimum/maximum normalized symmetric difference 
$\frac{|M_1 
	\Delta 
	M_2|}{|M_1|+|M_2|}$ to $M_1$.\footnote{Note that the denominator is independent of the selected matchings $M_1$ and $M_2$, as all stable matching in $\mathcal{P}_1$ and $\mathcal{P}_2$ have the same size by the Rural Hospitals theorem.} We denote the solution with minimum 
symmetric difference as ``Best'' and the solution with maximum symmetric difference as ``Worst''. 
Moreover, we computed the men-optimal matching in $\mathcal{P}_2$ using the 
Gale-Shapley algorithm and denote this as ``Gale-Shapley''. The results of this 
experiment are depicted in \Cref{fig:changeadjustM}. 

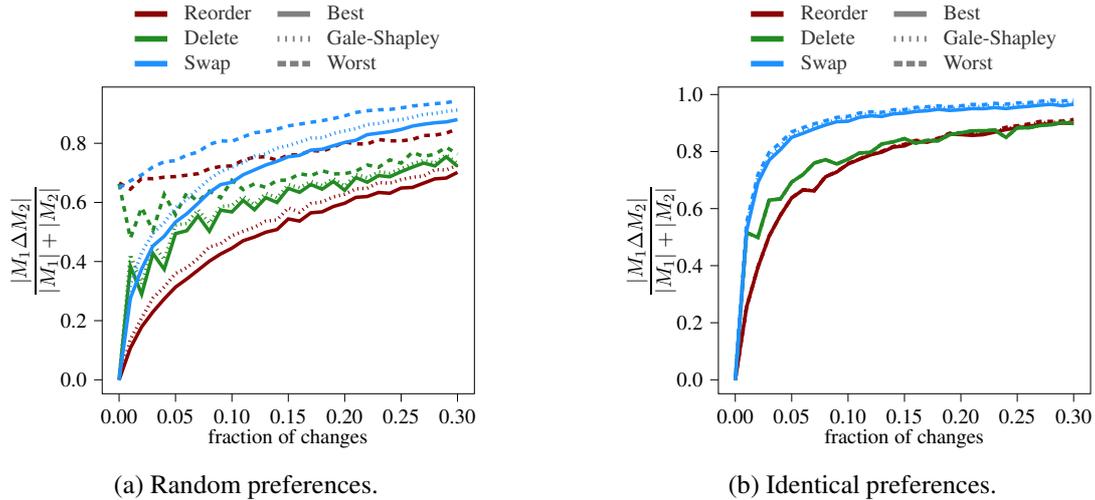
\begin{figure*}[t!]
	\centering
		\begin{subfigure}[t]{0.5\textwidth}
			\centering
			\resizebox{0.8\textwidth}{!}{
\begin{tikzpicture}[every plot/.append style={line width=1.85pt}]

\definecolor{color0}{rgb}{1,0.549019607843137,0}
\definecolor{color1}{rgb}{0.133333333333333,0.545098039215686,0.133333333333333}
\definecolor{color2}{rgb}{0.117647058823529,0.564705882352941,1}

\begin{axis}[
legend columns=2, 
legend cell align={left},
legend style={
  fill opacity=0.8,
  draw opacity=1,
  draw=none,
  text opacity=1,
  at={(0.5,1.3)},
  line width=3pt,
  anchor=north,
   /tikz/column 2/.style={
  	column sep=10pt,
  }
},
legend entries={Reorder,
	Best,
	Delete, Gale-Shapley,
	Swap, Worst},
tick align=outside,
tick pos=left,
x grid style={white!69.0196078431373!black},
xlabel={fraction of changes},
xmin=-0.015, xmax=0.315,
xtick style={color=black},
xtick={-0.05,0,0.05,0.1,0.15,0.2,0.25,0.3,0.35},
xticklabels={−0.05,0.00,0.05,0.10,0.15,0.20,0.25,0.30,0.35},
y grid style={white!69.0196078431373!black},
ylabel={\(\displaystyle \frac{|M_1 \Delta M_2|}{|M_1|+|M_2|}\)},
ymin=-0.04715, ymax=0.99015,
ytick style={color=black},
ytick={-0.2,0,0.2,0.4,0.6,0.8,1},
yticklabels={−0.2,0.0,0.2,0.4,0.6,0.8,1.0}
]
\addlegendimage{red!54.5098039215686!black}
\addlegendimage{gray}
\addlegendimage{color1}
\addlegendimage{gray,dash pattern=on 1pt off 3pt on 1pt off 3pt}
\addlegendimage{color2}
\addlegendimage{gray,dash pattern=on 4.5pt off 2pt}
\addplot [semithick, red!54.5098039215686!black]
table {%
0 0
0.01 0.1074
0.02 0.1792
0.03 0.2297
0.04 0.2736
0.05 0.3135
0.06 0.3411
0.07 0.371
0.08 0.4012
0.09 0.4255
0.1 0.4452
0.11 0.4705
0.12 0.4829
0.13 0.4991
0.14 0.5083
0.15 0.5441
0.16 0.5368
0.17 0.5648
0.18 0.5681
0.19 0.5862
0.2 0.5971
0.21 0.6171
0.22 0.6199
0.23 0.6336
0.24 0.6309
0.25 0.6486
0.26 0.6506
0.27 0.6652
0.28 0.6796
0.29 0.6823
0.3 0.7015
};
\addplot [semithick, red!54.5098039215686!black, dotted]
table {%
0 0
0.01 0.1358
0.02 0.2089
0.03 0.2763
0.04 0.3137
0.05 0.3597
0.06 0.3793
0.07 0.4104
0.08 0.4493
0.09 0.4605
0.1 0.4874
0.11 0.5024
0.12 0.5162
0.13 0.5363
0.14 0.541
0.15 0.5813
0.16 0.5673
0.17 0.5959
0.18 0.6016
0.19 0.6186
0.2 0.6277
0.21 0.6462
0.22 0.6476
0.23 0.6674
0.24 0.665
0.25 0.6769
0.26 0.6851
0.27 0.689
0.28 0.7109
0.29 0.7096
0.3 0.7255
};
\addplot [semithick, red!54.5098039215686!black, dashed]
table {%
0 0.6657
0.01 0.6448
0.02 0.6808
0.03 0.6802
0.04 0.6853
0.05 0.6874
0.06 0.6921
0.07 0.6971
0.08 0.7137
0.09 0.7217
0.1 0.7235
0.11 0.7428
0.12 0.7531
0.13 0.7542
0.14 0.7427
0.15 0.7607
0.16 0.7606
0.17 0.7755
0.18 0.7732
0.19 0.7871
0.2 0.8046
0.21 0.798
0.22 0.7975
0.23 0.8126
0.24 0.8083
0.25 0.8081
0.26 0.8128
0.27 0.8262
0.28 0.8275
0.29 0.8342
0.3 0.8493
};
\addplot [semithick, color1]
table {%
0 0
0.01 0.384040404040404
0.02 0.289329004329004
0.03 0.427332211235009
0.04 0.373759292376744
0.05 0.494563619099295
0.06 0.504143512398841
0.07 0.554589448698961
0.08 0.502426732522235
0.09 0.573905153639833
0.1 0.567683336092197
0.11 0.606825735041023
0.12 0.575085616352826
0.13 0.616176146874139
0.14 0.599683675785477
0.15 0.647638135537083
0.16 0.634074371340054
0.17 0.66041118245951
0.18 0.644721316376642
0.19 0.670751958904147
0.2 0.641028657483255
0.21 0.683915194484519
0.22 0.668312328146821
0.23 0.690680137341877
0.24 0.686685510319235
0.25 0.704047602648017
0.26 0.719190781529461
0.27 0.73299023781085
0.28 0.723465031364003
0.29 0.754252040404031
0.3 0.722053033986666
};
\addplot [semithick, color1, dotted]
table {%
0 0
0.01 0.423737373737374
0.02 0.323440527726242
0.03 0.461357037660425
0.04 0.406059243284943
0.05 0.522904808735757
0.06 0.528963207536279
0.07 0.575184960083806
0.08 0.529680887448845
0.09 0.586187492981667
0.1 0.589527965794829
0.11 0.621176748261437
0.12 0.592347353746729
0.13 0.631024046597916
0.14 0.61530388213097
0.15 0.660755062359921
0.16 0.649979989794769
0.17 0.671130765622913
0.18 0.657591915277075
0.19 0.680393180300599
0.2 0.65500006799685
0.21 0.694790461280772
0.22 0.683425055329054
0.23 0.703366698712799
0.24 0.699672412574487
0.25 0.713912422300769
0.26 0.730348768703809
0.27 0.742169634715813
0.28 0.732485004611042
0.29 0.761560053554552
0.3 0.731242317648744
};
\addplot [semithick, color1, dashed]
table {%
0 0.6628
0.01 0.478484848484848
0.02 0.582835497835498
0.03 0.504855880496529
0.04 0.626915917140052
0.05 0.558730760535359
0.06 0.632434587167084
0.07 0.603302937874026
0.08 0.638176110966705
0.09 0.615744092097852
0.1 0.674837638397751
0.11 0.643149035967961
0.12 0.663225755182096
0.13 0.650556847519917
0.14 0.674608123860003
0.15 0.675499821807172
0.16 0.697361291714908
0.17 0.689967371900531
0.18 0.707081423627005
0.19 0.696232474051749
0.2 0.697167257359768
0.21 0.705629832646813
0.22 0.72660832843458
0.23 0.718692973036773
0.24 0.743979825132144
0.25 0.732955932888645
0.26 0.76977419121957
0.27 0.755249303792746
0.28 0.763221622881146
0.29 0.786612441000109
0.3 0.763252744761765
};
\addplot [semithick, color2]
table {%
0 0
0.01 0.2758
0.02 0.3751
0.03 0.4529
0.04 0.4856
0.05 0.5334
0.06 0.5616
0.07 0.597
0.08 0.6338
0.09 0.6601
0.1 0.6682
0.11 0.6936
0.12 0.71
0.13 0.7264
0.14 0.7389
0.15 0.7544
0.16 0.7591
0.17 0.7768
0.18 0.7797
0.19 0.7916
0.2 0.8027
0.21 0.8103
0.22 0.8292
0.23 0.8342
0.24 0.8389
0.25 0.8466
0.26 0.8581
0.27 0.8644
0.28 0.8693
0.29 0.8722
0.3 0.8802
};
\addplot [semithick, color2, dotted]
table {%
0 0
0.01 0.3229
0.02 0.4302
0.03 0.4954
0.04 0.532
0.05 0.5822
0.06 0.6102
0.07 0.6452
0.08 0.6788
0.09 0.704
0.1 0.7191
0.11 0.7343
0.12 0.7562
0.13 0.764
0.14 0.7836
0.15 0.792
0.16 0.8044
0.17 0.8172
0.18 0.8184
0.19 0.8332
0.2 0.8417
0.21 0.85
0.22 0.8632
0.23 0.8671
0.24 0.8759
0.25 0.8801
0.26 0.8878
0.27 0.8977
0.28 0.9009
0.29 0.9079
0.3 0.9123
};
\addplot [semithick, color2, dashed]
table {%
0 0.649
0.01 0.6731
0.02 0.693
0.03 0.72
0.04 0.7374
0.05 0.7412
0.06 0.7601
0.07 0.7702
0.08 0.7963
0.09 0.8082
0.1 0.8067
0.11 0.8219
0.12 0.8355
0.13 0.84
0.14 0.8478
0.15 0.8589
0.16 0.8647
0.17 0.8709
0.18 0.8765
0.19 0.8837
0.2 0.8928
0.21 0.9042
0.22 0.91
0.23 0.9133
0.24 0.9145
0.25 0.9195
0.26 0.9261
0.27 0.9298
0.28 0.934
0.29 0.937
0.3 0.9447
};
\end{axis}

\end{tikzpicture}}
			\caption{Random preferences.}\label{fig:changeadjustM}
		\end{subfigure}%
		~ 
		\begin{subfigure}[t]{0.5\textwidth}
			\centering
			\resizebox{0.8\textwidth}{!}{
\begin{tikzpicture}[every plot/.append style={line width=1.85pt}]

\definecolor{color0}{rgb}{1,0.549019607843137,0}
\definecolor{color1}{rgb}{0.133333333333333,0.545098039215686,0.133333333333333}
\definecolor{color2}{rgb}{0.117647058823529,0.564705882352941,1}

\begin{axis}[
legend columns=2, 
legend cell align={left},
legend style={
  fill opacity=0.8,
  draw opacity=1,
  draw=none,
  text opacity=1,
  at={(0.5,1.3)},
  line width=3pt,
  anchor=north,
   /tikz/column 2/.style={
  	column sep=10pt,
  }
},
legend entries={Reorder,
	Best,
	Delete,
	Gale-Shapley,
	Swap, Worst},
tick align=outside,
tick pos=left,
x grid style={white!69.0196078431373!black},
xlabel={fraction of changes},
xmin=-0.015, xmax=0.315,
xtick style={color=black},
xtick={-0.05,0,0.05,0.1,0.15,0.2,0.25,0.3,0.35},
xticklabels={−0.05,0.00,0.05,0.10,0.15,0.20,0.25,0.30,0.35},
y grid style={white!69.0196078431373!black},
ylabel={\(\displaystyle \frac{|M_1 \Delta M_2|}{|M_1|+|M_2|}\)},
ymin=-0.048865, ymax=1.026165,
ytick style={color=black},
ytick={-0.2,0,0.2,0.4,0.6,0.8,1,1.2},
yticklabels={−0.2,0.0,0.2,0.4,0.6,0.8,1.0,1.2}
]
\addlegendimage{red!54.5098039215686!black}
\addlegendimage{gray}
\addlegendimage{color1}
\addlegendimage{gray,dash pattern=on 1pt off 3pt on 1pt off 3pt}
\addlegendimage{color2}
\addlegendimage{gray,dash pattern=on 4.5pt off 2pt}
\addplot [semithick, red!54.5098039215686!black]
table {%
0 0
0.01 0.2549
0.02 0.3937
0.03 0.5037
0.04 0.5769
0.05 0.6377
0.06 0.6653
0.07 0.6621
0.08 0.7114
0.09 0.7283
0.1 0.7563
0.11 0.7727
0.12 0.7884
0.13 0.7981
0.14 0.8154
0.15 0.8199
0.16 0.8366
0.17 0.8326
0.18 0.8431
0.19 0.8609
0.2 0.8595
0.21 0.8566
0.22 0.8593
0.23 0.8681
0.24 0.8769
0.25 0.8833
0.26 0.8881
0.27 0.8957
0.28 0.8966
0.29 0.8998
0.3 0.8985
};
\addplot [semithick, red!54.5098039215686!black, dotted]
table {%
0 0
0.01 0.2549
0.02 0.3948
0.03 0.5037
0.04 0.5771
0.05 0.6377
0.06 0.6659
0.07 0.6635
0.08 0.7123
0.09 0.7284
0.1 0.7582
0.11 0.7732
0.12 0.7899
0.13 0.7998
0.14 0.8168
0.15 0.8229
0.16 0.8386
0.17 0.8343
0.18 0.8451
0.19 0.8632
0.2 0.8622
0.21 0.8605
0.22 0.8629
0.23 0.8713
0.24 0.8805
0.25 0.8864
0.26 0.8939
0.27 0.9014
0.28 0.9013
0.29 0.905
0.3 0.9046
};
\addplot [semithick, red!54.5098039215686!black, dashed]
table {%
0 0
0.01 0.2549
0.02 0.3948
0.03 0.5037
0.04 0.5771
0.05 0.6379
0.06 0.6664
0.07 0.6639
0.08 0.7125
0.09 0.7293
0.1 0.7582
0.11 0.7738
0.12 0.7916
0.13 0.8011
0.14 0.8178
0.15 0.8258
0.16 0.8392
0.17 0.8382
0.18 0.847
0.19 0.8645
0.2 0.8662
0.21 0.8643
0.22 0.8669
0.23 0.874
0.24 0.8851
0.25 0.8906
0.26 0.8967
0.27 0.9057
0.28 0.9059
0.29 0.907
0.3 0.9126
};
\addplot [semithick, color1]
table {%
0 0
0.01 0.516060606060606
0.02 0.499118738404453
0.03 0.630066273932253
0.04 0.633737442141805
0.05 0.693275965364442
0.06 0.719395771521917
0.07 0.759806946049922
0.08 0.771577531003174
0.09 0.756490884010434
0.1 0.773248842567323
0.11 0.794843097280506
0.12 0.797476431110513
0.13 0.826747048303055
0.14 0.833993206010694
0.15 0.845435246821788
0.16 0.829351308866196
0.17 0.838487438318329
0.18 0.837917850889292
0.19 0.859237244333588
0.2 0.865313362650719
0.21 0.87282532473521
0.22 0.872930057835294
0.23 0.875075290546025
0.24 0.849611371067506
0.25 0.883598356510554
0.26 0.883427178228868
0.27 0.891212048036741
0.28 0.893934382140931
0.29 0.901993610917346
0.3 0.898349093195479
};
\addplot [semithick, color1, dotted]
table {%
0 0
0.01 0.516060606060606
0.02 0.499118738404453
0.03 0.630066273932253
0.04 0.633737442141805
0.05 0.693275965364442
0.06 0.719395771521917
0.07 0.759806946049922
0.08 0.771577531003174
0.09 0.756490884010434
0.1 0.773248842567323
0.11 0.794843097280506
0.12 0.797476431110513
0.13 0.826747048303055
0.14 0.833993206010694
0.15 0.845435246821788
0.16 0.829351308866196
0.17 0.838487438318329
0.18 0.837917850889292
0.19 0.859237244333588
0.2 0.865313362650719
0.21 0.87282532473521
0.22 0.872930057835294
0.23 0.875075290546025
0.24 0.849611371067506
0.25 0.883598356510554
0.26 0.883427178228868
0.27 0.891212048036741
0.28 0.893934382140931
0.29 0.901993610917346
0.3 0.898349093195479
};
\addplot [semithick, color1, dashed]
table {%
0 0
0.01 0.516060606060606
0.02 0.499118738404453
0.03 0.630066273932253
0.04 0.633737442141805
0.05 0.693275965364442
0.06 0.719395771521917
0.07 0.759806946049922
0.08 0.771577531003174
0.09 0.756490884010434
0.1 0.773248842567323
0.11 0.794843097280506
0.12 0.797476431110513
0.13 0.826747048303055
0.14 0.833993206010694
0.15 0.845435246821788
0.16 0.829351308866196
0.17 0.838487438318329
0.18 0.837917850889292
0.19 0.859237244333588
0.2 0.865313362650719
0.21 0.87282532473521
0.22 0.872930057835294
0.23 0.875075290546025
0.24 0.849611371067506
0.25 0.883598356510554
0.26 0.883427178228868
0.27 0.891212048036741
0.28 0.893934382140931
0.29 0.901993610917346
0.3 0.898349093195479
};
\addplot [semithick, color2]
table {%
0 0
0.01 0.5224
0.02 0.6914
0.03 0.7699
0.04 0.8069
0.05 0.8498
0.06 0.8641
0.07 0.8795
0.08 0.8932
0.09 0.9047
0.1 0.9062
0.11 0.92
0.12 0.9252
0.13 0.9231
0.14 0.9322
0.15 0.9346
0.16 0.9406
0.17 0.943
0.18 0.9484
0.19 0.9441
0.2 0.9468
0.21 0.9509
0.22 0.9503
0.23 0.9542
0.24 0.9507
0.25 0.9554
0.26 0.9585
0.27 0.9625
0.28 0.9657
0.29 0.9606
0.3 0.966
};
\addplot [semithick, color2, dotted]
table {%
0 0
0.01 0.5365
0.02 0.7053
0.03 0.7817
0.04 0.8193
0.05 0.8592
0.06 0.874
0.07 0.8875
0.08 0.8995
0.09 0.913
0.1 0.9149
0.11 0.9279
0.12 0.9337
0.13 0.9308
0.14 0.9391
0.15 0.9415
0.16 0.9489
0.17 0.9491
0.18 0.9548
0.19 0.9506
0.2 0.9534
0.21 0.9582
0.22 0.9579
0.23 0.9625
0.24 0.9597
0.25 0.9632
0.26 0.9664
0.27 0.9694
0.28 0.973
0.29 0.9687
0.3 0.9723
};
\addplot [semithick, color2, dashed]
table {%
0 0
0.01 0.5508
0.02 0.7189
0.03 0.7968
0.04 0.8319
0.05 0.8692
0.06 0.884
0.07 0.8974
0.08 0.9081
0.09 0.9199
0.1 0.9233
0.11 0.9351
0.12 0.9391
0.13 0.9376
0.14 0.9467
0.15 0.947
0.16 0.955
0.17 0.9571
0.18 0.9602
0.19 0.9571
0.2 0.9604
0.21 0.9654
0.22 0.9643
0.23 0.9693
0.24 0.9662
0.25 0.9705
0.26 0.9717
0.27 0.976
0.28 0.9804
0.29 0.9778
0.3 0.9803
};
\end{axis}

\end{tikzpicture}}
			\caption{Identical preferences.}\label{fig:phi1M}
		\end{subfigure}
		\caption{For different types of changes 
			and 
			ways to compute~$M_2$, average normalized
			symmetric difference between $M_1$ and $M_2$ for a varying fraction 
			of change between 
			$\mathcal{P}_1$ and~$\mathcal{P}_2$.}
\end{figure*}

\paragraph{Evaluation.} We start by focusing on the optimal solution 
(``Best''; solid line in  \Cref{fig:changeadjustM}). 
What stands out from \Cref{fig:changeadjustM} is that already very few or even 
one change 
in~$\mathcal{P}_1$ requires a fundamental restructuring of the 
given matching~$M_1$. To be precise, for \Ireors, one
reordering (which corresponds to a $0.01$ fraction of changes) results in an average normalized symmetric difference between $M_1$ 
and~$M_2$ of $0.1$. For \Iswap, a $0.01$-fraction of all 
swaps, which corresponds to making
twelve random swaps per preference order (the total number of swaps is 
$\frac{n\cdot (n-1)}{2}$), results in an average normalized symmetric difference of 
$0.28$, 
whereas a single swap per 
preference order already
results in an average normalized symmetric difference of $0.05$. For 
\Idelete, the effect was strongest, as deleting a single agent leads to an 
average normalized symmetric difference of $0.38$. 
\citet{DBLP:conf/sigecom/AshlagiKL13}, \citet{DBLP:journals/corr/abs-1910-04406},
\citet{DBLP:journals/rsa/KnuthMP90} and 
\citet{DBLP:journals/siamdm/Pittel89} offer some theoretical intuition of 
this phenomenon for \Idelete: Assuming that agents have 
random 
preferences (as in our experiments), with high probability in a men-optimal 
matching the average 
rank that a man has for the 
woman matched to him is $\log(n)$
\citep{DBLP:journals/rsa/KnuthMP90,DBLP:journals/siamdm/Pittel89}, whereas in an 
instance with $n$ men 
and 
$n - 1$ women the average rank a man has for the woman matched to him in any stable matching is 
$\frac{n}{3\log(n)}$ \citep{DBLP:conf/sigecom/AshlagiKL13,DBLP:journals/corr/abs-1910-04406}. Thus, if we delete a 
single woman from the 
instance (which happens with 50\% probability when we delete a single agent), 
then already only to realize these average ranks, the given matching needs 
to be fundamentally restructured. 
Notably, if we delete two agents from the instance, 
which results only 
with a  25\% probability in a higher number of men than women, then the minimum 
normalized symmetric 
difference between 
$M_1$ and $M_2$ is only $0.28$. 

While \ISM is solvable in polynomial time, in 
a matching market in practice,
decision makers might 
simply rerun the initially employed matching algorithm (the popular 
Gale-Shapely 
algorithm in 
our case) to compute the new matching~$M_2$.
In 
\Cref{fig:changeadjustM}, in the dotted line, we indicate the  normalized symmetric difference 
between 
$M_1$, which is the men-optimal matching in $\mathcal{P}_1$,  and the 
men-optimal matching in $\mathcal{P}_2$. Overall, for all 
three types of changes and independent of the applied fraction of changes, 
the normalized symmetric difference between the two men-optimal matchings is quite close to the 
minimum 
achievable normalized symmetric difference, being, on average, 
always only at most $0.05$ higher (i.e., $5$ edges larger) than for the optimal 
solution. 

Since the Gale-Shapley solution has such a good quality, one might conjecture 
that all stable matchings in~$\mathcal{P}_2$ are roughly similarly different 
from~$M_1$.
To check this hypothesis, in \Cref{fig:changeadjustM}, in 
the dashed 
line, we display the average normalized symmetric difference of $M_1$ and the 
stable matching in 
$\mathcal{P}_2$ that is
furthest away from $M_1$.  For 
\Idelete, the above hypothesis actually gets confirmed: after few changes, 
the worst, the men-optimal, 
and the best stable matching in $\mathcal{P}_2$ have a similar distance to $M_1$, indicating that 
after randomly 
deleting some agents 
it does not really matter which stable matching in $\mathcal{P}_2$ is chosen.\footnote{On a theoretical level, a possible 
explanation for this 
is a result of 
\citet{DBLP:conf/sigecom/AshlagiKL13}, who proved that in SM 
instances with an unequal number of men and women and random preferences,  
stable matchings are 
``essentially 
unique''.}
In contrast to this, for the other two types of changes, there is a 
significant 
difference between the best and worst solution.

\subsubsection{Influence of the Structure of Preferences} \label{se:SM_relphi}

Now, we analyze whether our observations from 
\Cref{fig:changeadjustM} are still applicable beyond the case where agents have 
random preferences.
\paragraph{Research Question.} 
How does the structure of the preferences influence our study of the relationship between $|\mathcal{P}_1\oplus \mathcal{P}_2|$ and $|M_1\triangle M_2|$?

\paragraph{Experimental Setup I.}
In our first experiment in this subsection, we consider the situation 
which is as different as possible from our 
setup with random 
preferences, that is, we assume that all agents from one side have the same 
preference relation. For this, we reran the experiments from 
\Cref{fig:changeadjustM} but instead of drawing the preferences of agents 
uniformly at random from the set of all preferences, we only drew one 
preference relation over men, respectively, women and set the preferences of all women, respectively, men to this order. The 
results of this experiments can be found in \Cref{fig:phi1M}.

\paragraph{Evaluation I.}
Comparing \Cref{fig:changeadjustM} for random preferences and \Cref{fig:phi1M} 
for 
identical preferences, there are three major differences.

First, for identical 
preferences few changes have an even stronger effect than for random 
preferences. That is, a $0.01$ fraction of changes here 
makes it necessary to replace on average more than half of all edges in $M_1$. 
To get a feeling for why this is the case (for \Idelete), observe that in a 
\textsc{Stable 
	Marriage} instance where all agents from one side have identical 
preferences, 
there exists only one stable matching, namely, the one where the man appearing 
in the $i$th position of the women's preference list is matched to the woman 
appearing in the $i$th position of the men's preference list. If we now delete, 
without loss of generality, a man from the instance who appears on position $j$ 
of the women's preferences, then there is still only a single stable matching. 
In this matching, for $i\in [1,j-1]$, the man on the $i$th position of 
the women's preference list is matched to the woman on the $i$th position of 
the 
men's preference list, whereas for $i\in [j,n-1]$, the woman on the $i$th 
position is matched to the man appearing on the $i+1$st position (and the last 
woman on the men's preference list remains unmatched). Thus, both matchings 
share 
$j-1$ edges while there are $n-(j-1)$ edges unique to $M_1$ and $n-j$ edges 
unique to $M_2$. Observing that $j$ is, on average, 
$\frac{n}{2}$, it follows that half of the matching, on average, needs to be 
replaced; this also fits \Cref{fig:phi1M}. For \Ireors and \Iswap the situation is less 
clean, yet a 
similar intuition applies. 

Second, for identical preferences, the size of the symmetric difference to 
$M_1$ of  
all matchings that are stable in $\mathcal{P}_2$ is quite similar (as the best 
and worst solution are nearly 
indistinguishable), which can be intuitively explained by the fact that, 
initially, there is only a single stable matching and even after some changes 
have been applied, a large part of the matching is still fixed. 

Third, while 
for random preferences performing some number of \Idelete operations requires 
more adjustments than performing the same number of \Ireors operations, for 
identical preferences, both produce nearly identical results.
\paragraph{Experimental Setup II.}
In addition to exploring the extremes, that are, random preferences 
and instances where all men or all women have the same preferences, we are also 
interested in what happens if the 
agent's 
preferences have some ``structure''. For this, we generated agent's 
preferences 
using the Mallows model \citep{mallows1957non}, which is parameterized by a central 
preference order $\succ^*$ and a so-called dispersion parameter $\phi \in 
[0,1]$. In the Mallows model, the probability of drawing a preference order $\succ$ 
is 
proportional to $\phi^{\swap(\succ^*,\succ)}$, where $\swap(\succ^*,\succ)$ is 
the swap distance between $\succ$ and $\succ^*$. As 
pointed out by \citet{DBLP:journals/corr/abs-2105-07815},
one 
drawback of the Mallows model is that there exists no natural interpretation of 
$\phi$ 
and that a uniform distribution of $\phi$ does not lead to a uniform coverage 
of the spectrum of preferences. That is why, as proposed by 
\citet{DBLP:journals/corr/abs-2105-07815}, 
we use a normalized dispersion 
parameter $\normphi \in 
[0,1]$ which is internally converted into a value of $\phi$ such 
that the 
expected swap distance between the central order and a preference order sampled 
from the Mallows model with parameter $\normphi$ is $\frac{\normphi}{2}$ times the 
number of possible swaps. Notably, $\normphi=1$ results in all preference 
orders being sampled with the same probability, while $\normphi=0$ results in 
all agents from one side having the same preferences and $\normphi=0.5$ 
results in a model that is, in some sense, exactly between the two. 

For our experiment, we proceed analogously to our basic setup (\Cref{sub:basic}): For our three 
different types of changes, for $\normphi\in \{0,0.05,\cdots, 0.95,1 \}$, we 
sampled $200$ \textsc{Stable Marriage} instances with $50$ men and $50$ women 
whose preferences $\mathcal{P}_1$ are drawn from the Mallows model with the same 
central order and 
normalized dispersion parameter $\normphi$. Subsequently, we compute $M_1$ as 
the men-optimal matching and apply a $0.1$ fraction of changes sampled uniformly at random to~$\mathcal{P}_1$ to obtain 
$\mathcal{P}_2$ (we also performed this experiments with a $0.05/0.15/0.2/0.25$ 
fraction of changes and observed similar results). We visualize the results of this experiment in \Cref{fig:phi2}.

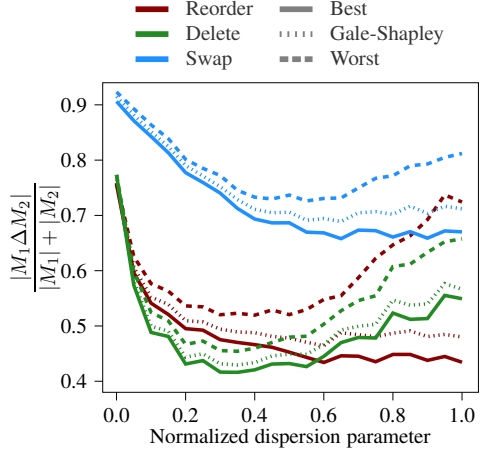
\begin{figure}[t]
	\begin{center}
		\resizebox{0.4\textwidth}{!}{
\begin{tikzpicture}[every plot/.append style={line width=1.85pt}]

\definecolor{color0}{rgb}{1,0.549019607843137,0}
\definecolor{color1}{rgb}{0.133333333333333,0.545098039215686,0.133333333333333}
\definecolor{color2}{rgb}{0.117647058823529,0.564705882352941,1}

\begin{axis}[
legend columns=2, 
legend cell align={left},
legend style={
	fill opacity=0.8,
	draw opacity=1,
	draw=none,
	text opacity=1,
	at={(0.5,1.3)},
	line width=3pt,
	anchor=north,
	/tikz/column 2/.style={
		column sep=10pt,
	}
},
legend entries={Reorder,
	Best,
	Delete,Gale-Shapley,
	Swap, Worst},
tick align=outside,
tick pos=left,
x grid style={white!69.0196078431373!black},
xlabel={Normalized dispersion parameter},
xmin=-0.04, xmax=1.0475,
xtick style={color=black},
xtick={0,0.2,0.4,0.6,0.8,1,1.2},
xticklabels={0.0,0.2,0.4,0.6,0.8,1.0,1.2},
y grid style={white!69.0196078431373!black},
ylabel={\(\displaystyle \frac{|M_1 \Delta M_2|}{|M_1|+|M_2|}\)},
ymin=0.38, ymax=0.94,
ytick style={color=black},
ytick={0.4,0.5,0.6,0.7,0.8,0.9,1},
yticklabels={0.4,0.5,0.6,0.7,0.8,0.9,1.0}
]
\addlegendimage{red!54.5098039215686!black}
\addlegendimage{gray}
\addlegendimage{color1}
\addlegendimage{gray,dash pattern=on 1pt off 3pt on 1pt off 3pt}
\addlegendimage{color2}
\addlegendimage{gray,dash pattern=on 4.5pt off 2pt}
\addplot [semithick, red!54.5098039215686!black]
table {%
0 0.7563
0.05 0.5963
0.1 0.5412
0.15 0.5208
0.2 0.4951
0.25 0.4924
0.3 0.4752
0.35 0.4702
0.4 0.4662
0.45 0.4615
0.5 0.4526
0.55 0.4432
0.6 0.434
0.65 0.446
0.7 0.4453
0.75 0.4353
0.8 0.4485
0.85 0.4485
0.9 0.4378
0.95 0.4448
1 0.4344
};
\addplot [semithick, red!54.5098039215686!black, dotted]
table {%
0 0.7582
0.05 0.6061
0.1 0.5523
0.15 0.5389
0.2 0.5096
0.25 0.5073
0.3 0.4942
0.35 0.4891
0.4 0.4878
0.45 0.4812
0.5 0.4781
0.55 0.4705
0.6 0.4641
0.65 0.4878
0.7 0.4841
0.75 0.4807
0.8 0.4869
0.85 0.491
0.9 0.4809
0.95 0.4847
1 0.4802
};
\addplot [semithick, red!54.5098039215686!black, dashed]
table {%
0 0.7582
0.05 0.6256
0.1 0.5766
0.15 0.5629
0.2 0.5363
0.25 0.5346
0.3 0.5203
0.35 0.523
0.4 0.5195
0.45 0.529
0.5 0.5205
0.55 0.5297
0.6 0.5479
0.65 0.5548
0.7 0.5871
0.75 0.6223
0.8 0.6467
0.85 0.663
0.9 0.6923
0.95 0.7371
1 0.7242
};
\addplot [semithick, color1]
table {%
0 0.773248842567323
0.05 0.573736416798407
0.1 0.488295149709193
0.15 0.481007988643779
0.2 0.431181815406244
0.25 0.437338602610722
0.3 0.416522976793379
0.35 0.416147797236872
0.4 0.420627694768607
0.45 0.43106924667002
0.5 0.432001961796373
0.55 0.426452792046386
0.6 0.445034095057611
0.65 0.469899614784258
0.7 0.479188722268205
0.75 0.478198284307164
0.8 0.523353769246047
0.85 0.512010921903854
0.9 0.513391575372898
0.95 0.555298769481739
1 0.549264978670051
};
\addplot [semithick, color1, dotted]
table {%
0 0.773248842567323
0.05 0.58117477642891
0.1 0.499779588963756
0.15 0.490390816330106
0.2 0.443654970277939
0.25 0.449603257051713
0.3 0.43155400371313
0.35 0.429573859552031
0.4 0.43372122406256
0.45 0.445780550001057
0.5 0.449223870112382
0.55 0.448828913317696
0.6 0.466510372812965
0.65 0.492750388560839
0.7 0.499349896106385
0.75 0.503017208719975
0.8 0.546702103221677
0.85 0.537659464121204
0.9 0.5400369227765
0.95 0.576617866919694
1 0.56637164890601
};
\addplot [semithick, color1, dashed]
table {%
0 0.773248842567323
0.05 0.598821376766864
0.1 0.52449314734593
0.15 0.509117110183865
0.2 0.466734300639707
0.25 0.472676570239789
0.3 0.455450240717266
0.35 0.454233877671456
0.4 0.459778354033499
0.45 0.470077463766476
0.5 0.479732576863294
0.55 0.481425913412454
0.6 0.503333634406462
0.65 0.527592998606036
0.7 0.545938188141417
0.75 0.554630160895406
0.8 0.607512508236632
0.85 0.612026912043673
0.9 0.633131833743153
0.95 0.652422428867669
1 0.657498036051996
};
\addplot [semithick, color2]
table {%
0 0.9062
0.05 0.8713
0.1 0.8428
0.15 0.8133
0.2 0.7773
0.25 0.7596
0.3 0.7404
0.35 0.7126
0.4 0.6934
0.45 0.6865
0.5 0.6866
0.55 0.6697
0.6 0.6684
0.65 0.6579
0.7 0.6734
0.75 0.6724
0.8 0.6608
0.85 0.6706
0.9 0.6588
0.95 0.6719
1 0.6702
};
\addplot [semithick, color2, dotted]
table {%
0 0.9149
0.05 0.8819
0.1 0.8513
0.15 0.8246
0.2 0.7904
0.25 0.7713
0.3 0.7521
0.35 0.7287
0.4 0.711
0.45 0.7048
0.5 0.7058
0.55 0.6913
0.6 0.6942
0.65 0.689
0.7 0.7047
0.75 0.7067
0.8 0.7021
0.85 0.7167
0.9 0.7037
0.95 0.7165
1 0.712
};
\addplot [semithick, color2, dashed]
table {%
0 0.9233
0.05 0.8929
0.1 0.8638
0.15 0.8389
0.2 0.8014
0.25 0.7851
0.3 0.7722
0.35 0.7454
0.4 0.733
0.45 0.7297
0.5 0.7368
0.55 0.7261
0.6 0.7307
0.65 0.7316
0.7 0.7497
0.75 0.7671
0.8 0.7718
0.85 0.7898
0.9 0.793
0.95 0.8049
1 0.812
};
\end{axis}

\end{tikzpicture}}\end{center}
	\caption{For different types of changes and ways to compute $M_2$, average normalized
		symmetric difference between $M_1$ and $M_2$ if a $0.1$ fraction of changes is performed for a varying normalized dispersion parameter 
		used    
		for the Mallows model to sample the preferences.
	}
	\label{fig:phi2}
\end{figure}

\paragraph{Evaluation II.}
For \Ireors and \Iswap, the more unstructured the  
preferences of agents are the lower is the 
minimum symmetric difference of $M_1$ and $M_2$ (which might be intuitively 
surprising). For the Gale-Shapley solution and even more the worst solution, 
the more 
unstructured the  preferences of agents are, the larger becomes the gap between 
these two 
solutions and the best solution (this effect becomes particularly 
strong if the normalized dispersion parameter goes beyond $0.5$). This can be 
explained by the fact that
if the preferences of agents are similar to each other, then there might exist only few 
stable matchings being quite similar to each other (and in the extreme case 
with each agent having the same preferences only a unique stable matching), while for random 
preferences 
there is more flexibility when choosing a stable matching. 

In contrast to the other two types of changes, for \Idelete the minimum 
achievable symmetric difference between $M_1$ and $M_2$ first sharply decreases until 
around $\normphi=0.3$ and afterwards steadily increases again. Moreover, as 
already observed before, there is only a small difference between the worst and 
best solution.

\subsubsection{Influence of the Number of Agents \label{se:SM_numag}}

Now, we analyze whether our observations from 
\Cref{fig:changeadjustM} are still applicable for varying number of agents.
\paragraph{Research Question.}
How do the number of agents influence our study of the relationship between $|\mathcal{P}_1\oplus \mathcal{P}_2|$ and $|M_1\triangle M_2|$? 

\paragraph{Experimental Setup.} For our three different types of changes, 
for $n\in \{10,20,\dots, 140,150\}$, we sampled $200$ instances of \ISM with 
$n$ men and $n$ women having random preferences, where a $0.1$ fraction of 
changes is performed (as described in \Cref{sub:basic}). Again as in 
\Cref{sub:basic}, we computed the stable matching in $\mathcal{P}_2$ 
with maximum/minimum symmetric difference with $M_1$ and a stable matching in $\mathcal{P}_2$ using the 
Gale-Shapely algorithm. Our results can be found in \Cref{fig:vary_ag1}. We 
also repeated the same experiment where a $0.05/0.15/0.2/0.25$ fraction of 
changes is applied producing very similar results. 

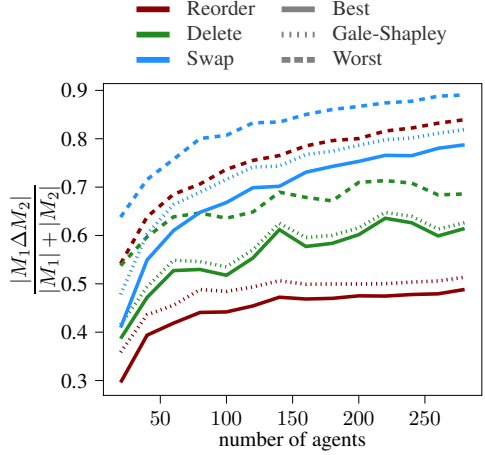
\begin{figure}
	\centering
	\resizebox{0.4\textwidth}{!}{
\begin{tikzpicture}[every plot/.append style={line width=1.85pt}]

\definecolor{color0}{rgb}{1,0.549019607843137,0}
\definecolor{color1}{rgb}{0.133333333333333,0.545098039215686,0.133333333333333}
\definecolor{color2}{rgb}{0.117647058823529,0.564705882352941,1}

\begin{axis}[
legend columns=2, 
legend cell align={left},
legend style={
  fill opacity=0.8,
  draw opacity=1,
  draw=none,
  text opacity=1,
  at={(0.5,1.3)},
  line width=3pt,
  anchor=north,
   /tikz/column 2/.style={
  	column sep=10pt,
  }
},
legend entries={Reorder,
	Best,
	Delete,
	Gale-Shapley,
	Swap, Worst},
tick align=outside,
tick pos=left,
x grid style={white!69.0196078431373!black},
xlabel={number of agents},
xmin=7, xmax=293,
xtick style={color=black},
y grid style={white!69.0196078431373!black},
ylabel={\(\displaystyle \frac{|M_1 \Delta M_2|}{|M_1|+|M_2|}\)},
ymin=0.269289285714286, ymax=0.914075,
ytick style={color=black},
ytick={0.2,0.3,0.4,0.5,0.6,0.7,0.8,0.9,1},
yticklabels={0.2,0.3,0.4,0.5,0.6,0.7,0.8,0.9,1.0}
]
\addlegendimage{red!54.5098039215686!black}
\addlegendimage{gray}
\addlegendimage{color1}
\addlegendimage{gray,dash pattern=on 1pt off 3pt on 1pt off 3pt}
\addlegendimage{color2}
\addlegendimage{gray,dash pattern=on 4.5pt off 2pt}
\addplot [semithick, red!54.5098039215686!black]
table {%
20 0.296
40 0.39375
60 0.418666666666667
80 0.44075
100 0.4419
120 0.45425
140 0.472285714285714
160 0.4685625
180 0.469944444444444
200 0.47515
220 0.474636363636364
240 0.477875
260 0.479307692307692
280 0.488321428571429
};
\addplot [semithick, red!54.5098039215686!black, dotted]
table {%
20 0.3585
40 0.436
60 0.455666666666667
80 0.488125
100 0.4845
120 0.493416666666667
140 0.506428571428571
160 0.4990625
180 0.499722222222222
200 0.4996
220 0.500181818181818
240 0.503541666666667
260 0.505961538461538
280 0.513642857142857
};
\addplot [semithick, red!54.5098039215686!black, dashed]
table {%
20 0.5415
40 0.6385
60 0.685166666666667
80 0.706375
100 0.7368
120 0.75525
140 0.765285714285714
160 0.78525
180 0.796166666666667
200 0.80035
220 0.816
240 0.822166666666667
260 0.832192307692308
280 0.839428571428571
};
\addplot [semithick, color1]
table {%
20 0.386666666666667
40 0.471822743796428
60 0.527350763271816
80 0.529639020094355
100 0.517648604114161
120 0.553421048217508
140 0.611942154559159
160 0.577212961668746
180 0.583690583061304
200 0.601649771606126
220 0.635567615210987
240 0.6262000019214
260 0.599067076272227
280 0.614375967642308
};
\addplot [semithick, color1, dotted]
table {%
20 0.415409356725146
40 0.493395764185238
60 0.548845086200349
80 0.546060309805119
100 0.53517993829442
120 0.57045466314947
140 0.625173360476585
160 0.595723550426592
180 0.600164992144069
200 0.615705512169807
220 0.647628308676558
240 0.639474565701245
260 0.613240746784394
280 0.626198260414051
};
\addplot [semithick, color1, dashed]
table {%
20 0.537573099415205
40 0.597026236763079
60 0.63855162527531
80 0.646271357695114
100 0.635694637578813
120 0.648523591374721
140 0.689497441538569
160 0.678683602734696
180 0.671532393205013
200 0.709771529416579
220 0.713378485227309
240 0.708286665866085
260 0.683846743354582
280 0.685895112877892
};
\addplot [semithick, color2]
table {%
20 0.4095
40 0.54925
60 0.610166666666667
80 0.647875
100 0.6681
120 0.69875
140 0.701642857142857
160 0.73075
180 0.742777777777778
200 0.7532
220 0.765681818181818
240 0.765
260 0.780269230769231
280 0.787464285714286
};
\addplot [semithick, color2, dotted]
table {%
20 0.478
40 0.60325
60 0.663666666666667
80 0.6895
100 0.7159
120 0.7415
140 0.743428571428571
160 0.7675
180 0.774222222222222
200 0.7866
220 0.797954545454545
240 0.801583333333333
260 0.811269230769231
280 0.819
};
\addplot [semithick, color2, dashed]
table {%
20 0.638
40 0.71575
60 0.7575
80 0.800375
100 0.8072
120 0.832833333333333
140 0.834571428571429
160 0.85025
180 0.8605
200 0.86685
220 0.874090909090909
240 0.877583333333333
260 0.888
280 0.890714285714286
};
\end{axis}

\end{tikzpicture}}
	\caption{For different types of changes and solutions, average normalized
		symmetric difference between $M_1$ and $M_2$ if a $0.1$ fraction of changes is performed for a varying number of agents with random preferences.}
	\label{fig:vary_ag1}
\end{figure}

\paragraph{Evaluation.}    
The general trends we observed in \Cref{fig:changeadjustM}, e.g., concerning 
the 
relationships between the different types of changes or between the three 
different types of solutions examined, can still be found for different numbers 
of agents. However, while for \Ireors the minimum normalized symmetric difference 
between $M_1$ and a stable matching $M_2$ in $\mathcal{P}_2$ and the normalized symmetric difference between $M_1$ and the Gale-Shapley matching in $\mathcal{P}_2$ stays more or less constant for increasing number of 
agents, for \Idelete and even more for \Iswap, the average 
normalized symmetric difference between 
$M_1$ and $M_2$ for all three ways of computing $M_2$ slowly 
increases. 
For our sampled \ISM instances, we also measured the fraction of 
pairs that block $M_1$ in $\mathcal{P}_2$. For each of the three types of changes, 
this value is, on average, the same for all considered numbers of agents (which 
is in contrast to our previous observations that the fraction of necessary 
adjustments increases for \Idelete and \Iswap when the number of agents increases). 

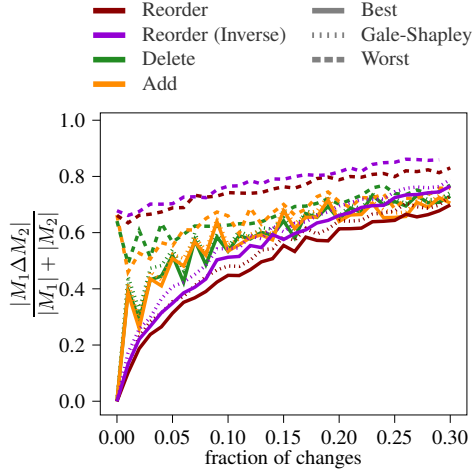
\begin{figure}
	\centering
	\resizebox{0.4\textwidth}{!}{
\begin{tikzpicture}[every plot/.append style={line width=1.85pt}]

\definecolor{color0}{rgb}{1,0.549019607843137,0}
\definecolor{color3}{rgb}{0.580392156862745,0,0.827450980392157}
\definecolor{colorD}{rgb}{0.133333333333333,0.545098039215686,0.133333333333333}

\begin{axis}[
legend columns=2, 
legend cell align={left},
legend style={
  fill opacity=0.8,
  draw opacity=1,
  draw=none,
  text opacity=1,
  at={(0.5,1.4)},
  line width=3pt,
  anchor=north,
   /tikz/column 2/.style={
  	column sep=10pt,
  }
},
legend entries={Reorder,
	Best,
	Reorder (Inverse),
	Gale-Shapley,
	Delete, Worst,
	Add},
tick align=outside,
tick pos=left,
x grid style={white!69.0196078431373!black},
xlabel={fraction of changes},
xmin=-0.015, xmax=0.315,
xtick style={color=black},
xtick={-0.05,0,0.05,0.1,0.15,0.2,0.25,0.3,0.35},
xticklabels={−0.05,0.00,0.05,0.10,0.15,0.20,0.25,0.30,0.35},
y grid style={white!69.0196078431373!black},
ylabel={\(\displaystyle \frac{|M_1 \Delta M_2|}{|M_1|+|M_2|}\)},
ymin=-0.048865, ymax=1.026165,
ytick style={color=black},
ytick={-0.2,0,0.2,0.4,0.6,0.8,1,1.2},
yticklabels={−0.2,0.0,0.2,0.4,0.6,0.8,1.0,1.2}
]
\addlegendimage{red!54.5098039215686!black}
\addlegendimage{gray}
\addlegendimage{color3}
\addlegendimage{gray,dash pattern=on 1pt off 3pt on 1pt off 3pt}
\addlegendimage{colorD}
\addlegendimage{gray,dash pattern=on 4.5pt off 2pt}
\addlegendimage{color0}
\addplot [semithick,colorD]
table {%
	0 0
	0.01 0.396767676767677
	0.02 0.289461966604824
	0.03 0.432312223858616
	0.04 0.445177081141735
	0.05 0.518913275456683
	0.06 0.423440556177678
	0.07 0.572894243157654
	0.08 0.486181898638153
	0.09 0.584267973804883
	0.1 0.533807111634606
	0.11 0.588784761834805
	0.12 0.577290191703131
	0.13 0.600647935514696
	0.14 0.591195468512171
	0.15 0.641735712422147
	0.16 0.580594303812672
	0.17 0.667351426824499
	0.18 0.654409626842207
	0.19 0.69316863907722
	0.2 0.645940625773755
	0.21 0.687695229147718
	0.22 0.672029433776717
	0.23 0.733069768667108
	0.24 0.698703805966507
	0.25 0.710587358001043
	0.26 0.694849474306392
	0.27 0.731982241344594
	0.28 0.706057612755149
	0.29 0.707870013046426
	0.3 0.729614677778939
};
\addplot [semithick, colorD, dotted]
table {%
	0 0
	0.01 0.443232323232323
	0.02 0.324337250051536
	0.03 0.473153797601515
	0.04 0.482900974822919
	0.05 0.539570853680593
	0.06 0.464903131650139
	0.07 0.598420558947128
	0.08 0.510090259967007
	0.09 0.594418222453271
	0.1 0.546918455540011
	0.11 0.606734887664135
	0.12 0.585435082730657
	0.13 0.617526117952485
	0.14 0.612514500101212
	0.15 0.652658046127089
	0.16 0.604139560488364
	0.17 0.678519395624394
	0.18 0.662855619052244
	0.19 0.70079014667548
	0.2 0.658959919837993
	0.21 0.697660186925854
	0.22 0.685218132619209
	0.23 0.738608637710388
	0.24 0.704634038524647
	0.25 0.721724302915336
	0.26 0.703279704536062
	0.27 0.738100409542105
	0.28 0.717286997986269
	0.29 0.717782620286015
	0.3 0.742912491924828
};
\addplot [semithick, colorD, dashed]
table {%
	0 0.6412
	0.01 0.49050505050505
	0.02 0.607210884353741
	0.03 0.512350094677046
	0.04 0.633214461042149
	0.05 0.566405046120456
	0.06 0.594125821105737
	0.07 0.623438102806777
	0.08 0.622766351792315
	0.09 0.622270598761234
	0.1 0.626708796182282
	0.11 0.631470845011473
	0.12 0.631886994043361
	0.13 0.638648137587829
	0.14 0.654381313510456
	0.15 0.671856336844676
	0.16 0.692935865632495
	0.17 0.694422491280298
	0.18 0.704218475920719
	0.19 0.720375778272348
	0.2 0.701410025753582
	0.21 0.716138460557169
	0.22 0.733034140315656
	0.23 0.759179102011877
	0.24 0.76794732803087
	0.25 0.739331652302371
	0.26 0.735160655793871
	0.27 0.751667000691916
	0.28 0.738809518790612
	0.29 0.745448265459228
	0.3 0.768942552973529
};
\addplot [semithick, color0]
table {%
	0 0
	0.01 0.3852
	0.02 0.259386138613861
	0.03 0.435687128712871
	0.04 0.410681188118812
	0.05 0.508442943117841
	0.06 0.48096999545757
	0.07 0.567413425614406
	0.08 0.519883207492212
	0.09 0.639405249732898
	0.1 0.534309871870576
	0.11 0.557183583216707
	0.12 0.58183477474578
	0.13 0.591753262247309
	0.14 0.574152609890972
	0.15 0.678197592240522
	0.16 0.614666392940771
	0.17 0.654388435731789
	0.18 0.655843671798083
	0.19 0.716187095389977
	0.2 0.639867894259321
	0.21 0.662733723239425
	0.22 0.66472373855435
	0.23 0.704495426699355
	0.24 0.653162991337568
	0.25 0.654166780677327
	0.26 0.657584446177902
	0.27 0.707891984502604
	0.28 0.690149550362171
	0.29 0.724785589197079
	0.3 0.704861935827904
};
\addplot [semithick, color0, dotted]
table {%
	0 0
	0.01 0.4196
	0.02 0.289631683168317
	0.03 0.460641584158416
	0.04 0.430821432731508
	0.05 0.522611298776936
	0.06 0.507983583166997
	0.07 0.584558065263979
	0.08 0.536956455013774
	0.09 0.649099197513654
	0.1 0.556556506446736
	0.11 0.576670904956456
	0.12 0.594396393257939
	0.13 0.607264493154766
	0.14 0.595593137270487
	0.15 0.687996026921693
	0.16 0.631476229303068
	0.17 0.66592249892429
	0.18 0.67803665578909
	0.19 0.729898611072684
	0.2 0.650878565210859
	0.21 0.671498129137844
	0.22 0.678866789027675
	0.23 0.710634716948737
	0.24 0.68087124237921
	0.25 0.669231481742028
	0.26 0.67369534520473
	0.27 0.711840689845115
	0.28 0.705002798156266
	0.29 0.732957046740389
	0.3 0.717172672248879
};
\addplot [semithick, color0, dashed]
table {%
	0 0.6628
	0.01 0.46
	0.02 0.526934653465347
	0.03 0.506582178217822
	0.04 0.564244302077267
	0.05 0.553257775189284
	0.06 0.610222833911032
	0.07 0.609123452699353
	0.08 0.650179892127068
	0.09 0.673416883312898
	0.1 0.659532298406537
	0.11 0.596158584419578
	0.12 0.688079031130538
	0.13 0.630334092237371
	0.14 0.670832815008843
	0.15 0.700382598222699
	0.16 0.66269630305691
	0.17 0.684570410702161
	0.18 0.727721929142147
	0.19 0.744687852457456
	0.2 0.710622302275697
	0.21 0.689029714100622
	0.22 0.724718091539528
	0.23 0.723319824074211
	0.24 0.737679537687505
	0.25 0.685021148335433
	0.26 0.722254188498087
	0.27 0.722826840344539
	0.28 0.734661528569305
	0.29 0.75834092786018
	0.3 0.757249383361019
};
\addplot [semithick,  red!54.5098039215686!black]
table {%
	0 0
	0.01 0.1016
	0.02 0.1872
	0.03 0.2376
	0.04 0.2648
	0.05 0.3132
	0.06 0.352
	0.07 0.368
	0.08 0.3904
	0.09 0.424
	0.1 0.448
	0.11 0.4472
	0.12 0.468
	0.13 0.496
	0.14 0.5072
	0.15 0.5544
	0.16 0.5328
	0.17 0.584
	0.18 0.5724
	0.19 0.5708
	0.2 0.6136
	0.21 0.614
	0.22 0.616
	0.23 0.6388
	0.24 0.6404
	0.25 0.6436
	0.26 0.6652
	0.27 0.656
	0.28 0.664
	0.29 0.6776
	0.3 0.6984
};
\addplot [semithick,  red!54.5098039215686!black, dotted]
table {%
	0 0
	0.01 0.1252
	0.02 0.2292
	0.03 0.284
	0.04 0.3144
	0.05 0.3544
	0.06 0.3728
	0.07 0.4116
	0.08 0.4428
	0.09 0.4672
	0.1 0.4812
	0.11 0.48
	0.12 0.502
	0.13 0.538
	0.14 0.5432
	0.15 0.5852
	0.16 0.5728
	0.17 0.6156
	0.18 0.5948
	0.19 0.6132
	0.2 0.642
	0.21 0.6408
	0.22 0.644
	0.23 0.6656
	0.24 0.6704
	0.25 0.6768
	0.26 0.7048
	0.27 0.692
	0.28 0.6864
	0.29 0.698
	0.3 0.7308
};
\addplot [semithick,  red!54.5098039215686!black, dashed]
table {%
	0 0.6612
	0.01 0.6348
	0.02 0.6616
	0.03 0.6656
	0.04 0.6736
	0.05 0.688
	0.06 0.6892
	0.07 0.7332
	0.08 0.7264
	0.09 0.73
	0.1 0.7424
	0.11 0.7408
	0.12 0.74
	0.13 0.7428
	0.14 0.7548
	0.15 0.7632
	0.16 0.7584
	0.17 0.7716
	0.18 0.7924
	0.19 0.7976
	0.2 0.788
	0.21 0.7996
	0.22 0.7972
	0.23 0.8072
	0.24 0.8104
	0.25 0.7992
	0.26 0.82
	0.27 0.8172
	0.28 0.8236
	0.29 0.8132
	0.3 0.8304
};
\addplot [semithick, color3]
table {%
	0 0
	0.01 0.132
	0.02 0.2208
	0.03 0.2668
	0.04 0.3148
	0.05 0.3488
	0.06 0.3856
	0.07 0.4056
	0.08 0.4348
	0.09 0.5032
	0.1 0.5124
	0.11 0.5152
	0.12 0.554
	0.13 0.5468
	0.14 0.5932
	0.15 0.5756
	0.16 0.5968
	0.17 0.61
	0.18 0.6344
	0.19 0.6436
	0.2 0.6612
	0.21 0.6712
	0.22 0.6896
	0.23 0.696
	0.24 0.6972
	0.25 0.7244
	0.26 0.734
	0.27 0.7356
	0.28 0.7416
	0.29 0.7448
	0.3 0.7652
};
\addplot [semithick, color3, dotted]
table {%
	0 0
	0.01 0.1712
	0.02 0.2564
	0.03 0.298
	0.04 0.3644
	0.05 0.4076
	0.06 0.4236
	0.07 0.4512
	0.08 0.4836
	0.09 0.5416
	0.1 0.544
	0.11 0.5456
	0.12 0.5888
	0.13 0.578
	0.14 0.6228
	0.15 0.6032
	0.16 0.6328
	0.17 0.6384
	0.18 0.6608
	0.19 0.674
	0.2 0.6948
	0.21 0.706
	0.22 0.7192
	0.23 0.7188
	0.24 0.7192
	0.25 0.7452
	0.26 0.7576
	0.27 0.7624
	0.28 0.7596
	0.29 0.764
	0.3 0.7892
};
\addplot [semithick, color3, dashed]
table {%
	0 0.678
	0.01 0.6612
	0.02 0.6752
	0.03 0.7016
	0.04 0.7012
	0.05 0.7028
	0.06 0.7264
	0.07 0.7276
	0.08 0.7244
	0.09 0.7524
	0.1 0.7652
	0.11 0.7668
	0.12 0.7848
	0.13 0.7732
	0.14 0.7912
	0.15 0.79
	0.16 0.7992
	0.17 0.8064
	0.18 0.8036
	0.19 0.8056
	0.2 0.816
	0.21 0.8304
	0.22 0.818
	0.23 0.8428
	0.24 0.8488
	0.25 0.8504
	0.26 0.8612
	0.27 0.86
	0.28 0.8572
	0.29 0.8592
};
\end{axis}

\end{tikzpicture}}
	\caption{For different types of changes 
		and 
		ways to compute~$M_2$, average normalized
		symmetric difference between $M_1$ and $M_2$ for a varying fraction 
		of change between 
		$\mathcal{P}_1$ and~$\mathcal{P}_2$ when agents have random preferences.}\label{fig:changeadjustMTypesChanges}
\end{figure}

\subsubsection{Two Further Types of Changes} \label{sub:furthertypes}
Finally, we briefly discuss two further types of change, i.e., \Ireori 
where we reverse the preferences of one agent, and \Iadd where we add an agent with 
random preferences. We repeated the experiment from 
\Cref{fig:changeadjustM} for these two types of changes and display the results 
in \Cref{fig:changeadjustMTypesChanges}. The reason why we do not explicitly 
consider these two types of 
changes in our other experiments is that \Ireori produces 
results similar to \Ireors (typically and intuitively requiring few more 
adjustments) and \Iadd 
produces results similar to 
\Idelete. This effect is also visible in \Cref{fig:changeadjustMTypesChanges}.

\subsection{Correlation Between the Number of Blocking Pairs of $M_1$ in 
	$\mathcal{P}_2$ and the Symmetric Difference Between $M_1$ and $M_2$} 
\label{se:SM_corbp}

Motivated by the hypothesis that in practice a once implemented 
matching may not be changed even if the instance slightly changes, we ask the 
following question: 
\paragraph{Research Question I.} By how many pairs is matching $M_1$ 
blocked 
in $\mathcal{P}_2$ if more and more changes are performed? 

\paragraph{Experimental Setup I.}     
For each of the three considered types of changes, for $r\in \{0,0.01,0.02,\dots , 0.3\}$, we sampled $1000$ 
\textsc{Stable Marriage} instances with $50$ men and $50$ 
women with random preferences collected in $\mathcal{P}_1$. For each of these instances, we 
computed as $M_1$ the men-optimal 
matching and applied an $r$-fraction of all possible changes 
of the considered type to $\mathcal{P}_1$ to obtain preference profile
$\mathcal{P}_2$. Subsequently, we computed the set $\bp(M_1,\mathcal{P}_2)$ of 
pairs that block matching $M_1$ in instance $\mathcal{P}_2$. In \Cref{fig:bp1}, 
we depict the average and 90th quantile of the 
fraction of man-woman pairs that block $M_1$ in $\mathcal{P}_2$ depending on 
the fraction 
of applied changes. 

\paragraph{Evaluation I.} 
Comparing the results from \Cref{fig:bp1} to the results from 
\Cref{fig:changeadjustM}, several things stand out. First, examining the 
range $[0,0.05]$ of changes, there are only ``few'' blocking pairs 
(around a $0.02$ 
fraction of all pairs), which (as observed in \Cref{sub:basic} ) are nevertheless in most cases enough to 
make it necessary to 
fundamentally restructure the given matching $M_1$.
Second, the ordering of \Idelete and \Iswap is reversed here 
compared to the 
needed adjustments.
Third, while the fraction of pairs that block $M_1$ in $\mathcal{P}_2$ 
constantly grows---nevertheless 
always staying on a surprisingly low level---the minimum symmetric difference between 
$M_1$ and a stable matching in $\mathcal{P}_2$ grows significantly slower after 
a certain fraction of changes have been applied (see \Cref{fig:changeadjustM}).

We now continue by analyzing whether the number of pairs that block $M_1$ in 
$\mathcal{P}_2$ and the number of necessary adjustments are correlated:
\paragraph{Research Question II.}
Is there a 
correlation between the number of blocking pairs of $M_1$ in $\mathcal{P}_2$ 
and the minimum symmetric difference between $M_1$ 
and a stable matching in~$\mathcal{P}_2$?

\begin{figure*}[t!]
	\centering
	\begin{minipage}{.48\textwidth}
		\centering
		
		\resizebox{0.8\textwidth}{!}{
\begin{tikzpicture}[every plot/.append style={line width=1.85pt}]

\definecolor{color0}{rgb}{1,0.549019607843137,0}
\definecolor{color1}{rgb}{0.133333333333333,0.545098039215686,0.133333333333333}
\definecolor{color2}{rgb}{0.117647058823529,0.564705882352941,1}

\begin{axis}[
legend columns=2, 
legend cell align={left},
legend style={
	fill opacity=0.8,
	draw opacity=1,
	draw=none,
	text opacity=1,
	at={(0.5,1.4)},
	line width=3pt,
	anchor=north,
	/tikz/column 2/.style={
		column sep=10pt,
	}
},
legend entries={
	Reorder, Mean,Delete, 90th quantile
	, Swap},
tick align=outside,
tick pos=left,
x grid style={white!69.0196078431373!black},
xlabel={fraction of changes},
xmin=-0.015, xmax=0.315,
xtick style={color=black},
xtick={-0.05,0,0.05,0.1,0.15,0.2,0.25,0.3,0.35},
xticklabels={−0.05,0.00,0.05,0.10,0.15,0.20,0.25,0.30,0.35},
y grid style={white!69.0196078431373!black},
ylabel={fraction of pairs that block $M_1$ in $\mathcal{P}_2$},
ymin=-0.00736, ymax=0.193930931741723,
ytick style={color=black},
ytick={-0.02,0,0.02,0.04,0.06,0.08,0.1,0.12,0.14,0.16,0.18,0.2},
yticklabels={−0.02,0.00,0.02,0.04,0.06,0.08,0.10,0.12,0.14,0.16,0.18,0.2}
]
\addlegendimage{red!54.5098039215686!black}
\addlegendimage{gray}
\addlegendimage{color1}
\addlegendimage{gray,dashed}
\addlegendimage{color2}
\addlegendimage{empty legend}
\addplot [semithick, red!54.5098039215686!black]
table {%
	0 0
	0.01 0.0013376
	0.02 0.0029144
	0.03 0.0043592
	0.04 0.0056644
	0.05 0.0071236
	0.06 0.0087028
	0.07 0.0103936
	0.08 0.01185
	0.09 0.0134116
	0.1 0.014748
	0.11 0.0165284
	0.12 0.0185228
	0.13 0.019886
	0.14 0.0211204
	0.15 0.0232784
	0.16 0.0250912
	0.17 0.0266248
	0.18 0.0284148
	0.19 0.0299948
	0.2 0.0322632
	0.21 0.0336308
	0.22 0.0358508
	0.23 0.0373944
	0.24 0.0388968
	0.25 0.0414708
	0.26 0.043334
	0.27 0.0447768
	0.28 0.047404
	0.29 0.04708
	0.3 0.0511296
};
\addplot [semithick, red!54.5098039215686!black, dashed]
table {%
	0 0
	0.01 0.0036
	0.02 0.0056
	0.03 0.0076
	0.04 0.0096
	0.05 0.0116
	0.06 0.014
	0.07 0.0164
	0.08 0.018
	0.09 0.0196
	0.1 0.02164
	0.11 0.0236
	0.12 0.026
	0.13 0.028
	0.14 0.0292
	0.15 0.032
	0.16 0.03364
	0.17 0.03564
	0.18 0.03764
	0.19 0.03924
	0.2 0.0428
	0.21 0.044
	0.22 0.0472
	0.23 0.0492
	0.24 0.05044
	0.25 0.05404
	0.26 0.056
	0.27 0.0576
	0.28 0.06
	0.29 0.05964
	0.3 0.0648
};
\addplot [semithick, color1]
table {%
	0 0
	0.01 0.0028330612244898
	0.02 0.00579926298070249
	0.03 0.00914553553336228
	0.04 0.0122744518426396
	0.05 0.0155612995892776
	0.06 0.0193550567431432
	0.07 0.0229642175071917
	0.08 0.026996019044494
	0.09 0.0310839808703411
	0.1 0.035085862336404
	0.11 0.0395936966008945
	0.12 0.0442303800361767
	0.13 0.0484308386407745
	0.14 0.0535946567798082
	0.15 0.0588621658951228
	0.16 0.0639172294946315
	0.17 0.0689110885742411
	0.18 0.07455469467116
	0.19 0.0797265693592711
	0.2 0.0855380779486281
	0.21 0.091369450187522
	0.22 0.0975477864048834
	0.23 0.103301655851544
	0.24 0.109430078642889
	0.25 0.115201406472602
	0.26 0.122943779498212
	0.27 0.128285396522469
	0.28 0.134114959269482
	0.29 0.134408211555273
	0.3 0.147992078916608
};
\addplot [semithick, color1, dashed]
table {%
	0 0
	0.01 0.00571428571428572
	0.02 0.00958333333333333
	0.03 0.014030612244898
	0.04 0.0178051879401159
	0.05 0.0222606382978723
	0.06 0.0271739130434783
	0.07 0.0314814814814815
	0.08 0.0364066193853428
	0.09 0.0420289855072464
	0.1 0.0465116279069767
	0.11 0.053030303030303
	0.12 0.0579860562712935
	0.13 0.063953488372093
	0.14 0.0697712171549381
	0.15 0.0776053215077605
	0.16 0.0827711345457071
	0.17 0.0900116144018583
	0.18 0.0963712076145152
	0.19 0.103048780487805
	0.2 0.110735876589535
	0.21 0.119047619047619
	0.22 0.125
	0.23 0.132928475033738
	0.24 0.142758142758143
	0.25 0.147368223174675
	0.26 0.159356725146199
	0.27 0.165463772042719
	0.28 0.175154320987654
	0.29 0.175611617484683
	0.3 0.191176470588235
};
\addplot [semithick, color2]
table {%
	0 0
	0.01 0.0015704
	0.02 0.003146
	0.03 0.0045988
	0.04 0.0062072
	0.05 0.0078868
	0.06 0.0096824
	0.07 0.0112316
	0.08 0.0131232
	0.09 0.0151108
	0.1 0.0169804
	0.11 0.0186264
	0.12 0.0208736
	0.13 0.023522
	0.14 0.02523
	0.15 0.0274496
	0.16 0.03051
	0.17 0.0328632
	0.18 0.0360964
	0.19 0.0390664
	0.2 0.0421296
	0.21 0.0451892
	0.22 0.0491232
	0.23 0.053564
	0.24 0.0573532
	0.25 0.0608704
	0.26 0.0656452
	0.27 0.0697976
	0.28 0.0752264
	0.29 0.0804732
	0.3 0.0859256
};
\addplot [semithick, color2, dashed]
table {%
	0 0
	0.01 0.0028
	0.02 0.0048
	0.03 0.0068
	0.04 0.0088
	0.05 0.0108
	0.06 0.0132
	0.07 0.0148
	0.08 0.0176
	0.09 0.0196
	0.1 0.022
	0.11 0.024
	0.12 0.0272
	0.13 0.0304
	0.14 0.032
	0.15 0.0352
	0.16 0.0384
	0.17 0.042
	0.18 0.046
	0.19 0.0496
	0.2 0.0536
	0.21 0.0572
	0.22 0.0628
	0.23 0.0664
	0.24 0.0724
	0.25 0.07484
	0.26 0.0808
	0.27 0.086
	0.28 0.0928
	0.29 0.09924
	0.3 0.10444
};
\end{axis}

\end{tikzpicture}}
		\caption{Fraction of man-woman pairs with non-empty preferences that block 
			$M_1$ in $\mathcal{P}_2$ for a varying fraction 
			of change between 
			$\mathcal{P}_1$ and~$\mathcal{P}_2$.}
		\label{fig:bp1}  
	\end{minipage}\hfill
	\begin{minipage}{.48\textwidth}
		\centering
		\resizebox{0.8\textwidth}{!}{\input{reorderreverseoverview_correlationRandom.tex}}
		\caption{Relationship between (x) the 
			fraction of man-woman pairs that block $M_1$ in 
			$\mathcal{P}_2$ and (y) the 
			minimum symmetric difference between $M_1$ and a stable matching $M_2$ in 
			$\mathcal{P}_2$ for \Ireors. Each point 
			corresponds 
			to one of $2000$ instances and its color to the applied fraction of 
			changes.}
		\label{fig:bp2}
	\end{minipage}
\end{figure*}   

\paragraph{Experimental Setup, Results, and Evaluation II.}
For each of the three types of changes, we sampled $2000$ \textsc{Stable 
	Marriage} instances with $50$ men and $50$ women with random preferences collected in preference profile 
$\mathcal{P}_1$. 
For 
each such instance, we uniformly at random sampled a 
value $r\in [0,0.2]$ and applied an $r$-fraction of changes to $\mathcal{P}_1$ 
to obtain $\mathcal{P}_2$. Subsequently, we computed the men-optimal 
matching $M_1$ in $\mathcal{P}_1$, the set of pairs $\bp(M_1,\mathcal{P}_2)$ 
that block $M_1$ in $\mathcal{P}_2$, and the stable matching $M_2$ in 
$\mathcal{P}_2$ with minimum symmetric difference to $M_1$. 

After that, for each of the three types of changes, we computed the Pearson 
correlation coefficient between the 
fraction of pairs that block $M_1$ in $\mathcal{P}_2$ and the minimum normalized symmetric difference between $M_1$ and 
a stable matching $M_2$ in $\mathcal{P}_2$ for all $2000$ instances created for this type of change. The Pearson 
correlation coefficient is a measure for the linear correlation between two 
quantities $x$ and $y$, where $1$ means that $x$ and $y$ are perfectly 
linearly correlated, i.e., the relationship between $x$ and $y$ is $y=mx+b$ with $m> 0$. In contrast, $-1$ means that the 
relationship can be described as $y=mx+b$ with $m<0$, while $0$ means that 
there is no linear correlation between $x$ and $y$. Typically, a correlation 
between $0.4$ and $0.69$ is considered as a moderate 
correlation, while a correlation between $0.7$ and $0.89$ is considered as a 
strong correlation and a correlation between $0.9$ and $1$ as a very strong 
correlation \citep{schober2018correlation}.
For \Ireors, the correlation coefficient is $0.8$, for \Idelete the 
correlation coefficient 
is 
$0.55$, and for \Iswap the correlation coefficient is $0.81$.
Thus, for all three types of changes there is a noticeable correlation, which 
is 
particularly strong for \Iswap and \Ireors and a bit weaker for \Idelete.
To get a feeling for the correlation, in \Cref{fig:bp2}, for 
\Ireors, we represent each instance by a point whose 
$x$-coordinate is the fraction of pairs that block $M_1$ in $\mathcal{P}_2$, 
whose 
$y$-coordinate is the minimum normalized symmetric difference between $M_1$ and a stable matching $M_2$ in $\mathcal{P}_2$, 
and 
whose color reflects the applied $r$-fraction of changes. 

Examining \Cref{fig:bp2}, it seems that the points in one color (that is, 
instances where a similar number of changes have been applied) exhibit 
a weaker correlation than the collection of all points. In fact, for \Ireors, 
while for instances with a change between $[0,0.05]$ the correlation coefficient 
is $0.62$, it is only $0.44$ for instances from the interval $(0.05,0.1]$, $0.39$ for 
$(0.1,0.15]$, and $0.42$ for $(0.15,0.2]$. It is quite remarkable that for all 
four groups the correlation coefficient is (significantly) below the overall 
correlation coefficient and, so far, we have no explanation for this. For the 
two 
other types of changes a similar but less strong effect is present.

\subsection{Almost Stable Marriage} \label{sub:almost}

As featured in \Cref{se:ASM}, we now analyze the 
trade-off between 
the number of pairs that are allowed to block $M_2$ and the minimum 
symmetric difference between~$M_1$ and~$M_2$.

\paragraph{Experimental Setup I.} For 
our three different types of changes, for $r\in \{0,0.01,0.02,\cdots , 0.3\}$, 
and for $\beta\in \{ 
0,0.005,0.05\}$, as in \Cref{sub:basic}
we prepared $200$ instances consisting of $50$ men and $50$ women with random preferences collected in $\mathcal{P}_1$ and  
$\mathcal{P}_1$ and $\mathcal{P}_2$ differ in an uniformly at random sampled $r$-fraction of all possible 
changes. Then, we 
computed 
the minimum symmetric difference between $M_1$ and a matching $M_2$ in 
$\mathcal{P}_2$ for which at most a $\beta$-fraction of all $50 \cdot 50$ man-woman 
pairs is blocking.
\Cref{fig:almost2M} shows the results of this experiment.\footnote{The 
$y$-axis in \Cref{fig:almost2M} is labeled 
	differently than in the previous figures: Here, we divide $|M_1\triangle M_2|$ by the size of a 
	stable matching in $\mathcal{P}_1$ plus the size of a stable matching in 
	$\mathcal{P}_2$. As all stable matchings have the same size, this is the 
	same 
	as $\frac{|M_1 \Delta M_2|}{|M_1|+|M_2|}$ if $M_2$ is a stable 
	matching; 
	however, different almost stable matchings may have different sizes.
	}

\paragraph{Evaluation I.} 
We observe that independent 
of the type and 
fraction of change, allowing for few blocking pairs for $M_2$ enables 
a significantly larger overlap of $M_2$ with $M_1$. That is, allowing for a 
$0.005$ fraction of pairs to be blocking decreases the average 
normalized symmetric difference  
by around $0.2$.
We also examined the effect of doubling 
the fraction of blocking pairs and allowing for a $0.01$ fraction, which  
gives an additional decrease by $0.1$. If we allow for a $0.05$ fraction of pairs to be blocking, 
then, for \Iswap and \Ireors, until a $0.2$ fraction of changes, 
$M_2$ can be chosen to be almost identical to~$M_1$.

\begin{figure*}[t!]
	\centering
	\begin{minipage}{.48\textwidth}
		\centering
		
		\resizebox{0.8\textwidth}{!}{
\begin{tikzpicture}[every plot/.append style={line width=1.85pt}]

\definecolor{color0}{rgb}{1,0.549019607843137,0}
\definecolor{color1}{rgb}{0.133333333333333,0.545098039215686,0.133333333333333}
\definecolor{color2}{rgb}{0.117647058823529,0.564705882352941,1}

\begin{axis}[
legend columns=2, 
legend cell align={left},
legend style={
	fill opacity=0.8,
	draw opacity=1,
	draw=none,
	text opacity=1,
	at={(0.5,1.3)},
	line width=3pt,
	anchor=north,
	/tikz/column 2/.style={
		column sep=10pt,
	}
},
legend entries={Reorder,
	No blocking pairs,
	Delete, Almost $0.005$,
	Swap, Almost $0.05$},
tick align=outside,
tick pos=left,
x grid style={white!69.0196078431373!black},
xlabel={fraction of changes},
xmin=-0.015, xmax=0.315,
xtick style={color=black},
xtick={-0.05,0,0.05,0.1,0.15,0.2,0.25,0.3,0.35},
xticklabels={−0.05,0.00,0.05,0.10,0.15,0.20,0.25,0.30,0.35},
y grid style={white!69.0196078431373!black},
ylabel={normalized $|M_1\triangle M_2|$},
ymin=-0.044285, ymax=0.929984999999999,
ytick style={color=black},
ytick={-0.2,0,0.2,0.4,0.6,0.8,1},
yticklabels={−0.2,0.0,0.2,0.4,0.6,0.8,1.0}
]
\addlegendimage{red!54.5098039215686!black}
\addlegendimage{gray}
\addlegendimage{color1}
\addlegendimage{gray,dash pattern=on 1pt off 3pt on 1pt off 3pt}
\addlegendimage{color2}
\addlegendimage{gray,dash pattern=on 4.5pt off 2pt}
\addplot [semithick, red!54.5098039215686!black]
table {%
0 0
0.01 0.1272
0.02 0.1899
0.03 0.2507
0.04 0.2696
0.05 0.3086
0.06 0.3542
0.07 0.3715
0.08 0.3978
0.09 0.4183
0.1 0.4491
0.11 0.4642
0.12 0.4838
0.13 0.4983
0.14 0.5126
0.15 0.5393
0.16 0.5453
0.17 0.5567
0.18 0.5648
0.19 0.5828
0.2 0.591
0.21 0.6181
0.22 0.6266
0.23 0.6448
0.24 0.6405
0.25 0.6528
0.26 0.6583
0.27 0.6642
0.28 0.6787
0.29 0.6802
0.3 0.697
};
\addplot [semithick, red!54.5098039215686!black, dashed]
table {%
0 0
0.01 0.0006
0.02 0.00735
0.03 0.01815
0.04 0.0253
0.05 0.0391
0.06 0.0577499999995747
0.07 0.0736
0.08 0.0892999998847284
0.09 0.10699999998926
0.1 0.127699999814755
0.11 0.14299999982432
0.12 0.163999999999845
0.13 0.174799999990978
0.14 0.193099999946322
0.15 0.212099999925417
0.16 0.221299999964794
0.17 0.237999999678634
0.18 0.251899999998384
0.19 0.267499999964397
0.2 0.278299999964897
0.21 0.298299999915877
0.22 0.313799999881667
0.23 0.328499999865236
0.24 0.333599999919124
0.25 0.345699999963641
0.26 0.35699999988615
0.27 0.363199999900003
0.28 0.382399999938836
0.29 0.383499999844866
0.3 0.408999999937958
};
\addplot [semithick, red!54.5098039215686!black, dotted]
table {%
0 0
0.01 0
0.02 0
0.03 0
0.04 0
0.05 0
0.06 0
0.07 0
0.08 0
0.09 0
0.1 0
0.11 0
0.12 0
0.13 0
0.14 0
0.15 0
0.16 0
0.17 0.00015
0.18 0
0.19 0.0005
0.2 0.00075
0.21 0.0018
0.22 0.0024
0.23 0.00355
0.24 0.0044
0.25 0.00815
0.26 0.00789999999229818
0.27 0.00975
0.28 0.01595
0.29 0.0161
0.3 0.03005
};
\addplot [semithick, color1]
table {%
0 0
0.01 0.369797979797979
0.02 0.292585034013605
0.03 0.491294971596886
0.04 0.392453998351919
0.05 0.49413616838488
0.06 0.468344381703051
0.07 0.56091034585004
0.08 0.521329831635541
0.09 0.574170732396558
0.1 0.53288040149416
0.11 0.589430691174283
0.12 0.566664063967665
0.13 0.602917867709751
0.14 0.602146708173435
0.15 0.625043772156022
0.16 0.632709768618016
0.17 0.65001367724789
0.18 0.63612610886331
0.19 0.680482228522856
0.2 0.629912674242164
0.21 0.674790248026295
0.22 0.692595085698456
0.23 0.69572632366286
0.24 0.694883803406166
0.25 0.718399825618295
0.26 0.698279674175732
0.27 0.728762712103943
0.28 0.718777926624588
0.29 0.722214972439473
0.3 0.733043532632525
};
\addplot [semithick, color1, dashed]
table {%
0 0
0.01 0.014040404034699
0.02 0.0506132756132757
0.03 0.0841021460130445
0.04 0.106115348201136
0.05 0.124928507641526
0.06 0.151178628378222
0.07 0.173223546490626
0.08 0.185455220316121
0.09 0.200654565405832
0.1 0.218070656740269
0.11 0.237606214371903
0.12 0.254267503660975
0.13 0.268780965355425
0.14 0.285670728231727
0.15 0.292475156034578
0.16 0.320312154292836
0.17 0.320458106431291
0.18 0.344663371282423
0.19 0.364510805082971
0.2 0.345006992464855
0.21 0.374512440000266
0.22 0.403502254150087
0.23 0.403561424479383
0.24 0.419644733064256
0.25 0.438069865918276
0.26 0.427645992673142
0.27 0.460576765277509
0.28 0.457897829150605
0.29 0.460446692557345
0.3 0.490132781495806
};
\addplot [semithick, color1, dotted]
table {%
0 0
0.01 0.0101010101010101
0.02 0.0202020202020202
0.03 0.0302861350725857
0.04 0.0406607668840732
0.05 0.0506339867064568
0.06 0.0608122521530577
0.07 0.0714440025617408
0.08 0.0811964472781129
0.09 0.0914881425912553
0.1 0.102072466577031
0.11 0.112567246976545
0.12 0.123060925429518
0.13 0.133627378089742
0.14 0.144980427902848
0.15 0.155812563675172
0.16 0.168245670740338
0.17 0.178976354105664
0.18 0.192311939551292
0.19 0.202982502526616
0.2 0.215260025135567
0.21 0.227043219007318
0.22 0.240244686568604
0.23 0.252446424766608
0.24 0.266544536160211
0.25 0.27810016980737
0.26 0.287484329530393
0.27 0.306233572827123
0.28 0.310803010766355
0.29 0.31826246942874
0.3 0.343507006939881
};
\addplot [semithick, color2]
table {%
0 0
0.01 0.2822
0.02 0.3753
0.03 0.450100000000001
0.04 0.4954
0.05 0.527
0.06 0.565
0.07 0.6081
0.08 0.6242
0.09 0.6538
0.1 0.6733
0.11 0.6881
0.12 0.7074
0.13 0.726900000000001
0.14 0.732
0.15 0.745600000000001
0.16 0.756
0.17 0.7703
0.18 0.7884
0.19 0.7901
0.2 0.8085
0.21 0.8092
0.22 0.825399999999999
0.23 0.836299999999999
0.24 0.831399999999999
0.25 0.857499999999999
0.26 0.851199999999999
0.27 0.858199999999999
0.28 0.869399999999999
0.29 0.876599999999999
0.3 0.882599999999998
};
\addplot [semithick, color2, dashed]
table {%
0 0
0.01 0
0.02 0.00299999997459555
0.03 0.0283
0.04 0.0707999999581398
0.05 0.114199999928946
0.06 0.158599999877024
0.07 0.216399999850288
0.08 0.243099999871208
0.09 0.2786999997188
0.1 0.313199999838129
0.11 0.341799999821023
0.12 0.359499999801047
0.13 0.390799999702478
0.14 0.40609999990036
0.15 0.429899999806975
0.16 0.448099999847799
0.17 0.469499999839696
0.18 0.492399999780039
0.19 0.499399999895829
0.2 0.525649999842168
0.21 0.538399999823352
0.22 0.558999999808779
0.23 0.573799999961402
0.24 0.577699999929754
0.25 0.601999999975841
0.26 0.60589999982943
0.27 0.618599999845609
0.28 0.63419999981429
0.29 0.64859999995288
0.3 0.656999999909276
};
\addplot [semithick, color2, dotted]
table {%
0 0
0.01 0
0.02 0
0.03 0
0.04 0
0.05 0
0.06 0
0.07 0
0.08 0
0.09 0
0.1 0
0.11 0
0.12 0
0.13 0
0.14 0
0.15 0
0.16 0
0.17 0.00105
0.18 0.00135
0.19 0.00355
0.2 0.00625
0.21 0.01685
0.22 0.02725
0.23 0.0381999999982418
0.24 0.05415
0.25 0.0742499998540867
0.26 0.0848499998863978
0.27 0.101549999906925
0.28 0.12039999988416
0.29 0.146649999776952
0.3 0.162499999885277
};
\end{axis}

\end{tikzpicture}}
		\caption{Average normalized symmetric difference between $M_1$ and a matching~$M_2$ in 
			$\mathcal{P}_2$ with at most a given number of blocking 
			pairs for a varying fraction 
			of change between 
			$\mathcal{P}_1$ and~$\mathcal{P}_2$.}
		\label{fig:almost2M} 
	\end{minipage}\hfill
	\begin{minipage}{.48\textwidth}
		\centering
		\resizebox{0.8\textwidth}{!}{
\begin{tikzpicture}[every plot/.append style={line width=1.85pt}]

\definecolor{color0}{rgb}{0.133333333333333,0.545098039215686,0.133333333333333}
\definecolor{color1}{rgb}{0.117647058823529,0.564705882352941,1}

\begin{axis}[
legend columns=2, 
legend cell align={left},
legend style={
	fill opacity=0.8,
	draw opacity=1,
	draw=none,
	text opacity=1,
	at={(0.5,1.2)},
	line width=3pt,
	anchor=north,
	/tikz/column 2/.style={
		column sep=10pt,
	}
},
legend entries={
	Reorder, Delete, Swap},
tick align=outside,
tick pos=left,
x grid style={white!69.0196078431373!black},
xlabel={fraction of allowed blocking pairs},
xmin=-0.05, xmax=1.05,
xtick style={color=black},
xtick={-0.2,0,0.2,0.4,0.6,0.8,1,1.2},
xticklabels={−0.2,0.0,0.2,0.4,0.6,0.8,1.0,1.2},
y grid style={white!69.0196078431373!black},
ylabel={normalized $|M_1\triangle M^*_2|$},
ymin=-0.03315, ymax=0.69615,
ytick style={color=black},
ytick={-0.1,0,0.1,0.2,0.3,0.4,0.5,0.6,0.7},
yticklabels={−0.1,0.0,0.1,0.2,0.3,0.4,0.5,0.6,0.7}
]
\addlegendimage{red!54.5098039215686!black}
\addlegendimage{color0}
\addlegendimage{color1}
\addplot [semithick, red!54.5098039215686!black]
table {%
0 0.4378
0.02 0.4228
0.04 0.358699999975019
0.06 0.322599999975019
0.08 0.2941
0.1 0.2665
0.12 0.249799999945664
0.14 0.2330999997765
0.16 0.217699999933064
0.18 0.2043
0.2 0.1899
0.22 0.181899999992871
0.24 0.17129999994434
0.26 0.16239999994434
0.28 0.153499999964535
0.3 0.145849999989145
0.32 0.139299999998697
0.34 0.132799999878615
0.36 0.126999999910623
0.38 0.121499999980156
0.4 0.114699999999645
0.42 0.110049999959294
0.44 0.105249999998662
0.46 0.101499999998377
0.48 0.0967
0.5 0.0912999999248979
0.52 0.0890999999333649
0.54 0.0846999999825661
0.56 0.0806999999834153
0.58 0.0777999999344968
0.6 0.0724999999565917
0.62 0.0702999999839338
0.64 0.0653
0.66 0.0617999999209428
0.68 0.0587999999209428
0.7 0.0554999999984082
0.72 0.0527
0.74 0.0504499998693524
0.76 0.0480999998715212
0.78 0.0455999999999999
0.8 0.0427999999999999
0.82 0.0415999999999999
0.84 0.0405499999999999
0.86 0.0398999999999999
0.88 0.039299999987723
0.9 0.038949999987723
0.92 0.0383999999999999
0.94 0.0377999999999999
0.96 0.0371999999999999
0.98 0.0364999999999999
1 0
};
\addplot [semithick, color0]
table {%
0 0.52228306158827
0.02 0.45862641205659
0.04 0.383425226921077
0.06 0.337433427615655
0.08 0.301448998989809
0.1 0.273876996012732
0.12 0.251838520490542
0.14 0.233826941330143
0.16 0.217676432247699
0.18 0.205054887904381
0.2 0.193919392359062
0.22 0.186147083105631
0.24 0.17860317112743
0.26 0.172403580195946
0.28 0.165599573569615
0.3 0.160083166272399
0.32 0.155533045737439
0.34 0.151994571013696
0.36 0.148343543073785
0.38 0.144754747899703
0.4 0.141639577532377
0.42 0.139450561042898
0.44 0.137524602065362
0.46 0.135692839559595
0.48 0.133330457634205
0.5 0.131241492572754
0.52 0.130173561225781
0.54 0.128839660390915
0.56 0.127073239851373
0.58 0.125678014656433
0.6 0.124393276782983
0.62 0.123534131410689
0.64 0.122727579519956
0.66 0.121440490386983
0.68 0.120478370139672
0.7 0.119734760552456
0.72 0.118239009091951
0.74 0.116750694571503
0.76 0.114621658352479
0.78 0.113556015596788
0.8 0.112809570334248
0.82 0.112432006923054
0.84 0.112214615618706
0.86 0.112161424129344
0.88 0.112161424129344
0.9 0.112161424129344
0.92 0.112161424129344
0.94 0.112161424129344
0.96 0.112161424129344
0.98 0.112161424129344
1 0.101390635242095
};
\addplot [semithick, color1]
table {%
0 0.6854
0.02 0.671799999999999
0.04 0.611399999981916
0.06 0.572599999936267
0.08 0.538899999881452
0.1 0.508599999904965
0.12 0.486099999897663
0.14 0.4623
0.16 0.440999999828848
0.18 0.419299999863216
0.2 0.39789999994368
0.22 0.382299999732401
0.24 0.366099999810218
0.26 0.34899999979509
0.28 0.335199999553765
0.3 0.318599999808235
0.32 0.306299999890321
0.34 0.293499999857719
0.36 0.278899999888882
0.38 0.267299999915482
0.4 0.254399999836154
0.42 0.243899999845722
0.44 0.231999999965551
0.46 0.221499999893189
0.48 0.211099999843963
0.5 0.198199999774343
0.52 0.190299999799084
0.54 0.181599999937926
0.56 0.171799999879501
0.58 0.163699999974871
0.6 0.15249999984276
0.62 0.145399999763811
0.64 0.136999999907865
0.66 0.127699999742646
0.68 0.121099999913344
0.7 0.112299999933698
0.72 0.104599999767121
0.74 0.0982999999187287
0.76 0.0905999999499577
0.78 0.083799999918922
0.8 0.0757999998070874
0.82 0.0697999999273556
0.84 0.0647999999995348
0.86 0.0577999999839875
0.88 0.0507999999528618
0.9 0.0455999999974139
0.92 0.0420999998075006
0.94 0.0407499998201438
0.96 0.0399499999999999
0.98 0.0398499999999999
1 0
};
\end{axis}

\end{tikzpicture}}
		\caption{Relationship between (x) the fraction of the number of pairs 
			that 
			block 
			$M_1$  in $\mathcal{P}_2$ that are allowed to block $M_2$ and (y) 
			the 
			symmetric difference between $M_1$ and $M_2$.}
		\label{fig:almost1}
	\end{minipage}
\end{figure*}

\paragraph{Experimental Setup II.}
While in the previous experiment we have focused on 
allowing $M_2$ to be blocked by a ``fixed'' 
number of pairs (independent of $|\bp(M_1, \mathcal{P}_2)|$), we now want to explore the spectrum between allowing that $M_2$ 
is not blocked by any pairs 
and that $M_2$ is blocked by all pairs that block $M_1$ in $\mathcal{P}_2$, 
which allows to set $M_2:=M_1$. For 
this, we conducted the following experiment: For each of 
our three types of changes, we computed $200$ instances with $50$ men and $50$ women having random preferences 
with a $0.1$ fraction of all changes applied uniformly at random
between $\mathcal{P}_1$ and $\mathcal{P}_2$. Subsequently, for each~$i\in 
\{0,0.02,0.04,\dots, 
0.98,1\}$, we computed a matching~$M_2$ with minimum symmetric 
difference with 
$M_1$ that is blocked 
by at most $i\cdot |\bp(M_1,\mathcal{P}_2)|$ pairs in $\mathcal{P}_2$. Note that the results can be found in 
\Cref{fig:almost1}.
We also repeated this experiment with different fractions of changes between 
$\mathcal{P}_1$ and $\mathcal{P}_2$ producing very similar results.

\paragraph{Evaluation II.}
Examining \Cref{fig:almost1} confirms our observation from 
\Cref{fig:almost2M} that the first $x$ 
blocking pairs that $M_2$ is allowed to admit have a higher impact than the 
second $x$ blocking pairs, as the 
symmetric difference between $M_1$ and $M_2$ decreases particularly quickly for smaller 
fractions (up to around $0.3$). The reason why for \Idelete allowing for 
$|\bp(M_1,\mathcal{P}_2)|$ many blocking pairs does not lead to a symmetric 
difference of zero is because the deleted agents are not part of any pair in 
$M_2$ and thus their edges from~$M_1$ are always part of the symmetric 
difference. 

\section{Conclusion}
This paper extends the study of adapting stable matchings in two-sided matching markets to change in various 
directions:  
We systematically analyzed how different types of changes relate to each other 
in theory and 
in 
practice, initiated the study of incremental versions of two further two-sided 
stable matching problems and investigated 
experimentally practical aspects of adapting stable matchings.  

 For future work, on the experimental side, it would be interesting to perform
 experiments with methods that are used in practice to deal with settings discussed in the paper
 on real-world data. Moreover, extending our experiments to \textsc{Stable  Roommates} is a natural next step.  
 Finally, it might also be interesting to investigate how the complexity of the 
 considered problems changes if the first matching is not given to us, i.e., we 
 are given two preference profiles and the task is to find a stable matching for 
 each profile which are close to each other.
 Notably, \citet[Theorem 4.6]{DBLP:conf/sigecom/ChenNS18} already proved that finding a matching that is stable in two different given preference profiles is NP-hard. 
 
 \section*{Acknowledgments}
 NB was supported by the DFG project MaMu (NI
 369/19) and by the DFG project ComSoc-MPMS (NI 369/22). KH was supported by the DFG Research Training Group 2434 ``Facets
 of Complexity'' and by the DFG project FPTinP (NI 369/16).

\bibliographystyle{plainnat}

\end{document}